\documentclass[a4paper,orivec,envcountsect,envcountsame,runningheads,final]{llncs}
\pdfoutput=1

\usepackage[utf8]{inputenc}
\usepackage[british]{babel}
\usepackage{xspace}
\usepackage{tabularx}
\usepackage{rotating}

\usepackage[inline]{enumitem}
\setlist[enumerate,1]{label=(\arabic*),font=\normalfont,align=left,leftmargin=0pt,labelindent=0pt,listparindent=\parindent,labelwidth=0pt,itemindent=!,topsep=3pt,parsep=0pt,itemsep=3pt,start=1}
\setlist[enumerate,2]{label=(\alph*),font=\normalfont,labelindent=*,leftmargin=*,start=1}
\setlist[itemize]{labelindent=*,leftmargin=*,topsep=5pt,itemsep=3pt}
\setlist[description]{labelindent=*,leftmargin=*,itemindent=-1 em}

\usepackage[%
rm={oldstyle=false,proportional=true},%
sf={oldstyle=false,proportional=true},%
tt={oldstyle=false,proportional=true,variable=true},%
qt=false%
]{cfr-lm}

\usepackage[noadjust]{cite} 
\usepackage{graphicx,cite,csquotes}


\usepackage{mathrsfs,mathtools,bm}
\numberwithin{equation}{section}
\usepackage[all,2cell]{xy}
\UseAllTwocells
\SelectTips{cm}{}
\newdir{ >}{{}*!/-5pt/@{>}}
\newdir{ (}{{}*!/-10pt/\dir^{(}}
\newdir{ )}{{}*!/-10pt/\dir_{(}}

\let\doendproof\endproof
\renewcommand\endproof{~\hfill\qed\doendproof}


\usepackage{microtype}

\spnewtheorem{defn}[theorem]{Definition}{\bfseries}{\rmfamily}
\spnewtheorem{expl}[theorem]{Example}{\bfseries}{\rmfamily}
\spnewtheorem{rem}[theorem]{Remark}{\bfseries}{\rmfamily}

%
%
%
%
%
\def\D{\DCat}
\def\C{\Cat}

\def\epsilon{\varepsilon}
\def\eps{\epsilon}

\renewcommand{\rho}{\varrho}
\def\ol{\overline}

\def\inj{\mathsf{in}}
\def\o{\cdot}

\def\S{\mathscr{S}}


\newcommand{\takeout}[1]{\empty}

\newcommand{\inm}{\mathsf{inm}}

\renewcommand{\phi}{\varphi}

\newcommand{\el}{\mathsf{el}\,}
\newcommand{\xra}{\xrightarrow}

\renewcommand{\ddag}{\ddagger}

\newcommand{\Set}{\mathbf{Set}}

\newcommand{\Cat}{\mathscr{C}}

\newcommand{\DCat}{\mathscr{D}}

\newcommand{\id}{\mathit{id}}
\newcommand{\Id}{\mathsf{Id}}

\newcommand{\colim}{\mathop{\mathsf{colim}}}

\newcommand{\epito}{\twoheadrightarrow}

\newcommand{\monoto}{\rightarrowtail}

\newcommand{\J}{\mathfrak{J}}

\newcommand{\Nat}{\mathbb{N}}

\newcommand{\To}{\Rightarrow}

\mathchardef\ordinarycolon\mathcode`\:
\mathcode`\:=\string"8000
\begingroup \catcode`\:=\active
  \gdef:{\mathrel{\mathop\ordinarycolon}}
\endgroup

\mathchardef\hyph="2D
\newcommand{\dash}{\mathord{-}}
  

\usepackage[utf8]{inputenc}
\usepackage[T1]{fontenc}
\usepackage{ifthen}
\newboolean{full}
\setboolean{full}{true}

%
%

\makeatletter
\newenvironment{notheorembrackets}{%
\csdef{@spopargbegintheorem}##1##2##3##4##5{\trivlist%
      \item[\hskip\labelsep{##4##1\ ##2}]{##4{##3}\@thmcounterend\ }##5}%
    }{%
\csdef{@spopargbegintheorem}##1##2##3##4##5{\trivlist%
      \item[\hskip\labelsep{##4##1\ ##2}]{##4(##3)\@thmcounterend\ }##5}%
    }
\makeatother

%
%
\usepackage{macros}

%
%
\usepackage{ifdraft}

\ifdraft{
\usepackage[marginparwidth=2cm,nohead,nofoot,
            headsep = 1cm,
            footskip = 1cm,
            text={12.2cm,19.3cm},
            paperwidth=16.2cm,
            paperheight=23.3cm,
            marginparsep=1mm,
            marginparwidth=1.8cm,
            footskip=1cm,
            centering
            ]{geometry}
}{}

%
%
\usepackage[nomargin,marginclue,footnote]{fixme}
\fxusetargetlayout{color}
\FXRegisterAuthor{sm}{asm}{SM}
\FXRegisterAuthor{ja}{aja}{JA}
\FXRegisterAuthor{hu}{ahu}{HU}

%
%
%
\usepackage{seqsplit}
\usepackage{xstring}
\usepackage{xcolor}

\makeatletter
\ifdraft{%
\usepackage[notcite,notref]{showkeys}%

\renewcommand*\showkeyslabelformat[1]{%
\@ifundefined{hideNextShowKeysLabel}{%
\noexpandarg%
\StrSubstitute{#1}{ }{\textvisiblespace}[\TEMP]%
\parbox[t]{\marginparwidth}{\raggedright\normalfont\small\ttfamily\(\{\){\color{red!50!black}\expandafter\seqsplit\expandafter{\TEMP}}\(\}\)}%
}{}
}

\hypersetup{hidelinks}
}{}
\makeatother

%
%
\titlerunning{On Algebras with Effectful Iteration}
\authorrunning{S.~Milius, J.~Ad\'amek, and H.~Urbat}

\title{On the Behaviour of Coalgebras with Side
  Effects and Algebras with Effectful Iteration}
\author{%
  Stefan Milius\inst{2}\fnmsep\thanks{Supported by Deutsche
    Forschungsgemeinschaft (DFG) under projects MI~717/5-2 and
    MI~717/7-1 and as part of the Research and
    Training Group 2475 ``Cybercrime and Forensic Computing'' (393541319/GRK2475/1-2019)}
  \and
  Ji\v{r}\'i Ad\'amek\inst{1}\fnmsep\thanks{Supported by the Grant
    agency of the Czech Republic under the grant 19-00902S}
  \and 
  Henning Urbat\inst{2}\fnmsep\thanks{Supported by Deutsche Forschungsgemeinschaft (DFG) under project SCHR~1118/8-2}%
}
\institute{Department of Mathematics, Faculty of Electrical
  Engineering, Czech Technical University in Prague 
  \and Lehrstuhl f\"ur Theoretische Informatik, Friedrich-Alexander-Universität Erlangen-Nürnberg}

\begin{document}
\maketitle
\begin{abstract}
For every finitary monad $T$ on sets and every endofunctor $F$ on the category of $T$-algebras we introduce the concept of an ffg-Elgot algebra for $F$, that is, an algebra admitting coherent solutions for finite systems of recursive equations with effects represented by the monad $T$. The goal is to study the existence and construction of free ffg-Elgot algebras. To this end, we investigate the locally ffg fixed point $\phi F$, i.e.~the colimit of all $F$-coalgebras with free finitely generated carrier, which is shown to be the initial ffg-Elgot algebra. This is the technical foundation for our main result: the category of ffg-Elgot algebras is monadic over the category of $T$-algebras.
\end{abstract}

\section{Introduction}

Terminal coalgebras yield a fully abstract domain of behavior for a
given kind of state-based systems whose transition type is described
by an endofunctor $F$. Often one is mainly interested in the study of the
semantics of \emph{finite} coalgebras. For instance, regular languages are the
behaviors of finite deterministic automata, while the terminal
coalgebra of the corresponding functor is formed by \emph{all} formal
languages. For endofunctors on sets, the \emph{rational fixed point}
introduced by Ad\'amek, Milius and Velebil~\cite{amv_atwork} yields a
fully abstract domain of behavior for finite coalgebras. However, in
recent years there has been a lot of interest in studying coalgebras
over more general categories than sets. In particular, categories of
algebras for a (finitary) monad $T$ on sets are a paradigmatic
setting; they are used, for instance, in the generalized
determinization framework of Silva et al.~\cite{sbbr13} and yield
\emph{coalgebraic language equivalence}~\cite{bms13} as a semantic
equivalence of coalgebraic systems with side effects modelled by the monad
$T$. In the category $\C$ of $T$-algebras, several notions of 'finite'
object are natural to consider, and each yields an ensuing
notion of 'finite' coalgebra: (1)~free objects on finitely many generators
(\emph{ffg} objects) yield precisely the coalgebras that are the target of
generalized determinization; (2)~finitely presentable (\emph{fp}) objects are the
ones that can be presented by finitely many generators and relations
and yield the rational fixed point; and (3)~finitely
generated (\emph{fg}) objects, which are the ones presented by finitely many
generators (but possibly infinitely many relations). Taking the
colimits of all coalgebras with ffg, fp, and fg carriers,
respectively, yields three coalgebras $\phi F$, $\rho F$ and
$\theta F$ which, under suitable assumptions on $F$, are all fixed
points of~$F$~\cite{amv_atwork,mpw16,Urbat17}. Our present paper
is devoted to studying the fixed point $\phi F$, which we call the \emph{\name
fixed point} of $F$. For a finitary endofunctor $F$ preserving
surjective and non-empty injective morphisms in $\C$, the three fixed points
are related to each other and the terminal coalgebra $\nu F$ as follows:
\begin{equation}\label{eq:pic}
  \phi F \epito \rho F \epito \theta F \monoto \nu F,
\end{equation}
where $\epito$ denotes a quotient coalgebra and $\monoto$ a subcoalgebra. The
three right-hand fixed points are characterized by a universal property both
as a coalgebra and (when inverting their coalgebra structure) as an
algebra~\cite{amv_atwork,m_cia,mpw16}; see~\cite{Urbat17} for one
uniform proof. We recall this in more detail in
Section~\ref{sec:four}.

The main contribution of the present paper is a new characterization
of the \name fixed point $\phi F$ by a universal property as an
algebra. As already observed by Urbat~\cite{Urbat17}, as a coalgebra,
$\phi F$ does not satisfy the expected finality property since
coalgebra homomorphisms from coalgebras with ffg carrier into $\phi F$
may fail to be unique. A simple initiality property of $\phi F$ as an
algebra was recently established by
Milius~\cite[Theorem~4.4]{milius18}: $\phi F$ is the initial ffg-Bloom
algebra for $F$, where an \emph{ffg-Bloom algebra} is an $F$-algebra
equipped with an operation that assigns to every $F$-coalgebra carried
by an ffg object a coalgebra-to-algebra morphism subject to a
functoriality property. Equivalently, the ffg-Bloom algebras for $F$
form the slice category
$\phi F/\alg F$~\cite[Proposition~4.5]{milius18}. Here we introduce
the notion of an \emph{ffg-Elgot algebra}
(Section~\ref{sec:ffgElgot}), which is an algebra for $F$ equipped
with an operation that allows to take solutions of \emph{effectful
  iterative equations} (see \autoref{rem:eff}) subject to two natural
axioms. These axioms are inspired by and closely related to the axioms
of (ordinary) Elgot algebras~\cite{amv_elgot}, which we recall in
Section~\ref{sec:elgot}. We then prove that $\phi F$ is the initial
ffg-Elgot algebra (\autoref{thm:ini}), which strengthens the previous
initiality result.

In addition, we study the construction of \emph{free} ffg-Elgot
algebras. In the case of ordinary Elgot algebras, it was
shown~\cite{amv_elgot} that the rational fixed point $\rho(F(-)+Y)$ is
a free Elgot algebra on $Y$. In addition, the category of Elgot
algebras is the Eilenberg-Moore category for the corresponding monad
on $\C$.  In the present paper, we prove that free ffg-Elgot
algebras exist on every object $Y$ of $\C$. But is it true that the
free ffg-Elgot algebra on $Y$ is $\phi(F(-)+Y)$? We do not know the
answer for arbitrary objects $Y$, but if $Y$ is a free $T$-algebra (on
a possibly infinite set of generators), the answer is affirmative
(\autoref{thm:free}).

Finally, we prove that the category of ffg-Elgot algebras is monadic
over $\C$, i.e.~ffg-Elgot algebras are precisely the Eilenberg-Moore
algebras for the monad that assigns to a given object $Y$ of $\C$ its
free ffg-Elgot algebra (\autoref{thm:mon}).

This paper is a revised and extended version of our conference
paper~\cite{amu18_cmcs} containing full proofs.

\paragraph{Related Work and History.} While our new notion of an
ffg-Elgot algebra is directly based on the previous notion of Elgot
algebra~\cite{amv_elgot}, studying operators taking solutions of
recursive equation systems and their properties goes back a long
way. The most well-known examples of such structures are probably the
iteration theories of Bloom and \'Esik~\cite{BE93} whose work is based
on Elgot's seminal work~\cite{Elgot75} on the semantics of iterative
specifications. Algebras for iteration were first studied by
Nelson~\cite{nelson} (see also Tiuryn~\cite{tiuryn80} for a related
concept). Our work grows out of the coalgebraic approach to the
semantics of iteration which started with Moss' work~\cite{Moss01} on
parametric corecursion. Independently, and almost at the same time, it
was also realized by Ghani et al.~\cite{glmp_cmcs, glmp} and Aczel et
al.~\cite{aav,aamv} that final coalgebras for parametrized functors
$F(-) + Y$ give rise to a monad, whose structure generalizes
substitution of infinite trees over a signature. Later it was shown by
Milius~\cite{m_cia} that one can approach this monad through algebras
with unique solutions of recursive equations. The monad arising from
the parametrized rational fixed points $\rho(F(-) + Y)$ was introduced
in~\cite{amv_atwork} based on a category-theoretic generalization of
Nelson's notion of iterative algebra. This generalizes Courcelle's
regular trees~\cite{courcelle} and their substitution. The monad of
free ffg-Elgot algebras is a new example of a monad arising from
parametrized coalgebras.

\paragraph{Outline of the Paper.} We begin in \autoref{sec:prelim} by
recalling a number of preliminaries, e.g.~on varieties and `finite'
objects in such categories. This material might be skipped by readers
who are familiar with it. We also recall background on the four fixed
points in~\eqref{eq:pic}, and, as a first highlight, we present in
\autoref{P:uses} an example of the locally ffg fixed point $\phi F$ in
a setting where the other three are trivial.

\autoref{sec:elgot} is a brief recap on Elgot algebras and so can be
skipped by expert readers who have seen them before.

The concept of ffg-Elgot algebras is introduced in
\autoref{sec:ffgElgot}. Readers who would like to see the connection
of ffg-Elgot algebras to effectful iterative equations should jump
right to \autoref{rem:eff}, where this connection is explained. The
main technical results of our paper then follow as already
explained. First, \autoref{thm:ini} shows that $\phi F$ is the initial
ffg-Elgot algebra. Second, \autoref{thm:initialfy} establishes, for a
free object $Y$ of our base variety $\C$, a one-to-one correspondence
of pairs consisting of an ffg-Elgot algebra $A$ for $F$ and a morphism
$Y \to A$ with ffg-Elgot algebras for $F(-)+ Y$. This result turns out
to be a key ingredient of the construction of free ffg-Elgot algebras
from coalgebras for $F(-) + Y$ (see \autoref{C:free} and
\autoref{thm:free}) for a free object $Y$. Monadicity of
ffg-Elgot algebras is etablished in \autoref{S:mon}.

We conclude the paper in \autoref{S:con}.

Finally, in the short appendix a technical result concerning the
construction of $\phi F$ is presented.

\paragraph{Acknowedgements.} We would like to thank the anonymous
reviewers whose suggestions helped us to improve our paper.

\section{Preliminaries}
\label{sec:prelim}

\subsection{Varieties and `Finite' Algebras}
\label{sec:vars}

Throughout the paper we will work with a (finitary, many-sorted)
variety $\C$ of algebras and an endofunctor $F$ on
it. Equivalently,~$\C$ is the category of Eilenberg-Moore algebras for
a finitary monad $T$ on the category $\Set^S$ of $S$-sorted
sets~\cite{arv}. We will speak about objects of $\C$ (rather than
algebras for $T$) and reserve the word 'algebra' for algebras for
$F$. All the `usual' categories of algebraic structures and their
homomorphisms are varieties: monoids, (semi-)groups, rings, vector
spaces over a fixed field, modules for a (semi-)ring, positive convex
algebras, join-semilattices, Boolean algebras, distributive lattices,
and many others. In each case, the corresponding monad $T$ assigns to
a set the free object on it, e.g.~$TX = X^*$ for monoids, the finite
power-set monad $T= \powf$ for join-semilattices, and the
subdistribution monad $\D$ for positive convex algebras, etc.

\takeout{%
We are interested in the study of the semantics of 'finite' iterative
systems of equations. But several notions of finiteness present
themselves as natural in a variety. Finite algebras, of course;
however, using them as variable objects for iteration does not yield
anything new w.r.t.~working with finite sets. Besides, three other
notions are of interest: finitely generated, finitely presentable and
free finitely generated algebras. All three notions are classical
notions from general algebra, which have a categorical
characterization that we now recall:}

As mentioned in the introduction, every variety $\C$ of algebras comes
with three natural notions of 'finite' objects, each of which admits a
neat category-theoretic characterization (see \cite{arv}):

\paragraph{Finitely presentable objects} (fp objects, for short) can
be presented by finitely many generators and relations. An object $X$
is fp iff the covariant hom-functor $\C(X,-)\colon \C \to Set$ is
\emph{finitary}, i.e.~it preserves filtered colimits. Recall that a
category $\D$ is filtered if every finite subcategory has a cocone in
$\D$, and a diagram is filtered if its scheme is a filtered category.\smnote{I vote
  for repeating the definition of filtered; not all our readers know
  this.} 
We denote by
$\C_\fp$ the full subcategory of $\C$ given by all fp
objects.\smnote{I vote for including the next sentence; we need this
  in our proofs!} In our proofs we will use the well-known fact that
every object $X$ is the filtered colimit of the canonical diagram
$\C_\fp/X \to \C$, i.e.~objects in the diagram scheme are
morphisms $P \to X$ in $\C$ with $P$ fp.

\paragraph{Finitely generated objects} (fg objects, for short) are
presented by finitely many generators but, possibly, infinitely many
relations. An object $X$ is fg iff $\C(X,-)$ preserves filtered
colimits with monic connecting morphisms. Hence, every fp object is fg
but not conversely. In fact, the fg objects are precisely the
(regular) quotients of the fp objects~\cite[Proposition~5.22]{arv}. 

\paragraph{Free finitely generated objects} (ffg objects, for short)
are the objects $(TX_0, \mu_{X_0})$ where $X_0$ is a finite $S$-sorted
set (i.e.~the coproduct of all components $X_s$, $s \in S$ is
finite). An object $X$ is a split quotient of an ffg object iff
$\C(X,-)$ preserves \emph{sifted}
colimits~\cite[Corollary~5.14]{arv}.\smnote{I vote for 
  including precise pointers to make life easy for reader wanting to
  locate proofs.} Recall from~\cite{arv} that sifted
colimits are more general than filtered colimits: a sifted colimit is
a colimit of a diagram $D\colon \D \to \C$ whose diagram scheme $\D$ is a
sifted category, which means that finite products commute with
colimits over $\D$ in $\Set$.  \smnote{I vote for including the
  precise definition.} \iffull More precisely, $\D$ is sifted iff given any
diagram $D\colon \D \times \J \to \Set$, where $\J$ is a finite discrete
category, the canonical map
\[
  \colim\limits_{d \in \D} \Big(\prod\limits_{j \in \J}
  D(d,j)\Big) 
  \to
  \prod\limits_{j \in \J} (\colim\limits_{d \in \D} D(d,j))
\]
is an isomorphism. 
\fi%
For instance, every filtered category and every category with finite
coproducts is sifted~\cite[Example~2.16]{arv}.

The category $\C$ is cocomplete and the forgetful functor $\C \to \Set^S$
preserves and reflects sifted colimits, that is, sifted colimits in $\C$ are formed on the level of underlying sets \cite[Proposition~2.5]{arv}.

\begin{rem}\label{R:sifted}
  A finitely cocomplete category has sifted colimits if and only if it
  has filtered colimits and reflexive coequalizers, i.e.~coequalizers
  of parallel pairs of epimorphisms with a joint splitting. Moreover a
  functor preserves sifted colimits if and only if it preserves filtered
  colimits and reflexive coequalizers~\cite{AdamekEA10}.
\end{rem}

We denote by $\C_\ffg$ the full subcategory of ffg objects of
$\C$. Analogously to the fact that every object of $\C$ is a filtered
colimit of fp objects, every object $X$ is a sifted colimit of the
canonical diagram $\C_\ffg/X \to \C$; this follows
from~\cite[Proposition~5.17]{arv}.

\subsection{Relation between the object classes.} 
\smnote{Even though not needed for our proofs, I still vote for
  inluding this short subsec as it gives background for repeating the
  relation of the four fixed points; which we should absolutely
  recall!}  We already mentioned that every fp object is fg (but not
conversely, in general). Clearly, every ffg object is fp, but not
conversely in general (e.g.~consider any fp monoid which is not of the
form $X^*$ for some finite set $X$). So, in general, we have full
embeddings
\[
  \C_\ffg \overset{\neq}{\subto} \C_\fp \overset{\neq}{\subto} \C_\fg.
\]
In rare cases, all three object classes coincide; e.g.~in $\Set$
(considered as a variety) and the category of vector spaces over a field.

The equation $\C_\fg = \C_\fp$ holds true, for example, for all
locally finite varieties (i.e.~where ffg objects are carried by finite
sets), e.g.~Boolean algebras, distributive lattices or
join-semilattices); for positively convex algebras~\cite{SokolovaW15},
commutative monoids \cite{redei,freyd68}, abelian groups, and more
generally, in any category of (semi-)modules for a semiring $\S$ that
is \emph{Noetherian} in the sense of \'Esik and
Maletti~\cite{em_2011}. That means that every subsemimodule of an fg semimodule
is fg itself. For example, the following semirings are Noetherian:
every finite semiring, every field, every principal ideal domain such
as the ring of integers and therefore every finitely generated
commutative ring by Hilbert's Basis Theorem. The tropical semiring
$(\Nat \cup \{\infty\}, \min, +, \infty, 0)$ is \emph{not}
Noetherian~\cite{em_2010}. The usual semiring of natural numbers is
not Noetherian either, but for the category of $\Nat$-semimodules ($=$
commutative monoids), $\C_\fp = \C_\fg$ still holds.

\subsection{Functors and Liftings}
\smnote{This subsec is important background info which we should have
  to show that our results carry some relevance!}

We will consider coalgebras for functors $F$ on the variety $\C$. In
many cases $F$ is a \emph{lifting} of a functor on many-sorted sets, i.e.~there is a
functor $F_0\colon \Set^S \to \Set^S$ such that the square below
commutes, where $U\colon \C \to \Set^S$ denotes the forgetful functor.
\[
\xymatrix{
  \C \ar[r]^-F \ar[d]_U
  & \C \ar[d]^U \\
  \Set^S \ar[r]_-{F_0} & \Set^S
}
\]
It is well-known~\cite{applegate,johnstone_lift} that liftings of a
given functor $F_0$ on $\Set^S$ to $\C$, the variety given by the
monad $(T,\eta,\mu)$, are in bijective correspondence with
distributive laws of that monad over the functor $F_0$.
This means natural transformations $\lambda\colon TF_0 \to F_0T$
such that the following two diagrams commute:
\[
  \xymatrix{
    F_0
    \ar[d]_{\eta F_0}
    \ar[rd]^-{F_0\eta}
    \\
    TF_0 \ar[r]_-{\lambda}
    &
    F_0T
  }
  \qquad
  \xymatrix{
    TTF_0
    \ar[d]_{\mu F_0}
    \ar[r]^-{T\lambda}
    &
    TF_0T \ar[r]^-{\lambda T}
    &
    F_0TT
    \ar[d]^{F_0\mu}
    \\
    TF_0
    \ar[rr]_-{\lambda}
    &&
    F_0T
    }
\]
Given a distributive law $\lambda$ of $T$ over $F_0$, the corresponding
lifting $F$ assigns to a $T$-algebra $(A,a)$ the $T$-algebra
$(F_0A, F_0a \cdot \lambda_A)$. 
It was observed by Turi and Plotkin~\cite{TuriP97} that a final
coalgebra for $F_0$ lifts to a final coalgebra for the lifting $F$%
\iffull\else\/, and this is then the final \emph{bialgebra} for the
corresponding distributive law\fi. 
\iffull%
Indeed, denoting by
$\xi\colon \nu F_0 \to F_0(\nu F_0)$ the final coalgebra for $F_0$, we obtain a
canonical $T$-algebra structure on $\nu F_0$ by corecursion, i.e.~as the unique
coalgebra homomorphism $a\colon T(\nu F_0) \to \nu F_0$ in the diagram below:
\[
  \xymatrix{
    T(\nu F_0) 
    \ar[r]^-{T\xi} 
    \ar[d]_a
    &
    TF_0(\nu F_0)
    \ar[r]^-{\lambda_{\nu F_0}}
    &
    F_0T(\nu F_0) 
    \ar[d]^{F_0a}
    \\
    \nu F_0
    \ar[rr]_-{\xi}
    &&
    F_0(\nu F_0)
  }
\]
It is easy to verify that $a$ is an Eilenberg-Moore algebra and that
this turns $\nu F_0$ into the final coalgebra for the lifting $F$. Note
that the above square expresses that $(\nu F_0, a, \xi)$ is a
\emph{$\lambda$-bialgebra}, and it is the final one~\cite{TuriP97}.
\fi

Coalgebras for lifted functors are significant because the
targets of \emph{finite} coalgebras $X$ under \emph{generalized
  determinization}~\cite{sbbr13} are precisely those coalgebras for
the lifting $F$ that are carried by ffg objects $(TX,\mu_X)$. In more detail,
generalized determinization is the process of turning a given
coalgebra $c\colon X \to F_0TX$ in $\Set^S$ into a coalgebra for the
lifting $F$: one uses the freeness of $TX$ and the fact that $FTX$
is a $T$-algebra to extend $c$ to a $T$-algebra homomorphism
$c^*\colon TX \to FTX$. The \emph{coalgebraic language
  semantics}~\cite{bms13} of $(X,c)$ is then the final semantics of
$c^*$. A classical instance of this is the language semantics of
non-deterministic automata considered as coalgebras
$X \to \{0,1\} \times (\powf X)^\Sigma$; here the generalized
determinization with $T = \powf$ and $F_0 = \{0,1\} \times X^\Sigma$ on
$\Set$ is the well-known subset construction turning a
non-deterministic automaton into a deterministic one.

\subsection{Four Fixed Points}
\label{sec:four}

Fixed points of a functor $F$ are (co)algebras whose structure is
invertible. Let us now consider a finitary endofunctor
$F\colon \C\to \C$ on our variety.  Then $F$ has a terminal coalgebra
\cite[Theorem 6.10]{amm18}, which we denote by $\nu F$. Its coalgebra
structure $\nu F \to F(\nu F)$ is an isomorphism by Lambek's
lemma~\cite{lambek}, and so $\nu F$ is a fixed point of $F$. The
terminal coalgebra $\nu F$ is fully abstract w.r.t.~behavioural
equivalence: given $F$-coalgebras $(X,c)$ and $(Y,d)$, two states
$x \in X$ and $y\in Y$ are called \emph{behavioural equivalent} if
there exists a pair of coalgebra homomorphisms
$f\colon (X, c) \to (Z,e)$ and $g\colon (Y,d) \to (Z,e)$ such that
$f(x) = g(y)$. Behavioural equivalence instantiates to well-known
notions of indistinguishability of system states, e.g.~for
$F = \powf$, it is strong bisimilarity of states in finitely branching
transitions systems, and for $FX = \{0,1\} \times X^\Sigma$ it yields
the language equivalence of states in deterministic automata. One can
show that two states are behaviourally equivalent if and only if they
are identified under the unique coalgebra homomorphisms into $\nu F$.

There are three further fixed points of $F$ obtained from `finite'
coalgebras, where `finite' can mean each of the three notions
discussed in Subsection~\ref{sec:vars}. More precisely, denote by
\[
  \coa F
\]
the category of all $F$-coalgebras. We consider its full subcategories
given by all coalgebras with fp, fg, and ffg carriers, respectively,
and we denote them as shown below:
\[
  \coafr F \subto \coaf F \subto \coafg F \subto \coa F. 
\] 
Since the three subcategories above are
essentially small, we can form coalgebras as the colimits of
the above inclusions as follows:
\begin{align*}
  \phi F & = \colim (\coafr F \subto \coa F),\\
  \theta F & = \colim (\coafg F \subto \coa F), \\
  \rho F & = \colim (\coaf F \subto \coa F).
\end{align*}
Note that the latter two colimits are filtered; in fact, $\coafg F$
and $\coaf F$ are clearly closed under finite colimits in $\coa F$,
whence they are filtered categories. The first colimit is a sifted
colimit since its diagram scheme $\coafr F$ is closed under finite
coproducts~\cite[Lemma~3.6]{milius18}. In what follows, the
objects of $\coafr F$ are called \emph{ffg-coalgebras}. 

We now discuss the three coalgebras above in more detail.

\paragraph{The rational fixed point} is the coalgebra $\rho F$. This
is a fixed point as proved by Ad\'amek, Milius and
Velebil~\cite{amv_atwork}. In addition, $\rho F$ is characterized by a
universal property both as a coalgebra and as an algebra:
\begin{enumerate}
\item As a coalgebra, $\rho F$ is the terminal \emph{locally finitely
    presentable} (lfp) coalgebra, where a coalgebra is called lfp if
  it is a filtered colimit of a diagram formed by coalgebras from
  $\coaf F$~\cite{m_linexp}.
\item As an algebra, $\rho F$ is the initial iterative algebra for
  $F$.
\end{enumerate}
An \emph{iterative algebra} is an $F$-algebra $a\colon FA \to A$ such
that every \emph{fp-equation}, i.e.~a morphism $e\colon X \to FX + A$
with $X$ fp, has a unique \emph{solution} in $A$. The latter means
that there exists a unique morphism $\sol e$ such that the following
square commutes\footnote{Note that in a diagram we usually denote
  identity morphisms simply by the (co)domain object.}:
\begin{equation}\label{eq:sol}
  \vcenter{
    \xymatrix@C+1pc{
      X 
      \ar[r]^-{\sol e}
      \ar[d]_e
      &
      A
      \\
      FX + A \ar[r]_-{F\sol e + A}
      &
      FA + A \ar[u]_{[a,A]}
    }
  }
\end{equation}
This notion is a categorical
generalization of iterative $\Sigma$-algebras for a single-sorted signature $\Sigma$ originally introduced by Nelson~\cite{nelson};
see also Tiuryn~\cite{tiuryn80} for a closely related concept.

\paragraph{The locally finite fixed point} is the coalgebra
$\theta F$. This coalgebra was recently introduced and studied by
Milius, Pattinson and Wi\ss\/mann~\cite{mpw16,mpw20} for a finitary
endofunctor $F$ preserving non-empty monos. They proved $\theta F$ to
be a fixed point of $F$ and characterized by two universal properties
analogous to the rational fixed point:
\begin{enumerate}
\item As a coalgebra, $\theta F$ is the terminal \emph{locally
    finitely generated} (lfg) coalgebra, where a coalgebra is called
  lfg if it is a colimit of a directed diagram of coalgebras in
  $\coafg F$.
\item As an algebra, $\theta F$ is the initial fg-iterative algebra
  for $F$, where fg-iterative is simply the variation of iterative
  above where the domain object of $e\colon X \to FX + A$ is required
  to be fg in lieu of fp.
\end{enumerate}
Moreover, $\theta F$ is always a subcoalgebra of
$\nu F$~\cite[Theorem~3.10]{mpw20} and thus fully abstract
w.r.t.~behavioral equivalence.

\paragraph{The \name fixed point} is the coalgebra $\phi F$. Recently,
Urbat~\cite{Urbat17} has proved that $\phi F$ is indeed a fixed point
of $F$, provided that $F$ preserves sifted colimits. Actually, in
\emph{loc.~cit.} the coalgebra $\phi F$ is defined to be the colimit
of all $F$-coalgebras whose carrier is a split quotient of an ffg
object. However, this is the same colimit as above, as we prove in the
Appendix.\smnote{I vote for not including this appendix here as this
  is a rather technical discussion that would blow up this paragraph.}

Moreover, \emph{loc.~cit.~}provides a general framework that allows to
prove uniformly that all four coalgebras $\rho F$, $\rho F$,
$\theta F$ and $\nu F$ are fixed points. In addition, a
uniform proof of the universal properties of $\rho F$, $\theta F$ and
$\nu F$ is given.

Somewhat surprisingly, the coalgebra $\phi F$ fails to have the finality property
w.r.t.~to coalgebras in $\coafr F$:
Urbat~\cite[Example~4.12]{Urbat17}
gives such a counterexample, see \autoref{S:phif} below.
This also shows that $\phi F$ cannot have a universal property as some
kind of iterative algebra (i.e.~where solutions are unique). 

\paragraph{Relations between the Fixed Points.} Recall that a
\emph{quotient} of a coalgebra is represented by a coalgebra
homomorphism carried by a regular epimorphism ($=$ surjective algebra
morphism) in $\C$. Suppose we have a finitary functor $F$ on $\C$
preserving surjective morphisms and non-empty injective
ones.\footnote{These are mild assumptions; e.g.~if $\C$ is
  single-sorted and $F$ a lifting of a set functor, then these
  conditions are fulfilled.} Then the subcoalgebra $\theta F$ of
$\nu F$ is a quotient of $\rho F$, which in turn is a quotient of
$\phi F$~\cite{mpw20,milius18}; see~\eqref{eq:pic}:
\[
  \phi F \epito \rho F \epito \theta F \monoto \nu F.
\]
Whenever $\C_\fp = \C_\fg$, we clearly have $\coaf F = \coafg F$ and
hence $\rho F \cong \theta F$ (i.e.~$\rho F$ is fully abstract
w.r.t.~behavioral equivalence). If $\C_\fp = \C_\fg = \C_\ffg$,
$\rho F$ and $\theta F$ coincide with $\phi F$ as well. Moreover,
Milius~\cite{milius18} introduced the notion of a \emph{proper}
functor (generalizing the notion of a proper semiring of \'Esik and
Maletti \cite{em_2010}) and proved that a functor $F$ is proper if and
only if the three fixed points coincide, i.e.~the
picture above collapses to
$\phi F \cong \rho F \cong \theta F \subto \nu F$. Loc.~cit.~also
shows that on a variety $\C$ where fg objects are closed under taking
kernel pairs, every endofunctor mapping kernel pairs to weak pullbacks
in $\Set$ is proper~\cite[Proposition~5.10]{milius18}.\footnote{Note
  that these conditions are fulfilled in particular by every locally
  finite variety and every category of semirings for a Noetherian
  semiring and any lifted endofunctor whose underlying $\Set$ functor
  preserves weak pullbacks.}
\smnote{This does make sense! It means that $UF$ maps kernel pairs to weak pullbacks.}

\paragraph{Instances of the three fixed points}
$\phi F$, $\theta F$ and $\rho F$ have mostly been considered for
proper functors (where the three are the same, e.g.~for functors on
$\Set$), or else on algebraic categories where $\C_\fp = \C_\fg$
(where $\rho F \cong \theta F$, i.e.~the rational and locally finite
fixed points coincide). We shall see in \autoref{S:phif} that $\phi F$
can be different from $\rho F$ and $\theta F$ (even when the latter
two are isomorphic). Before that we illustrate the relationship
of $\rho F$ and $\theta F$ to $\nu F$ by a number of well-known
important examples:  \iffull
\smnote{I strongly vote for mentioning the following examples!}
\begin{examples}
\begin{enumerate}
\item For the set functor $FX = \{0,1\} \times X^\Sigma$, whose
  coalgebras are deterministic automata with the input
  alphabet $\Sigma$, the terminal coalgebra is formed by all formal
  languages on $\Sigma$ and the three fixed points are formed by
  all regular languages.
\item For a signature $\Sigma = (\Sigma_n)_{n< \omega}$ of operation
  symbols with prescribed arity we have the associated polynomial
  endofunctor on $\Set$ given by
  $F_\Sigma X = \coprod_{n < \omega} \Sigma_n \times X^n$. Its terminal
  coalgebra is carried by the set of all (finite and infinite)
  $\Sigma$-trees, i.e.~rooted and ordered trees where each node with
  $n$-children is labelled by an $n$-ary operation symbol. The three
  fixed points are all equiv to the subcoalgebra given by \emph{rational} (or
  regular~\cite{courcelle}) $\Sigma$-trees, i.e.~those $\Sigma$-trees
  that have only finitely many different subtrees (up to isomorphism)
  This characterization is due to Ginali~\cite{ginali}. For
  example, for the signature $\Sigma$ formed by a binary operation symbol
  $*$ and a constant $c$ the following infinite $\Sigma$-tree (here
  written as an infinite term) is rational:
  \[
    c * (c * (c* \cdots )));
  \]
  in fact, up to isomorphism its only subtrees are the whole tree and the single-node
  tree labelled by $c$).
\item Consider the endofunctor $FX= \S \times X^\Sigma$ on the category of semimodules for
  the semiring $\S$. The fixed point $\theta F$, which is isomorphic to $\rho F$ if $\S$ is
  Noetherian, is formed by all formal power series (i.e.~elements of
  $\S^{\Sigma^*}$) recognizable by finite $\S$-weighted
  automata. From the  Kleene-Schützenberger
  theorem~\cite{schuetzenberger} (see also~\cite{br88}) it follows
  that these are, equivalently, the \emph{rational} formal
  power-series.
\item For $FX = k \times X$ on $\Set$ the terminal coalgebra is carried
  by the set $k^\omega$ of all streams on $k$, and the three fixed
  points are equal; they are formed by all eventually periodic streams (also called
  lassos). If $k$ is a field, and we consider $F$ as a functor on
  vector spaces over $k$, we obtain rational streams~\cite{rutten_rat}.
\item Recall~\cite{Doberkat06} that a \emph{positively
    convex algebra} is a set $X$ equipped with finite convex sum
  operations. This means that for every $n$ and $p_1,
  \ldots, p_n \in [0,1]$ with $\sum_{i = 1}^n p_i \leq 1$ we
  have an $n$-ary operation assigning to $x_1, \ldots, x_n \in X$
  an element $\bigboxplus\limits_{i=1}^n p_i x_i$ subject to the following
  axioms:
  \begin{enumerate}
  \item $\bigboxplus\limits_{i=1}^n p_i^kx_i = x_k$ whenever $p_k^k = 1$ and $p_i^k = 0$ for $i \neq k$, and
  \item $\bigboxplus\limits_{i=1}^n p_i \left(\bigboxplus\limits_{j=1}^k q_{i,j} x_j\right) = \bigboxplus\limits_{j=1}^k \left(\sum\limits_{i = 1}^n p_iq_{i,j}\right) x_j$.
  \end{enumerate}
  For $n = 1$ we write the convex sum operation for $p \in [0,1]$
  simply as $px$. Positively convex algebras together with maps
  preserving convex sums in the obvious sense form the category
  $\PCA$. Note that $\PCA$ is (isomorphic to) the
  Eilenberg-Moore category for the monad $\mathcal D$ of finitely
  supported subprobability distributions on sets.

  Sokolova and Woracek~\cite{SokolovaW18} have recently proved that
  the functor $FX = [0,1] \times X^\Sigma$ and its subfunctor $\hat F$
  mapping a set $X$ to the set of all pairs $(o,f)$ in $[0,1] \times
  X^\Sigma$ satisfying
  \[
    \forall s \in \Sigma: \exists p_s\in [0,1], x_s \in X:
    o + \sum\limits_{s\in \Sigma} p_s \leq 1, f(s) = p_sx_s
  \]
  are proper functors on $\PCA$. Hence, for those functors our three
  fixed points coincide. In particular, the latter functor $\hat F$ is
  used to capture the complete trace semantics of generative
  probabilistic transition systems~\cite{SilvaS11}. Hence, for $\hat
  F$, our three fixed points collect precisely the probabilistic
  traces of finite such systems. 

\item Given an alphabet $\Sigma$, for the functor
  $FX = \{0,1\} \times X^\Sigma$ on the category of idempotent
  semirings the locally finite fixed point $\theta F$ is formed by all
  context-free languages~\cite{mpw20}. Descriptions of $\rho F$ and
  $\phi F$ are unknown in this case.

  More generally, consider first the category of associative
  $\S$-algebras for the commutative semiring $\S$,
  i.e.~$\S$-semimodules equipped with an additional monoid structure
  such that multiplication is an $\S$-semimodule morphism in each of
  its arguments. This is the Eilenberg-Moore category for the monad
  $\Poly{-}$ assigning to each set $X$ the set of $\S$-polynomials of
  over $X$, i.e.~functions $X^* \to \S$ with finite support. This is not
  quite the category $\C$, but one considers $\Sigma$-pointed
  $\S$-algebras, where $\Sigma$ is an input alphabet,
  i.e.~$\S$-algebras $A$ equipped with a map $\Sigma \to A$. The
  corresponding monad is $\Poly{-+\Sigma}$. The terminal coalgebra for
  the functor $FX = \S \times X^\Sigma$ on $\C$ is again carried by
  the set of all formal power series over $\Sigma$, and the locally
  finite fixed point $\theta F$ is formed by all constructively
  $\S$-algebraic formal power-series~\cite{mpw16}. (The original definition of those
  power-series goes back to Fliess~\cite{Fliess1974}, see
  also~\cite{DrosteEA09}; an equivalent coalgebraic characterization
  was first provided by Winter et al.~\cite{jcssContextFree}.)
\end{enumerate}
\end{examples}
\else \smnote{There was a longer version of the following here; you
  can switch it on with the \textsf{full} switch.} For example,
regular languages for the automaton functor $2 \times (-)^\Sigma$ on
$\Set$; rational formal power series for the functor
$\S \times (-)^\Sigma$ on $\S$-semimodules (whenever $\S$ is a proper
semiring the three fixed points coincide); rational (a.k.a.~regular)
$\Sigma$-trees for the polynomial functor on $\Set$ associated to the
signature $\Sigma$; eventually periodic and rational streams for the
functor $k \times (-)$ on $\Set$ and vector spaces over the field $k$,
respectively; the behaviors of probabilistic automata modelled as
coalgebras for $[0,1] \times (-)^\Sigma$ on the category of positive
convex algebras (that this functor is proper was recently proved by
Sokolova and Woracek~\cite{SokolovaW18}); finally, (deterministic)
context-free languages and constructively $\S$-algebraic formal
power-series (the weighted counterpart of context-free
languages)~\cite{mpw16}. Note that the last two examples are instances of
the locally finite fixed point $\theta F$, but a description of
$\phi F$ and $\rho F$ is unkown.  \fi

\begin{rem}
  The rational fixed point $\rho F$ and the locally finite one,
  $\theta F$, are defined and studied
  more generally than in the present setting, namely for finitary
  functors $F$ on a locally finitely presentable category $\C$ (see
  Ad\'amek and Rosick\'y~\cite{ar} for an introduction to locally
  presentable categories); see~\cite{amv_atwork,m_linexp} for $\rho F$
  and~\cite{mpw16,mpw20} for $\theta F$.

  The following are instances of $\rho F$ and
  $\theta F$ for $F$ on a locally finitely presentable category $\C$:
  \iffull
  \begin{enumerate}
  \item Consider the functor category $\Set^{\mathcal F}$, where
    $\mathcal F$ is the category of finite sets and maps and denote by
    $V\colon \mathcal F \subto \Set$ is the full embedding. Further,
    consider the endofunctor $FX = V + X \times X + \delta(X)$ with
    $\delta(X)(n) = X(n+1)$. This is a paradigmatic example
    of a functor arising from a \emph{binding signature} for which
    initial semantics was studied by Fiore et al.~\cite{fpt}.

    The final coalgebra $\nu F$ is carried by the presheaf of all
    $\lambda$-trees modulo
    $\alpha$-equivalence~\cite{amv_horps_full}. In fact, the functor
    $\nu F$ assigns to $n$ the set of all (finite and infinite)
    $\lambda$-trees in $n$ free variables (note that such a tree may
    have infinitely many bound variables). Moreover, $\rho F$ is carried by
    the rational $\lambda$-trees, where an $\alpha$-equivalence class
    is called \emph{rational} if it contains at least one
    $\lambda$-tree which has (up to isomorphism) only finitely many
    different subtrees (see~\emph{op.~cit.}).

    The coalgebra of all $\lambda$-trees with finitely many free
    variables modulo $\alpha$-equivalence also appears as the final
    coalgebra for a very similar functor on the category of nominal
    sets~\cite{KurzEA13}. Moreover, the rational $\lambda$-trees form
    its rational fixed point~\cite{mw15}. Similarly for any functor
    on nominal sets arising from a binding
    signature~\cite{KurzEA13,msw16}.
    
  \item Courcelle's algebraic trees~\cite{courcelle} occur as a
    locally finite fixed point. In more detail, fix a polynomial
    functor $H_\Sigma\colon\Set \to \Set$ and consider the category
    $\C = H_\Sigma/\Mndf(\Set)$ of $H_\Sigma$-pointed finitary monads
    $M$ on $\Set$, i.e.~those equipped with a natural transformation
    $H_\Sigma \to M$. The assignment $M \mapsto H_\Sigma M + \Id$
    provides an endofunctor $F\colon \C \to \C$ whose terminal
    coalgebra is carried by the monad $T_\Sigma$ assigning to a set
    $X$ the set of all $\Sigma$-trees over $X$. The locally finite
    fixed point $\theta F$ is the monad $A_\Sigma$ of algebraic
    $\Sigma$-trees~\cite{mpw16}. Note that in this category $\C$, fp
    and fg objects do not coincide. Hence, it is unclear whether
    $\theta F$ and $\rho F$ are isomorphic.
  \end{enumerate}
  \else \smnote{Again there was a longer version of the following
    here; you can switch it on with the \textsf{full} switch.}
  (a)~Courcelle's algebraic trees~\cite{courcelle} as proved
  in~\cite{mpw16}; (b) rational $\lambda$-trees (modulo
  $\alpha$-equivalence) for a functor on the category of presheaves
  over finite sets~\cite{amv_horps_full} or for a related functor on
  the category of nominal sets~\cite{mw15}; more generally,
  (c)~rational trees over an arbitrary binding signature (see Fiore et
  al.~\cite{fpt}) as proved in~\cite{msw16}. Again, (a) is an instance
  of the locally finite fixed point $\theta F$ but a description of
  the rational fixed point is unknown. \fi %
  In the setting of general locally finitely presentable categories, there is
  no analogy to $\phi F$, of course.
\end{rem}

\subsection{A Nontrivial Example of the Locally ffg Fixed Point}
\label{S:phif}

We now present a new example where only $\phi F$ is interesting
whereas the other three fixed points are trivial.

We consider the monad $T$ on $\Set$ whose algebras are the algebras
with one unary operation $u$ (with no equation):
\[ 
  TX = \Nat\times X \quad\text{with}\quad u(n,x)=(n+1,x). 
\]
The unit $\eta$ and multiplication $\mu$ of this monad are
given by $\eta_X(x) = (0,x)$ and $\mu_X(n, (m,x)) = (n+m, x)$. Since
$TX$ is the free algebra with one unary operation on $X$, its elements
$(n,x)$ correspond to terms $u^n(x)$. Let $F$ be the identity
functor $\Id$ on the category $\C = \Set^T$. The final coalgebra for
$\Id$ is lifted from $\Set$: it is the trivial algebra on $1$ with
$\id_1$ as its coalgebra structure. Since $1$ is clearly finitely
presented by one generator $x$ and the relation $u(x) = x$, both of
the diagrams $\coaf \Id$ and $\coafg \Id$ have a terminal object. This
is then their colimit, whence $\rho \Id \cong \theta \Id \cong 1$.
  
However, $\phi \Id$ is non-trivial and interesting. An ffg-coalgebra
$TX\xra{\gamma} TX$ may be viewed (by restricting it to its generators
in $X$) as obtained by generalized determinization of an
$FT$-coalgebra with $F = \Id$ on $\Set$, i.e.~a map
$X\xra{\langle o,\delta\rangle} \Nat\times X$ that we call
\emph{stream coalgebra}.  Given a state $x\in X$, we call the sequence
of natural numbers
\[
  (\,o(x), o(\delta(x)), o(\delta^2(x)),\ldots\,)
\]
the \emph{stream generated by $x$}. Since the set $X$ is finite, this
stream is eventually periodic, i.e. of the form $s=s_0s_1^\omega$ for
finite lists $s_0$ and $s_1$ of natural numbers. (Here
$(\dash)^\omega$ means infinite iteration.) Two eventually periodic
streams $s=s_0s_1^\omega$ and $t=t_0t_1^\omega$ with
$s_1=(s_{1,0},\ldots, s_{1,p-1})$ and
$t_1 = (t_{1,0},\ldots,t_{1,q-1})$ are called \emph{equivalent} if one
has
\begin{equation}\label{eq:equivstreams} 
  q\o \sum_{i<p} s_{1,i} = p\o \sum_{j<q} t_{1,j},
\end{equation} 
i.e.~the two lists $s_1$ and $t_1$ have the same arithmetic mean (or,
equivalently, the entries of the two lists $s_1^q$ and $t_1^p$ of
length $p \cdot q$ have the same sum). For instance, the streams
\[ s=(1,2,7,4)(1,3,2)^\omega = (1,2,7,4,1,3,2,1,3,2,1,3,2,\ldots)\]
and
\[ t=(5,6)(0,4)^\omega = (5,6,0,4,0,4,0,4,0,4,\ldots) \] are
equivalent. Note that the above notion of equivalence is well-defined,
i.e.~not depending on the choice of the finite lists $s_0,s_1$ and
$t_0,t_1$ in the representation of $s$ and $t$. In fact, given
alternative representations $s=\ol s_0 \ol s_1^\omega$ and
$t=\ol t_0 \ol t_1^\omega$ with
$\ol s_1 = (\ol s_{1,0},\ldots,\ol s_{\ol p-1})$ and
$\ol t_1=(\ol t_{1,0},\ldots,\ol t_{1,\ol q-1})$, the lists
$s_1^{\ol p}$ and $\ol s_1^p$ are equal up to cyclic shift, as are the
lists $t_1^{\ol q}$ and $\ol t_1^{q}$. Therefore from
\eqref{eq:equivstreams} it follows that
\[
  \ol q\o q\o p \o \sum_{i<\ol p} \ol s_{1,i}
  =
  \ol q\o q \o \ol p\o \sum_{i<p} s_{1,i}
  =
  \ol q \o \ol p \o p\o \sum_{j<q}t_{1,j}
  =
  \ol p \o p\o q\o \sum_{j<\ol q} \ol t_{1,j}.
\]
Dividing by $p\o q$ yields the required result:
\[
  \ol q\o \sum_{i<\ol p} \ol s_{1,i}
  =
  \ol p\o \sum_{j<\ol q} \ol t_{1,j}.
\]

\begin{rem}\label{R:new}
  \begin{enumerate}
  \item\label{R:new:1} In the proof of \autoref{P:uses} further below
    we use the following well-known fact about colimits of sets.
    For every diagram $D\colon \D \to \Set$, a cocone
    $c_i\colon Di \to C$ ($i \in \D$) is a colimit iff~(a) the
    colimit injections $c_i$ are jointly surjective,
    i.e.~$C = \bigcup c_i[Di]$, and (b)~given $c_i(x) = c_j(y)$ for
    some pair $x \in Di, y \in Dj$, there exists a zig-zag of morphisms
    of $\D$ whose $D$-image connects $x$ and $y$.
  \item\label{R:new:2} Moreover, if $D$ is a filtered diagram, then condition~(b) can
    be substituted by the condition that when two elements 
    $x,y \in Di$ are merged by $c_i$ then they are also
    merged by $Dh\colon Di \to Dj$ for some morphism $h\colon i \to j$ of $\D$.
  \end{enumerate}
\end{rem}

\begin{proposition}\label{P:uses}
  The coalgebra $\phi \Id$ is carried by the set of equivalence
  classes (cf.~\eqref{eq:equivstreams}) of eventually periodic streams. 
\end{proposition}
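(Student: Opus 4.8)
The plan is to realise $\phi\Id$ as a colimit in $\Set$ (using that the forgetful functor $\C\to\Set$ preserves sifted colimits, and $\coafr\Id$ is a sifted diagram scheme), and then apply the colimit criterion of \autoref{R:new}\ref{R:new:1}. Every ffg-coalgebra for $\Id$ on $\C=\Set^T$ is, up to the generalised-determinization correspondence, a stream coalgebra $\langle o,\delta\rangle\colon X\to\Nat\times X$ on a finite set $X$; its underlying carrier is $TX=\Nat\times X$. First I would describe explicitly, for such a coalgebra $\gamma\colon TX\to TX$, the colimit cocone component, i.e.\ the map $TX\to E$ where $E$ denotes the set of equivalence classes of eventually periodic streams. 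The natural candidate sends a generator $x\in X$ to the class of the stream generated by $x$, and then extends along $T$; since $u$ acts on $TX$ by $u(n,x)=(n+1,x)$ and the coalgebra structure must be a $T$-algebra homomorphism, the element $(n,x)=u^n(x)$ should be sent to the class of $n\cdot s(x)$ where $s(x)$ is the stream of $x$ and $n\cdot(-)$ prepends — or rather, since only the arithmetic mean of the periodic part matters, the class is unaffected by finite prefixes, so $(n,x)$ and $(0,x)$ go to the same class; I would check this is forced and well defined.

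Next I would verify the two conditions of the colimit criterion. For \textbf{joint surjectivity}: every eventually periodic stream $s=s_0s_1^\omega$ arises from a finite stream coalgebra (take $X$ to be the states of the obvious cyclic automaton reading off $s$, with $o$ the output and $\delta$ the successor), so its class is hit; hence $E=\bigcup_\gamma c_\gamma[T X_\gamma]$. For \textbf{the zig-zag condition}: suppose two elements of two ffg-coalgebras are sent to the same class of $E$; reducing to generators and using that a coalgebra map out of the coproduct coalgebra (which lies in $\coafr\Id$ since the scheme is closed under finite coproducts) lets me assume both elements live in a single finite stream coalgebra $(X,\langle o,\delta\rangle)$, I must show that two states $x,y\in X$ whose streams are equivalent in the sense of \eqref{eq:equivstreams} are connected by a zig-zag of ffg-coalgebra homomorphisms whose images merge $x$ and $y$. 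The key construction here is a \emph{common refinement}: given $x$ with periodic part of length $p$ and $y$ with periodic part of length $q$ having equal arithmetic mean, I would build an auxiliary finite stream coalgebra with a single ``averaged'' cyclic state — e.g.\ the one-state (or $pq$-state) coalgebra generating the constant stream equal to that common mean, or more carefully a coalgebra with homomorphisms \emph{from} the $x$-part and \emph{from} the $y$-part into it — using that a homomorphism of stream coalgebras corresponds to a simulation that can redistribute outputs as long as sums over cycles are preserved. This is where the precise arithmetic of \eqref{eq:equivstreams} (multiplying the lists to common length $pq$ and matching sums) does the work.

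The \textbf{main obstacle} I anticipate is exactly the zig-zag step: showing that equivalence of streams is not merely \emph{necessary} but \emph{sufficient} for identification in the colimit. Necessity (if $c_\gamma(x)=c_{\gamma'}(y)$ then the streams are equivalent) should follow by exhibiting an invariant: any ffg-coalgebra homomorphism $h\colon TX\to TX'$ over $\C$, viewed on the periodic parts, must preserve the arithmetic-mean-of-cycle quantity, because it is forced to send a cyclic run of length $p$ to a cyclic run whose length divides into $p$ with the output sums scaling accordingly; one extracts a well-defined map on stream classes and checks it is the identity on $E$. Sufficiency requires the explicit averaging coalgebra above, and care that it genuinely has finite free carrier and that the connecting maps are honest $T$-algebra coalgebra homomorphisms — i.e.\ respect both $u$ and $\delta$. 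Once both directions are in place, \autoref{R:new}\ref{R:new:1} yields that the colimit carrier is precisely $E$, and Lambek's lemma together with the fixed-point result of Urbat~\cite{Urbat17} (since $\Id$ trivially preserves sifted colimits) identifies the coalgebra structure as the expected isomorphism, completing the proof.
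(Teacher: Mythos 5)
Your overall strategy coincides with the paper's: compute the colimit in $\Set$ via \autoref{R:new}\ref{R:new:1}, get joint surjectivity from cyclic automata, show that coalgebra homomorphisms preserve the cycle-mean invariant (your ``necessity'' direction, which matches the paper's cocone computation), and then establish the zig-zag condition for states with equivalent streams. The gap is in the one step you yourself flag as the main obstacle: your proposed witness for the zig-zag points in the wrong direction, and neither of your concrete candidates can be made to work. You propose a \emph{cospan}, i.e.\ homomorphisms \emph{from} the $x$-part and the $y$-part \emph{into} an ``averaged'' coalgebra. A one-state target with constant output equal to the common mean fails because that mean need not be a natural number. The $pq$-state target fails for a structural reason: a $T$-algebra coalgebra homomorphism $TX\to TW$ sends a generator $x_j$ to some $(a_j,w_{\sigma(j)})$, and the homomorphism condition forces $\sigma$ to advance by one along the $W$-cycle at each step; since $x_{j+p}=x_j$, one needs $w_{\sigma(j)+p}=w_{\sigma(j)}$, so the reachable cycle in $W$ must have length \emph{dividing} $p$. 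Hence a $p$-periodic part can never be mapped into a $pq$-cycle (for $q>1$): the freedom to ``redistribute outputs'' via the $\Nat$-offset does not extend to lengthening cycles, only to shifting outputs around a cycle of compatible length.

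The construction that works is the dual one, a \emph{span} $(TX,\gamma_X)\xleftarrow{\;g\;}(TZ,\gamma_Z)\xrightarrow{\;h\;}(TY,\gamma_Y)$. Choose $k\geq 0$ and $p>0$ with $x_k=x_{k+p}$ \emph{and} $y_k=y_{k+p}$ simultaneously (possible since $X,Y$ are finite; this synchronises the two periods, and equivalence then gives $m_{k+p}-m_k=n_{k+p}-n_k$ for the accumulated outputs $m_j,n_j$). Let $Z=\{z_0,\dots,z_{k+p-1}\}$ with all output deferred to the wrap-around step, $\gamma_Z(z_j)=(0,z_{j+1})$ and $\gamma_Z(z_{k+p-1})=(m_{k+p}-m_k,z_k)$, and define $g(z_j)=(m_j,x_j)$, $h(z_j)=(n_j,y_j)$: the cumulative sums $m_j,n_j$ absorb the pointwise output discrepancies into the $\Nat$-component of the free algebras, and the single equality $m_{k+p}-m_k=n_{k+p}-n_k$ is exactly what makes both legs commute at the wrap-around. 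This span connects $(m,x)$ and $(n,y)$, and here the cycle of $Z$ maps onto cycles of the \emph{same} length in $X$ and $Y$, so the divisibility obstruction disappears. Your preliminary reduction via coproducts is harmless but does not help with this step.
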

In more detail, the unary operation and
the coalgebra structure are both given by
$\id\colon \phi \Id \to \phi \Id$, and for every $\Id$-coalgebra
$(TX,\gamma_X)$ with $X$ finite, the colimit injection
$\gamma_X^\sharp\colon TX\to \phi\Id$ maps $(m,x)\in TX$ to the
equivalence class of the stream generated by $x$.
\proof
  \begin{enumerate}
  \item We first show that the above morphisms $(\dash)^\sharp$ form a
    cocone. Given an ffg-coalgebra $(TX,\gamma_X)$ for $\Id$ and elements
    $(m,x),(n,y)\in TX$ with $\gamma_X(m,x)=(n,y)$, the stream
    generated by $y$ is the tail of the stream generated by $x$, and
    thus the two streams are equivalent. This shows that $\gamma_X^\sharp$
    is a coalgebra homomorphism.

    To show that the morphisms $(\dash)^\sharp$ form a cocone,
    suppose that $h\colon (TX,\gamma_X)\to (TY,\gamma_Y)$ is a
    homomorphism in $\coafr \Id$, and let $(m,x)\in TX$ and
    $(n,y)\in TY$ with $h(m,x)=(n,y)$ be given. We need to show that
    the streams generated by $x$ and $y$ are equivalent. Denote by
    \begin{equation}\label{eq:reachablestates}
      (m_j,x_j) := \gamma_X^j(m,x)
      \quad\text{and}\quad
      (n_j,y_j):= \gamma_Y^j(n,y)\qquad (j=0,1,2,\ldots)
    \end{equation}
    the states reached from $(m,x)$ and $(n,y)$, resp., after $j$ steps.
    Since $h$ is a coalgebra homomorphism, one has
    $h(m_j,x_j)=(n_j,y_j)$ for all $j$. Since $X$ is finite, there
    exist natural numbers $k\geq 0$ and $p>0$ with $x_k=x_{k+p}$. Then
    the eventually periodic stream generated by $x$ is given by
    \[
      (m_1-m_0,m_2-m_1,\ldots, m_k-m_{k-1})(m_{k+1}-m_k,\ldots,
      m_{k+p}-m_{k+p-1})^\omega
    \] 
    Since $h(m_k,x_k)=(n_k,y_k)$ and
    $h(m_{k+p},x_{k+p}) = (n_{k+p},y_{k+p})$, one has $y_k=y_{k+p}$,
    which implies that $y$ generates the stream
    \[
      (n_1-n_0,n_2-n_1,\ldots, n_k-n_{k-1})(n_{k+1}-n_k,\ldots,
      n_{k+p}-n_{k+p-1})^\omega
    \]
    To show that the streams generated by $x$ and $y$ are equivalent,
    it suffices to verify that $m_{k+p}-m_k=n_{k+p}-n_k$, as this
    entails that
    \[
      p\o \sum_{i<p} m_{k+i+1}-m_{k+i} = p\o (m_{k+p}-m_k) = p\o
      (n_{k+p}-n_k) = p\o \sum_{i<p} n_{k+i+1}-n_{k+i}.
    \]
    To prove the desired equation, we compute
    \begin{align*}
      (n_{k+p},y_{k+p}) &= h(m_{k+p},x_{k+p}) \\
      &= h(m_{k+p},x_k)  \\
      &= h(m_{k+p}-m_k+m_k, x_k)  \\
      &= (m_{k+p}-m_k+n_k,y_k) 
    \end{align*}
    where the last equality uses that $h(m_k,x_k)=(n_k,y_k)$ and that
    $h$ is a morphism of $\C$. This implies $n_{k+p}=m_{k+p}-m_k+n_k$.
    
\item We prove that the cocone $(\dash)^\sharp$ is a colimit cocone. Since
  sifted colimits in $\coa\Id$ are formed as in $\C$ and thus as in
  $\Set$, we can apply \autoref{R:new}: we will show that (a) the morphisms $\gamma_X^\sharp$ are
  jointly surjective and (b) given ffg-coalgebras $(TX,\gamma_X)$ and
  $(TY,\gamma_Y)$ and two states $(m,x)\in TX$ and $(n,y)\in TY$
  merged by $\gamma_X^\sharp$ and $\gamma_Y^\sharp$, there exists a zig-zag in
  $\coafr\Id$ connecting the two states. Statement (a) is clear
  because finite stream coalgebras generate precisely the eventually
  periodic streams. For (b), we adapt the argument of the first part
  of our proof and continue to use the notation
  \eqref{eq:reachablestates}. Since $X$ and $Y$ are finite, there
  exist natural numbers $k\geq 0$ and $p>0$ with $x_k=x_{k+p}$ and
  $y_k=y_{k+p}$. As the streams generated by $x$ and $y$ are
  equivalent, one has $m_{k+p}-m_k = n_{k+p}-n_k$. Consider the ffg-coalgebra
  $(TZ, \gamma_Z)$ with $Z=\{z_0,z_1,\ldots,z_{k+p-1}\}$,
  and $\gamma_Z$ defined on the generators by
  \[ 
    \gamma_Z(z_j) = (0,z_{j+1}) \;\;(j<k+p-1) 
    \quad\text{and}\quad
    \gamma_Z(z_{k+p-1}) = (m_{k+p}-m_k,z_k). 
  \] 
  Form the morphisms $g\colon TZ\to TX$ and $h\colon TZ\to TX$ given
  on generators by
  \[ 
    g(z_{j}) = (m_j,x_j) 
    \quad\text{and}\quad 
    h(z_j) = (n_j,y_j)\qquad (j<k+p).
  \] 
  Then $g$ is a coalgebra homomorphism. Indeed, for $j<k+p-1$
  we have
  \begin{align*}
    g(\gamma_Z(z_j)) &= g(0,z_{j+1}) & \text{(def. $\gamma_Z$)}\\
    &= (m_{j+1},x_{j+1}) & \text{(def. $g$)}\\
    &= \gamma_X(m_j,x_j) & \text{(def. $m_{j+1}$,\, $x_{j+1}$)}\\
    &= \gamma_X(g(z_{j})) & \text{(def. $g$)}
  \end{align*}
  and moreover
  \begin{align*}
    g(\gamma_Z(z_{k+p-1})) 
    &= g(m_{k+p}-m_k,z_k) & \text{(def. $\gamma_Z$)}\\
    &= (m_{k+p}-m_k+m_k,x_k) & \text{(def. $g$)}\\
    &= (m_{k+p},x_{k+p})\\
    &= \gamma_X(m_{k+p-1},x_{k+p-1}) & \text{(def. $m_{k+p},\,x_{k+1}$)}\\
    &= \gamma_X(g(z_{k+p-1})). & \text{(def. $g$)}
  \end{align*}
  Analogously for $h$. Thus we have constructed a zig-zag
  \[
    (TX,\gamma_X) \xleftarrow{g} (TZ,\gamma_Z) \xrightarrow{h} (TY,\gamma_Y)
  \]
  in $\coafr \Id$ connecting $(m,x)$ and $(n,y)$, as required.\qed
\end{enumerate}
\doendproof
Observe that every non-empty ffg-coalgebra
$(TX,\gamma_X)$ admits infinitely many coalgebra homomorphisms into
$\phi \Id$. For instance, any constant map into $\phi \Id$ is
one. This shows that, in general, the coalgebra $\phi F$ is not final
w.r.t.~the ffg-coalgebras.

\section{Recap: Elgot Algebras}
\label{sec:elgot}

In this section we briefly recall the notion of an Elgot
algebra~\cite{amv_elgot} and some key results in order to contrast this
with our subsequent development of ffg-Elgot algebras in
Section~\ref{sec:ffgElgot}. Throughout this section we assume the
endofunctor $F\colon \C \to \C$ to be finitary.

Recall from \autoref{sec:four} that an \emph{fp-equation} is a morphism
\[
  e\colon X \to FX + A,
\]
where $X$ is an fp object (of variables) and $A$ an arbitrary
object of \emph{parameters}.

Furthermore, if $A$ carries the structure of an $F$-algebra $a\colon FA \to
A$, then a \emph{solution} of $e$ in $A$ is a morphism $\sol e\colon X \to
A$  such that the square~\eqref{eq:sol} commutes.
\begin{notation}
  We use the following notation for fp-equations: 
  \begin{enumerate}
  \item Given an fp-equation $e\colon X \to FX + A$ and a morphism $h\colon A
    \to B$ we have an fp-equation
    \[
      h \aft e = (\,X \xrightarrow{e} FX + A \xrightarrow{FX + h} FX +
      B\,).
    \]
  \item Given a pair of fp-equations $e\colon X \to FX + Y$ and
    $f\colon Y \to FY + Z$ we combine them into the following fp-equation
    \[
      e \sq f = (\,X+Y \xrightarrow{[e,\inr]} FX +Y \xrightarrow{FX + f}
      FX + FY + Z \xrightarrow{\can + Z} F(X+Y)+Z\,),
    \]
    where $\can = [F\inl,F\inr]\colon FX + FY \to F(X+Y)$ denotes the
    canonical morphism.
  \end{enumerate}
\end{notation}
\begin{notheorembrackets}
\begin{defn}[{\cite{amv_elgot}}]\label{def:props}
  An \emph{Elgot algebra} is a triple $(A,a,\dagger)$ where $(A,a)$ is
  an $F$-algebra and $\dagger$ is an
  operation
  \[
    \frac{e\colon X \to FX + A}{\sol e\colon X \to A}
  \]
  assigning to every fp-equation in $A$ a solution, subject to the
  following two conditions:
  \begin{enumerate}
  \item \emph{Weak Functoriality.} Given a pair of fp-equations
    $e\colon X \to FX + Z$ and $f\colon Y \to FY + Z$, where $Z$ is an fp object,
    and a coalgebra homomorphism $m\colon X \to Y$ for $F(-) + Z$, then for
    every morphism $h\colon Z \to A$ we have
    $\sol{(h \aft f)} \o m = \sol{(h \aft e)}$:
    \[
      \vcenter{
        \xymatrix{
          X \ar[r]^-e \ar[d]_m & FX + Z \ar[d]^{Fm + Z} \\
          Y \ar[r]_-f & FY + Z
        }
      }
      \quad
      \implies
      \quad
      \vcenter{
        \xymatrix@R-1.5pc{
          X \ar[rd]^-{\sol{(h \aft e)}} \ar[dd]_m
          \\
          & A.
          \\
          Y \ar[ru]_-{\sol{(h \aft f)}}
        }
      }
    \]
  \item \emph{Compositionality.} For every pair of fp-equations
    $e\colon X \to FX + Y$ and $f\colon Y \to FY + A$ we have
    \[
      \sol{(\sol f \aft e)} = (X \xrightarrow{\inl} X + Y
      \xrightarrow{\sol{(e\sq f)}} A).
    \]
  \end{enumerate}
\end{defn}
\end{notheorembrackets}
\begin{rem}\label{rem:square_bullet_properties}
  Later we will need the following properties of $\aft$ and
  $\sq\,$: 
  \begin{enumerate}
  \item $t \aft (s \aft e) = (t \o s) \aft e$ for every $e\colon X \to FX +
    A$, $s\colon A \to B$ and $t\colon B \to C$; 
  \item $s \aft (e \sq f) = e \sq\, (s\aft f)$ for every $e\colon X \to FX +
    Y$, $f\colon Y\to FY + A$ and $s\colon A \to B$; 
  \item $(e \sq f) \sq g = (\inl \aft e) \sq (f \sq g)$ for every $e\colon X
    \to FX + Y$, $f\colon Y \to FY + Z$ and $g\colon Z \to FZ + V$. 
  \end{enumerate}
  For the proof of the first two see~\cite[Remark~4.6]{amv_elgot}. The
  remaining one is easy to prove by considering the three coproduct
  components of $X+Y+Z$ separately. We leave this as an exercise for the
  reader.
\end{rem}

Note that, in lieu of weak functoriality, $\dagger$ was previously
required to satisfy (full) functoriality~\cite{amv_elgot};  this
states that for every pair of 
fp-equations $e\colon X \to FX + A$, $f\colon Y \to FY + A$ and a
coalgebra homomorphism $m\colon (X,e) \to (Y,f)$ we have
$\sol f \o m = \sol e\colon X \to A$. However, this makes no
difference:
\begin{lemma}
  Functoriality and Weak Functoriality are equivalent properties of
  $\dagger$. 
\end{lemma}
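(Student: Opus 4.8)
I would prove the two implications separately; the forward direction is immediate, while the converse needs a reduction to finitely presentable parameter objects. \emph{Functoriality implies Weak Functoriality}: given the data of Weak Functoriality --- fp-equations $e\colon X\to FX+Z$, $f\colon Y\to FY+Z$ with $Z$ fp, a coalgebra homomorphism $m\colon X\to Y$ for $F(-)+Z$, and a morphism $h\colon Z\to A$ --- a one-line diagram chase shows that $m$ is also a coalgebra homomorphism $(X,h\aft e)\to(Y,h\aft f)$ for $F(-)+A$, namely $(Fm+A)\o(h\aft e) = (Fm+h)\o e = (FY+h)\o(Fm+Z)\o e = (FY+h)\o f\o m = (h\aft f)\o m$, the middle equality being the Weak Functoriality square. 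Applying (full) Functoriality to this homomorphism and the fp-equations $h\aft e$, $h\aft f$ in $A$ yields $\sol{(h\aft f)}\o m = \sol{(h\aft e)}$, which is precisely the Weak Functoriality conclusion.

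\emph{Weak Functoriality implies Functoriality}: here the key idea is that Weak Functoriality only constrains \emph{fp} parameter objects, so I would reduce the general case to that one, using that every object of $\C$ is a filtered colimit of fp objects. Let $e\colon X\to FX+A$, $f\colon Y\to FY+A$ be fp-equations and $m\colon X\to Y$ a coalgebra homomorphism for $F(-)+A$; the goal is $\sol f\o m = \sol e$. Write $A=\colim_i A_i$ for the canonical filtered diagram $\C_\fp/A\to\C$ with colimit injections $a_i\colon A_i\to A$. Since coproducts commute with filtered colimits we have $FX+A=\colim_i(FX+A_i)$ and $FY+A=\colim_i(FY+A_i)$, and since $X$, $Y$ are fp the hom-functors $\C(X,-)$ and $\C(Y,-)$ preserve all filtered colimits. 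Hence $e$ and $f$ factor through $FX+a_i$ and $FY+a_j$ for suitable $i,j$; by filteredness of $\C_\fp/A$ I would pass to a common index, obtaining $e_1\colon X\to FX+A_k$ and $f_1\colon Y\to FY+A_k$ with $a_k\aft e_1 = e$ and $a_k\aft f_1 = f$. The homomorphism equation $(Fm+A)\o e = f\o m$ then says that the two morphisms $(Fm+A_k)\o e_1$ and $f_1\o m$ from $X$ to $FY+A_k$ agree after post-composition with the injection $FY+a_k$; finite presentability of $X$ therefore gives a connecting morphism $r\colon A_k\to A_l$ in $\C_\fp/A$ after which they already agree. Setting $e_2:=(FX+r)\o e_1$ and $f_2:=(FY+r)\o f_1$ we still have $a_l\aft e_2 = e$ and $a_l\aft f_2 = f$, and now $m$ is a coalgebra homomorphism $(X,e_2)\to(Y,f_2)$ for $F(-)+A_l$, with the parameter object $A_l$ \emph{fp}.

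To finish, I would apply Weak Functoriality to the fp-equations $e_2$, $f_2$ (fp parameter $A_l$), the homomorphism $m$, and the morphism $h:=a_l\colon A_l\to A$, which yields $\sol{(a_l\aft f_2)}\o m = \sol{(a_l\aft e_2)}$, i.e.\ $\sol f\o m = \sol e$. I expect the main obstacle to be the bookkeeping in the second implication: one has to perform three successive factorization/equalization steps --- factoring $e$, factoring $f$, and forcing the coalgebra-homomorphism identity --- and arrange that all three land over a single common fp parameter object. This is exactly what the filteredness of $\C_\fp/A$ together with preservation of its colimit by $\C(X,-)$ and $\C(Y,-)$ deliver; the remaining steps are routine manipulations of coproduct injections.
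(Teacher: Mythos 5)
Your proposal is correct and follows essentially the same route as the paper's proof: the forward direction is the same one-line observation that $m$ becomes a coalgebra homomorphism for $F(-)+A$, and the converse performs the identical reduction — factor $e$ and $f$ through a common fp object of the canonical filtered diagram $\C_\fp/A$, use finite presentability of $X$ to find a connecting morphism after which $m$ becomes a genuine coalgebra homomorphism over an fp parameter object, and then invoke Weak Functoriality with the colimit injection as the morphism into $A$. The only difference is notational (your $a_k, e_1, f_1, r, a_l$ versus the paper's $h, e', f', z, h'$).
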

\proof Functoriality clearly implies Weak Functoriality. In order to
prove the converse, let $e\colon X \to FX + A$, $f\colon Y \to FY + A$
be fp-equations, and let $m\colon (X,e) \to (Y,f)$ be a coalgebra
morphism. Given an algebra $(A,a)$, write $A$ as the filtered colimit
of its canonical diagram $\C_\fp /A$ (cf.~Section~\ref{sec:vars}). The
functor $FX + (-)$ preserves filtered colimits, and so $FX + A$ is the
filtered colimit of the diagram formed by all morphisms
$FX + h\colon FX + Z \to FX + A$, where $h$ ranges over
$\C_\fp/A$. Since $X$ is fp, the morphism $e\colon X \to FX + A$ factors
through one of these morphisms, i.e.~there exists a morphism
$h\colon Z \to A$ with $Z$ fp and $e'\colon X \to FX + Z$ such that
$e = h \aft e'$:
  \[
    \xymatrix{
      X \ar[r]^-e \ar[rd]_-{e'} & FX + A \\
      & FX + Z \ar[u]_{FX + h}
    }
  \]
  Similarly, we have a factorization of $f\colon Y \to FY + A$, and by
  filteredness of the diagram $\C_\fp/A \to \C$, we can assume that the
  same $h\colon Z \to A$ is used. Thus a morphism
  $f'\colon Y \to FY + Z$ is given such that
  $h \aft f' = (FY + h) \o f' = f$. We do not claim that $m$ is a
  coalgebra homomorphism from $(X,e')$ to $(Y,f')$. However, the
  corresponding equation holds when postcomposed by the colimit
  injection $FY + h$:
  \begin{align*}
    (FX + h)\o (Fm + Z) \o e' 
    & = (Fm + A) \o (FX + h) \o e' \\
    & = (Fm + A) \o e \\
    & = f \o m \\
    & = (FY + h) \o f' \o m.
  \end{align*}
  By \autoref{R:new}\ref{R:new:2}, there exists a morphism $h'\colon Z' \to A$ with $Z'$ fp and a
  connecting morphism $z\colon Z \to Z'$ in $\C_\fp/A$, i.e.~$z$ satisfies
  $h' \o z = h$, such that $FY + z$ merges $(Fm + Z) \o e'$ and $f' \o
  m$. It follows that $m$ is a coalgebra homomorphism from $z \aft e'$ to
  $z \aft f'$. Indeed, in the following diagram
  \[
    \xymatrix@C+1pc{
      X \ar[r]^-{e'} \ar[d]_m
      & 
      FX + Z 
      \ar[r]^-{FX + z} 
      \ar[d]^{Fm + Z}
      & 
      FX + Z'
      \ar[d]^{Fm + Z'}
      \ar@{<-} `u[l] `[ll]_{z \aft e'} [ll]
      \\
      Y \ar[r]_-{f'}
      &
      FY + Z
      \ar[r]_-{FY + z}
      &
      FY + Z'
      \ar@{<-} `d[l] `[ll]^-{z \aft f'} [ll]
    }
  \]
  the left-hand square commutes when postcomposed with $FY +
  z$; thus, since the upper and lower parts as well as the right-hand square
  commute, so does the outside, as desired. By Weak Functoriality, we
  thus conclude
  \begin{align*}
    \sol f \o m 
    & = \sol{(h \aft f')} \o m = \sol{((h'\o z) \aft f')}\o m 
      = \sol{(h' \aft (z \aft f'))} \o m \\
    & = \sol{(h' \aft (z\aft e'))} = \sol{((h'\o z)\aft e')} 
      = \sol{(h \aft e')} = \sol e. \tag*{\qed}
  \end{align*}
\doendproof

\begin{examples}
  Let us recall a few examples of Elgot algebras~\cite{amv_elgot}.
  \begin{enumerate}
  \item Iterative $F$-algebras (cf.~Section~\ref{sec:four}):
    the operation $\dagger$ assigning to every equation its unique
    solution satisfies Compositionality and (Weak)
    Functoriality, see \cite[2.15--2.19]{amv_elgot}. It follows that $\rho F$,
    $\theta F$ and $\nu F$ are Elgot algebras. 

  \item Cpo enrichable algebras. Recall that a \emph{complete partial
      order} (\emph{cpo}, for short) is a partially ordered set having
    joins of $\omega$-chains. Cpos form a category $\CPO$ whose
    morphisms are the \emph{continuous} functions, i.e.~functions
    preserving joins of $\omega$-chains. Let $F_0\colon \Set \to \Set$
    be a functor having a \emph{locally continuous} lifting
    $F\colon \CPO \to \CPO$, i.e.~a lifting such that the derived
    mappings $\CPO(X,Y) \to \CPO(FX,FY)$ are continuous for all cpos
    $X$ and $Y$. (For example, every polynomial functor $F_\Sigma$
    associated to the signature $\Sigma$ has a lifting to $\CPO$.)
    
    Suppose further that $a\colon FA \to A$ is an algebra where $A$ is
    a cpo with a least element $\bot$ and $a$ is continuous. Then $A$
    is an Elgot algebra w.r.t.~the operation $\dagger$ assigning to
    an fp-equation its least solution. More precisely, given an fp-equation
    $e\colon X \to F_0X + A$ (in $\Set$), consider $X$ as a cpo with
    discrete order. Then we obtain the following continuous endomap on
    $\CPO(X,A)$, the cpo of continuous functions from $X$ to $A$:
    \[
      h \mapsto [a,A] \o (Fh + A) \o e
    \]
    (cf.~\eqref{eq:sol}), and we let $\sol e$ be its least fixed point
    (which exists by Kleene's fixed point theorem). For details
    see~\cite[3.5--3.8]{amv_elgot}. 
  \item CMS enrichable algebras. A related example is based on
      \emph{complete metric spaces}, i.e.~metric
      spaces in which every Cauchy sequence has a limit. Here one
      considers the category $\CMS$ of complete metric spaces with distances
      in $[0,1]$ and non-expanding maps, i.e.~maps $f\colon X \to Y$ such
      that for every $x,x' \in X$ one has $d_Y(fx,fx') \leq
      d_X(x,x')$. Note that for two complete metric spaces $X$ and $Y$
      the set of non-expanding maps $\CMS(X,Y)$ forms a complete metric
      space with the supremum metric
      \[
        d_{X,Y}(f,g) = \sup\limits_{x \in X} d_Y(f(x),g(x)).
      \]
      Let $F_0\colon \Set \to \Set$ be a functor having a \emph{locally
        contracting} lifting to $\CMS$, i.e.~a lifting $F\colon \CMS \to
      \CMS$ for which there exists some $\eps < 1$ such that for
      all $f, g\colon X \to Y$ in $\CMS$ one has
      \[
        d_{X,Y} (f,g) \leq \eps d_{FX,FY}(Ff,Fg). 
      \]
      (Again, polynomial set functors have locally contracting liftings to
      $\CMS$.)
      
      Now suppose that $a\colon FA \to A$ is a non-empty algebra such that $A$
      carries a complete metric space and $a$ is a non-expanding
      map. Then $A$ is iterative, whence an Elgot algebra. In fact,
      for every equation $e\colon X \to FX + A$ consider $X$ as a discrete
      metric space (i.e.~all distances are $1$) and consider the 
      endofunction on $\CMS(X,A)$ given by
      \[
        h \mapsto [a,A] \o (Fh + A) \o e,
      \]
      which is $\eps$-contracting for the $\eps$ above. Then, by
      Banach's fixed point theorem, this function has a unique fixed
      point, viz.~the unique solution of $e$. For details
      see~\cite[2.8--2.11]{amv_elgot}.
    \item As a concrete instance of the previous point one can obtain
      fractals as solutions of equations. For example, let $A$ be the
      set of closed subsets of the unit interval $[0,1]$ equipped with
      the following binary operation:
      \[
        (C,C') \mapsto \frac{1}{3}C \cup \left(\frac{1}{3}{C'} + \frac{2}{3}\right),
      \]
      where $\frac{1}{3}C = \{\frac{1}{3} c \mid c \in C\}$ etc. Then
      $A$ is an algebra for $F_0X = X \times X$ on $\Set$, and this $F_0$
      has the locally contracting lifting
      $F(X,d) = (X \times X, \frac{1}{3}d_{\max})$, where $d_{\max}$
      denotes the usual maximum metric on the cartesian product. One
      sees that $A$ is an algebra for $F$ when equipped with the
      so-called Hausdorff metric. Hence, it is an Elgot algebra. For
      example, let $X = \{x\}$ and let $e\colon X \to FX + A$ be given by
      $e(x) = (x,x)$. Then $\sol e(x)$ is the well-known Cantor set.
  \end{enumerate}
\end{examples}
We have already mentioned in Section~\ref{sec:four} that the rational fixed
point $\rho F$ is an initial iterative $F$-algebra. Moreover, for
every object $Y$, the rational fixed point $\rho(F(-) + Y)$ is a free
iterative algebra on $Y$. Thus, the object assignment
$Y \mapsto \rho(F(-) + Y)$ yields a monad $R$ on $\C$. 
\begin{theorem}[\cite{amv_elgot}]
  The category of Eilenberg-Moore algebras for the monad $R$ is isomorphic to
  the category of Elgot algebras for $F$. 
\end{theorem}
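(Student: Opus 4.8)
The plan is to construct an isomorphism between $\EM{R}$ and the category of Elgot algebras for $F$ which is the identity on underlying objects of $\C$. The functorial part is then automatic: a $\C$-morphism $h\colon A\to B$ is an $R$-algebra homomorphism if and only if it is an $F$-algebra homomorphism preserving the solutions of all fp-equations, which is precisely the notion of morphism of Elgot algebras. So everything reduces to a natural bijection between $R$-algebra structures $\alpha\colon RA\to A$ and Elgot-algebra structures $(a,\dagger)$ on a fixed object $A$. Throughout I use the description of $RA=\rho(F(-)+A)$ as the colimit of the \emph{filtered} diagram $\coaf(F(-)+A)$ of fp-equations $e\colon X\to FX+A$, together with the fact that the colimit injection $e^\ddagger\colon X\to RA$ of $(X,e)$ is precisely the unique solution of $e$ in the iterative algebra $RA$ (the content of ``$RA$ is the free iterative algebra on $A$''). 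I also use that $FX+(-)$ preserves filtered colimits.

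\emph{The two constructions.} Given an Elgot algebra $(A,a,\dagger)$, the family $(\sol e)_e$ of chosen solutions is a cocone on $\coaf(F(-)+A)$ with apex $A$: compatibility along a coalgebra homomorphism $m\colon(X,e)\to(Y,f)$ is the Functoriality of $\dagger$, equivalently (by the Lemma above) Weak Functoriality. Let $\alpha\colon RA\to A$ be the induced morphism, so that $\alpha\o e^\ddagger=\sol e$ for every fp-equation $e$. Conversely, given $\alpha\colon RA\to A$, put $a\defeq\alpha\o r_A\o F\eta_A\colon FA\to A$, with $r_A\colon F(RA)\to RA$ the $F$-algebra structure of $RA$, and $\sol e\defeq\alpha\o e^\ddagger$ for every fp-equation $e\colon X\to FX+A$; a short diagram chase (using that $e^\ddagger$ solves $e$ in $RA$ and that the unit law for $\alpha$ rewrites $\alpha\o r_A$ as $a\o F\alpha$ after reindexing along $\eta_A$) shows that $\sol e$ is a genuine solution of $e$ for the algebra $a$. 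These assignments are mutually inverse: from $\alpha$ one recovers $\alpha'$ with $\alpha'\o e^\ddagger=\sol e=\alpha\o e^\ddagger$ for all $e$, hence $\alpha'=\alpha$ since the $e^\ddagger$ are jointly epimorphic; from $(a,\dagger)$ one recovers exactly $a$ and $\sol e$.

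\emph{Matching the unit law and Weak Functoriality.} The law $\alpha\o\eta_A=\id_A$ follows by precomposing with an arbitrary $p\colon P\to A$ with $P$ fp: then $\eta_A\o p=(\inr\o p)^\ddagger$ is the colimit injection of the fp-equation $\inr\o p\colon P\to FP+A$, so $\alpha\o\eta_A\o p=\sol{(\inr\o p)}=p$, the last equation because the defining square \eqref{eq:sol} forces any solution of $\inr\o p$ to be $p$; since $A$ is the filtered colimit of the canonical diagram $\C_\fp/A\to\C$, we get $\alpha\o\eta_A=\id_A$, and this argument reverses. Weak Functoriality of the constructed $\dagger$ is immediate from uniqueness of solutions in $RA$: a coalgebra homomorphism $m\colon(X,e)\to(Y,f)$ forces $f^\ddagger\o m=e^\ddagger$, hence $\sol f\o m=\sol e$ (and likewise after twisting the parameters by an $h$).

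\emph{Matching Compositionality and the multiplication law.} Since $X$ is fp and $RA=\colim\coaf(F(-)+A)$ is filtered, every fp-equation $g\colon X\to FX+RA$ factors as $g=f^\ddagger\aft g'$ with $g'\colon X\to FX+Y$ and $f\colon Y\to FY+A$ both fp-equations. Using that $\mu_A$ and $R\alpha$ preserve solutions, that $\mu_A\o\eta_{RA}=\id$ and $R\alpha\o\eta_{RA}=\eta_A\o\alpha$, the colimit injection $g^\ddagger\colon X\to RRA$ satisfies: $\mu_A\o g^\ddagger$ is the solution of $g$ in the iterative algebra $RA$, while $R\alpha\o g^\ddagger=(\alpha\aft g)^\ddagger=(\sol f\aft g')^\ddagger$ (here I used \autoref{rem:square_bullet_properties}(1) and $\alpha\o f^\ddagger=\sol f$). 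Moreover, $RA$ is iterative, hence an Elgot algebra, so Compositionality holds inside $RA$; together with \autoref{rem:square_bullet_properties}(2) this identifies the solution of $g$ in $RA$ with $(g'\sq f)^\ddagger\o\inl$. Post-composing everything with $\alpha$ and using $\alpha\o e^\ddagger=\sol e$, the law $\alpha\o\mu_A=\alpha\o R\alpha$ becomes, on each such $g$, exactly $\sol{(g'\sq f)}\o\inl=\sol{(\sol f\aft g')}$, which is Compositionality of $\dagger$; conversely Compositionality yields the law since the $g^\ddagger$ are jointly epimorphic. This is the main obstacle: establishing that the monad multiplication $\mu$ is ``substitution of equations'' and threading the bookkeeping with $\aft$ and $\sq$ through both directions so that the two Elgot axioms line up with the two Eilenberg--Moore laws; a secondary point of care is checking that the $F$-algebra structure extracted from a given $\alpha$ is one for which the morphisms $\alpha\o e^\ddagger$ are solutions.
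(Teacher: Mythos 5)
The paper does not actually prove this theorem; it imports it from \cite{amv_elgot}, and your proposal reconstructs the standard argument of that reference: the chosen solutions of an Elgot algebra form a cocone over $\coaf(F(-)+A)$ inducing $\alpha\colon RA\to A$; conversely $e^\dagger:=\alpha\o e^\ddagger$; the unit law corresponds to solving the trivial equations $\inr\o p$, the associativity law corresponds to Compositionality via the factorization $g=f^\ddagger\aft g'$ and the identification of $\mu_A\o g^\ddagger$ and $R\alpha\o g^\ddagger$, and (Weak) Functoriality is exactly cocone compatibility. That decomposition, and in particular your treatment of the multiplication law, is the same as in the cited proof and is correct.

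One step is misjustified as written. To see that $\alpha\o e^\ddagger$ is a solution of $e$ for $a:=\alpha\o r_A\o F\eta_A$ you need $\alpha\o r_A=a\o F\alpha$, equivalently $\alpha\o r_A=\alpha\o r_A\o F(\eta_A\o\alpha)$, and this does \emph{not} follow from the unit law $\alpha\o\eta_A=\id_A$ ``after reindexing along $\eta_A$'': the unit law says nothing about $\eta_A\o\alpha$ versus $\id_{RA}$. What gives it is the associativity law: postcompose $r_{RA}\o F\eta_{RA}\colon F(RA)\to RRA$ with $\alpha\o\mu_A$ and with $\alpha\o R\alpha$; using that $\mu_A$ and $R\alpha$ are $F$-algebra homomorphisms, that $\mu_A\o\eta_{RA}=\id$, and that $R\alpha\o\eta_{RA}=\eta_A\o\alpha$, the two composites are $\alpha\o r_A$ and $\alpha\o r_A\o F(\eta_A\o\alpha)$ respectively, and the associativity law makes them equal. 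You flag this verification as a point of care, so the gap is small and entirely fixable, but the law you invoke for it is the wrong one.
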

Thus, in particular, $\rho(F(-) + Y)$ is not only a free iterative
algebra, but it is also a free Elgot algebra on $Y$, whence $\rho F$
is the initial Elgot algebra. 

\section{FFG-Elgot Algebras}
\label{sec:ffgElgot}

The rest of our paper is devoted to studying the fixed point $\phi F$,
the colimit of all ffg-coalgebras for $F$, in its own right and
establish a universal property of it as an algebra. Recall that by a
variety $\C$ we mean a finitary, many sorted variety. That is, $\C$
is (isomorphic to) the category of Eilenberg-Moore algebras for a
finitary monad $T$ on $\Set^S$, where $S$ is a set of sorts. 

\begin{assumption}
  Throughout the rest of the paper we assume that $\C$ is a variety of
  algebras and that $F\colon \C \to \C$ is an endofunctor preserving sifted
  colimits. 
\end{assumption}
\takeout{
This includes, for the monad $T$ representing $\C$, all functors that
are liftings of finitary set functor $F_0$ (i.e.~with a distributive
law of $T$ over $F_0$). Indeed, finitary set functors $F_0$ preserve
all sifted colimits~\cite[Proposition~6.30]{arv}. Since the forgetful
functor $U\colon \C \to \Set$ preserves and reflects sifted colimits,
it follows that every lifting of $F_0$ preserves sifted colimits,
too. Other examples of endofunctors on $\C$ preserving sifted colimits
are: $FX = X + X$, more, generally, any coproduct of functors with the
property, and if $\C$ is \emph{entropic} (i.e.~symmetric monoidal
closed), $FX = X \otimes X$, more generally, any finite tensor product
of functors with this property. Therefore, all (tensor-)polynomial
functors on $\C$ preserve sifted colimits.
}
\begin{examples}
  \begin{enumerate}
  \item For the monad $T$ representing $\C$, all functors that are
    liftings of a finitary functor $F_0$ on $\Set^S$ (via a
    distributive law of $T$ over $F_0$) preserve sifted
    colimits. Indeed, finitary functors $F_0: \Set^S \to \Set^S$
    preserve them~\cite[Proposition~6.30]{arv}. Since the forgetful
    functor $U\colon \C \to \Set^S$ preserves and reflects sifted
    colimits, it follows that every lifting of $F_0$ preserves sifted
    colimits, too.

    The following examples are not liftings of set
    functors.
  \item The functor $FX = X + X$, where $+$ denotes the coproduct of
    $\C$, preserves sifted colimits. More generally, every coproduct
    of sifted-colimit preserving functors preserves them
    too. Similarly for finite products of sifted-colimit preserving
    functors. Thus, all polynomial functors on $\C$ preserve sifted colimits.  
  \item Let $\C$ be an \emph{entropic} variety (see
    e.g.~\cite{DaveyD85}) aka~\emph{commutative} variety~(see
    e.g.~\cite{Linton66}), i.e.~such that the usual tensor product $\otimes$
    (representing bimorphisms) makes it a symmetric monoidal
    closed category. (Examples include sets, vector spaces,
    join-semi\-lattices, or abelian groups.) Then the functor
    $FX = X \otimes X$ preserves sifted colimits. To see this, it
    suffices to show that (a)~$F$ is finitary and (b)~it preserves
    reflexive coequalizers (see \autoref{R:sifted}).  First note that
    since $\C$ is symmetric monoidal closed, we know that each functor
    $X \otimes -$ and $- \otimes X$ is a left adjoint and therefore
    preserves all colimits.

    Ad~(a). Suppose that $D: \D \to \C$ is a
    filtered diagram with colimit injections $a_d: Dd \to A$ for
    $d \in \D$. We need to prove that all
    $a_d \otimes a_d: Dd \otimes Dd \to A \otimes A$ form a colimit
    cocone. That is, for every morphism
    $f: X \to A \otimes A$ with $X$ fp, (i)~there exists some $d \in \D$ and
    $g: X \to Dd \otimes Dd$ with $(a_d \otimes a_d) \cdot g = f$ and
    (ii)~given $g, h: X \to Dd \otimes Dd$ that yield $f$ in this
    way, there exists a morphism $m: d \to d'$ in $\D$ such that
    $Dm \otimes Dm$ merges $g$ and $h$~\cite[Lemma~2.6]{amsw19_1}. 

    To prove (i), we use that
    $- \otimes A$ is finitary to obtain some $d \in \D$ and
    $f': X \to A \otimes Dd$ with $(A \otimes a_d) \cdot f' = f$. Now
    use that $Dd \otimes -$ is finitary to obtain $d' \in \D$ and
    $f'': X \to Dd \otimes Dd'$ with
    $(Dd \otimes a_{d'}) \cdot f'' = f'$. Since $\D$ is filtered, we
    can choose morphisms $m: d \to \bar d$ and $n: d' \to \bar d$
    in $\D$. Let $g = (Dm \otimes Dn)\cdot f''$. Then we have
    \begin{align*}
      (a_{\bar d} \otimes a_{\bar d}) \cdot g
      &=
      (a_{\bar d} \otimes a_{\bar d}) \cdot (Dm \otimes Dn)\cdot f''
      =
        (a_{d} \otimes a_{d'}) \cdot f''\\
      &= (a_d \otimes A) \cdot (Dd \otimes a_{d'}) \cdot f''
      = (a_d \otimes A) \cdot f' = f
    \end{align*}
    as desired. 

    For (ii), use first that $-
    \otimes A$ is finitary and choose some morphism $o: d \to d'$ such
    that
    \[
      (Do\otimes A) \cdot \left((Dd \otimes a_d) \cdot g\right)
      =
      (Do\otimes A) \cdot \left((Dd \otimes a_d) \cdot h\right).
    \]
    It follows that $(Dd' \otimes a_d)$ merges
    $(Do \otimes Dd) \cdot g$ and $(Do \otimes Dd) \cdot h$. Now use
    that $Dd' \otimes -$ is finitary and choose a morphism
    $p:d \to d''$ in $\D$ such that $(Dd' \otimes Dp)$ also merges
    those two morphisms. Finally, use that $\D$ is filtered to choose
    two morphisms $q: d' \to \bar d$ and $r: d'' \to \bar d$ such that
    $q \cdot o = r \cdot p$, and let us call this last morphism $m: d \to
    \bar d$. Then $Dm \otimes Dm$ merges $g$
    and $h$:
    \begin{align*}
      (Dm \otimes Dm) \cdot g
      &= (D(q \cdot o) \otimes D(r \cdot p))\cdot g
        = (Dq \otimes Dr) \cdot (Do \otimes Dp) \cdot g
      \\
      &=
      (Dq \otimes Dr) \cdot (Dd' \otimes Dp) \cdot (Do \otimes Dd)
        \cdot g
      \\
      &= (Dq \otimes Dr) \cdot (Dd' \otimes Dp) \cdot (Do \otimes Dd)
        \cdot h
      \\
      &= (Dm \otimes Dm) \cdot h.
    \end{align*}
    
    Ad~(b). Let $f, g: A \to B$ be a (not necessarily reflexive) pair, and let
    $c: B \to C$ be its coequalizer. Use that all functors
    $- \otimes X$ and $X \otimes -$ preserve coequalizers to see that
    in the following diagram, whose parts commute in the obvious way,
    all rows and columns are coequalizers:
    \[  
      \xymatrix@+1pc{
        A \otimes A
        \ar@<3pt>[r]^-{f \otimes A}\ar@<-3pt>[r]_-{g \otimes A}
        \ar@<3pt>[d]^{A \otimes f}\ar@<-3pt>[d]_{A\otimes g}
        &
        B \otimes A
        \ar[r]^-{c \otimes A}
        \ar@<3pt>[d]^{B \otimes f}\ar@<-3pt>[d]_{B\otimes g}
        &
        C \otimes A
        \ar@<3pt>[d]^{C \otimes f}\ar@<-3pt>[d]_{C\otimes g}
        \\
        A \otimes B
        \ar@<3pt>[r]^-{f \otimes B}\ar@<-3pt>[r]_-{g \otimes B}
        \ar[d]^{A \otimes c}
        &
        B \otimes B
        \ar[r]^-{c \otimes B}
        \ar[d]^{B \otimes c}
        &
        C \otimes B
        \ar[d]^-{C \otimes c}
        \\
        A \otimes C
        \ar@<3pt>[r]^-{f \otimes C}\ar@<-3pt>[r]_-{g \otimes C}
        &
        B \otimes C
        \ar[r]^-{c \otimes C}
        &
        C \otimes C
      }
    \]
    By the `3-by-3 lemma'~\cite[Lemma~0.17]{Johnstone77}, it follows
    that the diagonal yields a coequalizer too, i.e.~$c \otimes c$ is
    a coequalizer of the pair $f \otimes f, g \otimes g$, as desired.
    
  \item Combining the previous argument with induction, we see that
    sifted-colimit preserving functors on an entropic variety $\C$ are
    stable under finite tensor products. Thus, all tensor-polynomial
    functors on $\C$ preserve sifted colimits.
  \end{enumerate}
\end{examples}

Under our assumptions we know that $\phi F$ is a fixed point of
$F$~\cite{Urbat17}, and we will henceforth denote the inverse of its
coalgebra structure by $t\colon F(\phi F) \to \phi F$. The following
is a variation of \autoref{def:props} where the variable objects $X$ are now
restricted to be ffg objects:
\begin{defn}\label{def:eqsol}
  By an \emph{ffg-equation} is meant a morphism $e\colon X \to FX + A$
  where $X$ is an ffg object (of \emph{variables}) and $A$ an arbitrary
    object (of \emph{parameters}). An \emph{ffg-Elgot algebra} is a
    triple $(A,a,\dagger)$ where $(A,a)$ is an $F$-algebra and
    $\dagger$ is an operation
  \[
    \frac{e\colon X \to FX + A}{\sol e\colon X \to A}
  \]
  assigning to every ffg-equation in $A$ a solution (cf.~\eqref{eq:sol}) and satisfying
  Weak Functoriality~\ref{def:props}(1) and
  Compositionality~\ref{def:props}(2) with $X, Y$ and $Z$ restricted
  to ffg objects.
\end{defn}
\begin{rem}\label{R:app}
  \begin{enumerate}
  \item Note that in categories where fp objects are ffg, e.g.~in the
    category of sets or vector spaces, (ordinary) Elgot algebras and
    ffg-Elgot algebras are the same concept. However, in the present
    setting this may not be the case.
    
  \item\label{R:app:2} Since fp-equations have variable objects $X$ such that
    $\C(X,-)$ preserves filtered colimits, one could expect that
    ffg-equations will have $X$ as those objects for which $\C(X,-)$
    preserves sifted colimits. Indeed, that would yield the same
    colimit $\phi F$, as we prove in the Appendix. 
  \item We do not know whether, for ffg-Elgot algebras, Weak
    Functoriality implies Functoriality.\smnote{Do we have a
      counterexample for this?} The proofs of our main results (in
    particular \autoref{prop:phi_is_elgot} and
    \autoref{thm:initialfy}) do not work when Weak Functoriality is
    replaced by Functoriality.
  \end{enumerate}
\end{rem}
\begin{rem}\label{rem:eff}
  In the case where $F\colon \Set^T \to \Set^T$ is a lifting of a
  functor $F_0\colon \Set \to \Set$ (via a distributive law
  $\lambda$), an $F$-algebra is given by a set $A$ equipped with both
  a $T$-algebra structure $\alpha\colon TA \to A$ and an $F_0$-algebra
  structure $a\colon F_0 A \to A$ such that $a$ is a $T$-algebra
  homomorphism, i.e.~one has
  $\alpha \o Ta = a \o F\alpha \o \lambda_A$. Morphisms of
  $F$-algebras are those maps that are both $T$-algebra and
  $F_0$-algebra homomorphisms. Now one may think of ffg-equations and
  their solutions as modelling \emph{effectful iteration}. Indeed, let
  $X_0$ be a finite set of variables and consider any map
  \[
    e_0\colon X_0 \to T(F_0 X_0 + A).
  \]
  This may be regarded as a system of recursive equations with
  variables from $X_0$ and parameters in $A$, where for every recursive
  call a side effect in $T$ might happen. If $(A, \alpha, a)$ is an
  $F$-algebra, a solution of such a recursive
  system should assign to each variable in $X_0$ an element of $A$,
  i.e.~we have a map $\sol e_0\colon X_0 \to A$, such that the square below
  commutes (here we write $+$ for disjoint union):
  \[
    \xymatrix@C+2pc@R-1pc{
      X_0 \ar[r]^-{\sol e_0}
      \ar[dd]_{e_0}
      &
      A 
      \\ 
      & 
      TA 
      \ar[u]_{\alpha}
      \\
      T(F_0 X_0 + A)
      \ar[r]_-{T(F_0 \sol e_0 + A)}
      &
      T(F_0A+A) \ar[u]_{T[a,A]}
    }
  \]
  Indeed, from $e_0$ we may form the map
  \[
    \ol e = (X_0 
    \xrightarrow{e_0} 
    T(F_0 X_0 + A)
    \xrightarrow{\cong}
    TF_0 X_0 \oplus TA
    \xrightarrow{\lambda_X \oplus \alpha}
    F TX_0 \oplus A),
  \]
  where $\oplus$ denotes the coproduct in $\C$, which may be different
  from disjoint union. 
  Then its unique extension $TX_0 \to FTX_0 \oplus A$ to a $T$-algebra
  morphism is an ffg-equation, and a solution $TX_0 \to A$ of this in
  the sense of \autoref{def:eqsol} is precisely the same as an
  extension of a solution for $e_0$ in the above sense. 
\end{rem}
\begin{construction}\label{constr:defsol}
  We aim at proving that $\phi F$ is the initial ffg-Elgot algebra. For
  that we first construct a solution $\sol e\colon X \to \phi F$ for every 
  ffg-equation $e\colon X\to FX+\phi F$.
  Recall that $\phi F = \colim D$ for the inclusion
  $D\colon \coafr F\monoto \coa F$ and denote the colimit injections
  by $\inj c: C \to \phi F$ for every ffg-coalgebra $(C,c)$.
  Thus $FX+\phi F = \colim (FX+D)$ with colimit injections
  $FX+c^\sharp$. Since $X$ is an ffg-object, this sifted colimit is
  preserved by $\C(X,-)$. Thus, the diagram
  \[
    \hat D\colon \coafr F\to \Set,\quad (C\xra{c}FC)\mapsto \C(X,FX+C)
  \]
  has
  \[ \colim \hat D = \C(X,FX+\phi F) \]
  with colimit injections given by postcomposition with $FX+c^\sharp$.

  By \autoref{R:new}\ref{R:new:1}, every ffg-equation
  $e\colon X \to FX + \phi F$ thus factorizes through one of the
  colimit injections $FX + \inj c$, i.e.~for some ffg-coalgebra
  $c\colon C \to FC$ and $w\colon X \to FX + C$ we have the
  commutative triangle below:
  \begin{equation}\label{eq:fac}
    \vcenter{
    \xymatrix{
      X 
      \ar[r]^-e 
      \ar[rd]_w
      & 
      FX + \phi F
      \\
      & FX + C\ar[u]_{FX + \inj c}
      }}
  \end{equation}
  We see that $w$ is an ffg-equation. We combine it with the
  ffg-equation $c$ (having the initial object $0$ as parameter, see
  \autoref{def:eqsol}) to $w \sq c\colon X + C \to F(X+C)$, which is
  an object of $\coafr F$. Finally, we put
  \begin{equation}\label{eq:defsol}
    \sol e = 
    (\,X 
    \xrightarrow{\inl} 
    X + C 
    \xrightarrow{\inj{(w \sq c)}}
    \phi F\,).
  \end{equation}
\end{construction}
We prove below that $\sol e$ is indeed a solution of $e$ in the
algebra $\phi F$ (cf.~\eqref{eq:sol}) and verify some properties used later. 
\begin{lemma}\label{lem:solution}
  The definition of $\sol e$ in~\eqref{eq:defsol} is
  independent of the choice of the factorization~\eqref{eq:fac}, and  $\sol e$ is a
  solution of $e$ in $\phi F$.
\end{lemma}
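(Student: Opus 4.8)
The plan is to prove the two assertions in turn --- first that $\sol e$ is well-defined, then that it satisfies \eqref{eq:sol} --- so that in the second step I may work with any convenient factorization \eqref{eq:fac}. Throughout I will use two standard facts about the colimit $\phi F = \colim D$: every injection $\inj c\colon (C,c)\to (\phi F, t^{-1})$ is an $F$-coalgebra homomorphism, and the $\inj c$ form a cocone, i.e.\ $\inj{c'}\o k = \inj c$ for every morphism $k\colon (C,c)\to (C',c')$ of $\coafr F$.

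For well-definedness I would \emph{not} argue via explicit zig-zags: since $\coafr F$ is sifted but not filtered, the criterion \autoref{R:new}\ref{R:new:2} is unavailable, and it is cleaner to invoke the universal property of $\colim\hat D = \C(X,FX+\phi F)$ (whose injections are $w\mapsto (FX+\inj c)\o w$). Define, for each ffg-coalgebra $(C,c)$, a map $\Phi_{(C,c)}\colon \hat D(C,c)\to \C(X,\phi F)$ by $w\mapsto \inj{(w\sq c)}\o \inl$; the task is to show that $(\Phi_{(C,c)})$ is a cocone over $\hat D$ with vertex $\C(X,\phi F)$. The only point to check is that for a morphism $h\colon (C,c)\to (C',c')$ of $\coafr F$ and any $w\colon X\to FX+C$, writing $w' = (FX+h)\o w$, the morphism $X+h\colon X+C\to X+C'$ is an $F$-coalgebra homomorphism from $(X+C, w\sq c)$ to $(X+C', w'\sq c')$; this is a short computation on the two coproduct summands $X$ and $C$ from the definition of $\sq$ together with $c'\o h = Fh\o c$. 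Granting this, the cocone property of the injections gives $\inj{(w'\sq c')}\o (X+h) = \inj{(w\sq c)}$, and precomposition with $\inl\colon X\to X+C$ (using $(X+h)\o \inl = \inl$) yields $\Phi_{(C',c')}(w') = \Phi_{(C,c)}(w)$, the required compatibility. The universal property then induces $\bar\Phi\colon \C(X,FX+\phi F)\to \C(X,\phi F)$ with $\bar\Phi((FX+\inj c)\o w) = \inj{(w\sq c)}\o \inl$, and evaluating at $e$ through any factorization \eqref{eq:fac} shows $\sol e = \bar\Phi(e)$, independent of the factorization.

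For the solution property I would fix one factorization via $(C,c,w)$ and abbreviate $g := \inj{(w\sq c)}\colon X+C\to \phi F$, so $\sol e = g\o \inl$. From the definition of $\sq$ one reads off its two components $(w\sq c)\o \inl = \can\o (FX+c)\o w$ and $(w\sq c)\o \inr = F\inr\o c$; the latter says $\inr\colon (C,c)\to (X+C, w\sq c)$ is a coalgebra homomorphism, so $g\o \inr = \inj c$. Since $g$ is a coalgebra homomorphism into $(\phi F, t^{-1})$ we have $g = t\o Fg\o (w\sq c)$; precomposing with $\inl$, using $Fg\o \can = [F(g\o \inl), F(g\o \inr)] = [F\sol e, F\inj c]$, then $[F\sol e, F\inj c]\o (FX+c) = [F\sol e, F\inj c\o c]$, and finally $t\o F\inj c\o c = \inj c$ (again since $\inj c$ is a coalgebra homomorphism), we obtain
\[
  \sol e = [\,t\o F\sol e,\ \inj c\,]\o w.
\]
On the other hand, substituting $e = (FX+\inj c)\o w$ into the right-hand side of \eqref{eq:sol}, simplifying $(F\sol e+\phi F)\o (FX+\inj c) = [\inl\o F\sol e,\ \inr\o \inj c]$ and postcomposing with $[t,\phi F]$, gives the same morphism $[\,t\o F\sol e,\ \inj c\,]\o w$. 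Hence \eqref{eq:sol} commutes, so $\sol e$ solves $e$ in $\phi F$.

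The step I expect to be the main obstacle is well-definedness: the natural instinct is to imitate the filtered-colimit argument, which breaks down because $\coafr F$ is only sifted, so one must instead package the single-step compatibility --- essentially a functoriality property of $\sq$ in its left argument --- into a genuine cocone over $\hat D$. Everything else is routine bookkeeping with coproduct injections, the $\sq$ and $\aft$ notation, and keeping the direction of the fixed-point isomorphism straight ($t^{-1}$ as the coalgebra structure on $\phi F$, $t$ as its algebra structure).
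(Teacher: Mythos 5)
Your proof is correct and follows essentially the same route as the paper's: both reduce well-definedness to the single computation that $X+h$ is a coalgebra homomorphism from $w\sq c$ to $\ol w\sq\ol c$ for each morphism $h$ of $\coafr F$, and both derive the solution property from the facts that $\inr\colon (C,c)\to (X+C,w\sq c)$ and $\inj{(w\sq c)}$ are coalgebra homomorphisms. One small correction to your methodological remark: the paper \emph{does} argue via zig-zags, using condition~(b) of \autoref{R:new}\ref{R:new:1} --- the connected-components description of an \emph{arbitrary} colimit of sets, which is exactly equivalent to your cocone/universal-property packaging --- and only the filtered refinement \autoref{R:new}\ref{R:new:2} is unavailable for the sifted colimit at hand.
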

\proof
  \begin{enumerate}
  \item We first show the independence: given another ffg-coalgebra
    $\ol c\colon \ol C\to F\ol C$ and a factorization
    $e = (FX+\ol c^\sharp)\o \ol w$, we prove
    \begin{equation}\label{eq:fact}
      (w\sq c)^\sharp\o \inl = (\ol w \sq \ol c)^\sharp\o \inl.
    \end{equation}
    Recall the category $\el \hat D$ of elements of $\hat D$: its
    objects are triples $(C,c,w)$ where $(C,c)\in \coafr F$ and
    $w\in \hat D(C,c)$, i.e. $w\colon X\to FX+C$, and a morphism into
    $(\ol C, \ol c, \ol w)$ is a coalgebra homomorphism
    $h\colon (C,c)\to (\ol C,\ol c)$ with $(FX+h)\o w = \ol w$.

    Given two factorizations
    $(FX+c^\sharp)\o w = e = (FX+\ol c^\sharp)\o \ol w$, we thus see
    that the colimit injection $FX+c^\sharp$ takes the element $w$ to
    the same value to which the colimit injection $FX+\ol c^\sharp$
    takes $\ol w$. This implies that $w$ and $\ol w$ lie in the same
    connected component of $\el\hat D$. Therefore it suffices to prove
    \eqref{eq:fact} under the assumption that a morphism $h$ from $w$
    to $\ol w$ exists in $\el \hat D$: then that equation holds in the
    whole connected component. Thus, we have the following commutative
    diagram:
\[
\xymatrix{
  &
  X \ar[dl]_ w \ar[dr]^{\ol w}
  \\
  FX+C \ar[rr]^{FX+h} \ar[d]_{FX+c}
  &&
  FX+\ol C \ar[d]^{FX+\ol c}
  \\
  FX+FC \ar[rr]_{FX+Fh}
  &&
  FX+F\ol C
}
\] 
It follows that $X+h$ is a coalgebra homomorphism from $w\sq c$ to $\ol w\sq \ol c$. Indeed, in the following diagram
\[
\xymatrix@C+1em{
  X+C \ar[r]^-{[w,\inr]} \ar[d]_{X+h}
  &
  FX+C \ar[r]^-{FX+c} \ar[d]_{FX+h}
  &
  FX+FC \ar[r]^-\can \ar[d]_{FX+Fh}
  &
  F(X+C) \ar[d]^{F(X+h)}
  \ar@{<-} `u[l] `[lll]_{w\sq c} [lll]
  \\
  X+\ol C \ar[r]_-{[\ol w,\inr]}
  &
  FX+\ol C \ar[r]_-{FX+\ol c}
  &
  FX+F\ol C \ar[r]_-\can
  &
  F(X+\ol C)
  \ar@{<-} `d[l] `[lll]^-{\ol w \sq \ol c} [lll]
}
\]
the left-hand square and the middle one commute by the preceding
diagram, and the right-hand square commutes trivially. Since the
colimit injections $(\dash)^\sharp$ form a compatible family, we obtain
$\inj{(w \sq c)} = \inj{(\ol w \sq \ol c)} \cdot (X + h)$.
Precomposed with $\inl$ this yields the desired equation \eqref{eq:fact}.

\item We show that $e^\dag$ is a solution of $e$ in $\phi F$.

(2a) First note that the following triangle commutes:
\begin{equation}\label{eq:wctriangle}
  \vcenter{
  \xymatrix{
    C \ar[r]^{c^\sharp} \ar[d]_{\inr} & \phi F \\
    X+C \ar[ur]_{(w\sq c)^\sharp}
  }}
\end{equation}
To this end, we just need to verify that $\inr$ is a morphism in $\coafr F$ from $(C,c)$ to $(X+C,w\sq c)$, which is established by the commutative diagram below:
\[
\xymatrix{
C \ar[rrr]^c \ar[dd]_\inr \ar[dr]^{\inr} &&& FC \ar[dd]^{F\inr} \ar[dl]^{\inr} \\
& FX+C \ar[r]^{FX+c} & FX+FC \ar[dr]^{\can} &\\
X+C \ar[ur]^{[w,\inr]} \ar[rrr]_{w\sq c} & & & F(X+C)
}
\]
(2b) The commutative triangle \eqref{eq:wctriangle} together with
$(w\sq c)^\sharp\o \inl = e^\dag$ yield the following commutative
triangle:
\begin{equation}\label{diag:copair}
  \vcenter{
    \xymatrix@C+2em{
      FX+FC \ar[r]^{[Fe^\dag,Fc^\sharp]} \ar[d]_\can & F(\phi F)\\
      F(X+C) \ar[ur]_{F(w\sq c)^\sharp} & 
    }}
\end{equation}
We conclude that the following diagram
\begin{equation}\label{diag:upper-part}
  \vcenter{
    \xymatrix{
      X \ar[rr]^{e^\dag} \ar[d]_w \ar[dr]_\inl && \phi F \\
      FX+C \ar[dd]_{FX+c} & X+C \ar[d]^{w\sq c} \ar[ur]_{(w\sq c)^\sharp} & \\
      & F(X+C) \ar[dr]^{F(w\sq c)^\sharp} & \\
      FX+FC \ar[ur]_{\can} \ar[rr]_{[Fe^\dag, Fc^\sharp]} & & F(\phi F) \ar[uuu]_{t}
    }}
\end{equation}
commutes: the left-hand part follows from the definition of $w\sq c$,
the upper one is the definition of $e^\dag$, the right-hand one uses
that $(w\sq c)^\sharp$ is a coalgebra homomorphism, and the lower one is
the triangle~\eqref{diag:copair}.

We are ready to prove that $e^\dag$ is a solution of $e$, which means
that the outside of the following diagram commutes:
\[
  \xymatrix@C+1em{
    X \ar[rr]^-{e^\dag} \ar[d]_w
    & & \phi F
    \\
    FX+C \ar[r]^-{FX+c} \ar[d]_{FX+c^\sharp}
    &
    FX+FC \ar[d]_{FX+Fc^\sharp}
    \ar[r]^-{[Fe^\dag,Fc^\sharp]}
    &
    F(\phi F) \ar[u]_{t}
    \\
    FX+\phi F
    \ar@{<-} `l[u] `[uu]^e [uu]
    & 
    FX+F(\phi F)
    \ar[l]^-{FX+t}
    \ar[ur]_(.7)*+{\labelstyle [Fe^\dag,F(\phi F)]}
    \ar[r]_-{F\sol e + t}
    &
    F(\phi F)+\phi F
    \ar `r[u] `[uu]_{[t,\phi F]} [uu]
    \ar@{<-} `d[l] `[ll]^-{Fe^\dag+\phi F} [ll]
}
\]
The upper part has just been established in~\eqref{diag:upper-part}. The
left-hand part commutes by~\eqref{eq:fac}, the lower left-hand square
commutes because $c^\sharp$ is a coalgebra homomorphism, and the three
remaining parts commute trivially.\qed
\end{enumerate}
\doendproof
\begin{proposition}\label{prop:phi_is_elgot}
  The algebra $t\colon F(\phi F) \to \phi F$ together with the solution
  operator $\dagger$ from \autoref{constr:defsol} is an ffg-Elgot
  algebra.
\end{proposition}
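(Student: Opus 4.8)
The plan is to verify the two axioms of \autoref{def:eqsol} for the triple $(\phi F, t, \dagger)$, since the fact that $\sol e$ is always a solution, and is independent of the chosen factorisation, has already been settled in \autoref{lem:solution}. Both verifications rest on the device underlying \autoref{constr:defsol}: any morphism with ffg domain into $\phi F$ factors through a colimit injection $\inj c$ of $D\colon\coafr F\monoto\coa F$ (possible by \autoref{R:new}\ref{R:new:1}, as $\phi F$ is a sifted colimit and the domain is ffg), and by \autoref{lem:solution} the associated solution may then be computed from that factorisation via~\eqref{eq:defsol}. Throughout I use the identities of \autoref{rem:square_bullet_properties} and the fact that the injections $(\dash)^\sharp$ form a cocone over $\coafr F$.

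\emph{Weak Functoriality.} Let $e\colon X\to FX+Z$, $f\colon Y\to FY+Z$ be ffg-equations, $m\colon(X,e)\to(Y,f)$ a homomorphism for $F(\dash)+Z$, and $h\colon Z\to\phi F$ arbitrary. Factor $h=\inj c\o g$ with $(C,c)\in\coafr F$ and $g\colon Z\to C$. By \autoref{rem:square_bullet_properties}(1) we have $h\aft e=\inj c\aft(g\aft e)$ and $h\aft f=\inj c\aft(g\aft f)$, so $g\aft e$ and $g\aft f$ are valid factorisations of $h\aft e$ and $h\aft f$ through $(C,c)$, whence $\sol{(h\aft e)}=\inj{((g\aft e)\sq c)}\o\inl$ and $\sol{(h\aft f)}=\inj{((g\aft f)\sq c)}\o\inl$. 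Postcomposing the coalgebra-homomorphism square $(Fm+Z)\o e=f\o m$ with $FY+g$ gives $(Fm+C)\o(g\aft e)=(g\aft f)\o m$; a routine diagram chase through the definition of $\sq$, using naturality of $\can$ (exactly as in part~(1) of the proof of \autoref{lem:solution}), then shows that $m+C$ is a homomorphism $(X+C,(g\aft e)\sq c)\to(Y+C,(g\aft f)\sq c)$ in $\coafr F$. Since $(\dash)^\sharp$ is a cocone, $\inj{((g\aft f)\sq c)}\o(m+C)=\inj{((g\aft e)\sq c)}$; precomposing with $\inl_X$ and using $(m+C)\o\inl_X=\inl_Y\o m$ yields $\sol{(h\aft f)}\o m=\sol{(h\aft e)}$.

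\emph{Compositionality.} Let $e\colon X\to FX+Y$ and $f\colon Y\to FY+\phi F$ be ffg-equations. Factor $f=\inj c\aft w_f$ with $(C,c)\in\coafr F$ and $w_f\colon Y\to FY+C$, so that $\sol f=\inj{(w_f\sq c)}\o\inl$ by~\eqref{eq:defsol}. Then, by \autoref{rem:square_bullet_properties}(1),
\[
  \sol f\aft e=(\inj{(w_f\sq c)}\o\inl)\aft e=\inj{(w_f\sq c)}\aft(\inl\aft e),
\]
which exhibits $\inl\aft e\colon X\to FX+(Y+C)$ as a factorisation of $\sol f\aft e$ through the ffg-coalgebra $(Y+C,w_f\sq c)$, so $\sol{(\sol f\aft e)}=\inj{((\inl\aft e)\sq(w_f\sq c))}\o\inl$. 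Likewise, by \autoref{rem:square_bullet_properties}(2), $e\sq f=e\sq(\inj c\aft w_f)=\inj c\aft(e\sq w_f)$ factors $e\sq f$ through $(C,c)$, so $\sol{(e\sq f)}=\inj{((e\sq w_f)\sq c)}\o\inl_{X+Y}$. Finally, \autoref{rem:square_bullet_properties}(3) identifies $(e\sq w_f)\sq c$ and $(\inl\aft e)\sq(w_f\sq c)$ along the canonical associativity isomorphism $\alpha\colon(X+Y)+C\xrightarrow{\cong}X+(Y+C)$, which is thereby an isomorphism in $\coafr F$; since $(\dash)^\sharp$ is a cocone, $\inj{((\inl\aft e)\sq(w_f\sq c))}\o\alpha=\inj{((e\sq w_f)\sq c)}$, and because $\alpha\o\inl_{X+Y}\o\inl_X=\inl_X$ this gives $\sol{(e\sq f)}\o\inl=\inj{((e\sq w_f)\sq c)}\o\inl_{X+Y}\o\inl_X=\inj{((\inl\aft e)\sq(w_f\sq c))}\o\inl=\sol{(\sol f\aft e)}$, as required.

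I expect the main obstacle to be the coproduct bookkeeping in the last step of Compositionality: one must carefully track how the associativity isomorphism $(X+Y)+C\cong X+(Y+C)$ intertwines the various coproduct injections, so that \autoref{rem:square_bullet_properties}(3) applies verbatim and the two colimit injections are identified after restriction to $X$. Everything else --- chiefly the identity $(Fm+C)\o(g\aft e)=(g\aft f)\o m$ and the homomorphism property of $m+C$ --- consists of routine diagram chases of the kind already performed in \autoref{lem:solution}.
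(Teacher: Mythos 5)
Your proof is correct and takes essentially the same route as the paper's: for Weak Functoriality you factor $h$ through a colimit injection of $\phi F$ and show that $m+C$ is a coalgebra homomorphism between the combined ffg-coalgebras, and for Compositionality you factor $f$, compute both solutions via the independence statement of \autoref{lem:solution}, and identify the two colimit injections using \autoref{rem:square_bullet_properties}. The only cosmetic difference is that you track the associativity isomorphism $(X+Y)+C\cong X+(Y+C)$ explicitly, whereas the paper treats coproducts as strictly associative and applies \autoref{rem:square_bullet_properties}(3) as a literal equality of morphisms.
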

\proof
  \emph{Weak Functoriality.} Suppose that the commutative square below
  and a morphism $h\colon Z\to \phi F$ are given, where $X$, $Y$, and
  $Z$ are ffg objects.
  \[
    \xymatrix{
      X \ar[r]^-e \ar[d]_m & FX + Z \ar[d]^{Fm + Z} \\
      Y \ar[r]_-f & FY + Z
    }
  \]
  Since $Z$ is ffg, the morphism $h$ factorizes through the colimit
  injection $c^\sharp$ of some coalgebra $c\colon C \to FC$ in
  $\coafr F$ as in the triangle below:
  \[
    \xymatrix{
      Z \ar[rd]_{v_0} \ar[r]^-h &  \phi F \\ 
      & C \ar[u]_{c^\sharp} 
    }
  \]
Form the two ffg-equations
\[
  v = v_0\bullet e\colon X\to FX+C \quad \text{and}\quad w =
  v_0\bullet f\colon Y \to FY+C,
\]
and observe that the following diagram commutes:
\[
\xymatrix@C+1em{
  X \ar[r]^-e \ar[d]_m
  &
  FX+Z \ar[r]^-{FX+v_0}
  \ar[d]^{Fm+Z}
  &
  FX+C \ar[d]^{Fm+C}
  \ar@{<-} `u[l] `[ll]_-v [ll]
  \\
  Y \ar[r]_-f
  &
  FY+Z \ar[r]_-{FY+v_0}
  &
  FY+C
  \ar@{<-} `d[l] `[ll]^-w [ll]
}
\]
Consequently, in the following diagram
\[
\xymatrix{
  X+C \ar[r]^-{[v,\inr]}
  \ar[d]_{m+C}
  &
  FX+C \ar[r]^-{FX+c}
  \ar[d]_{Fm+FC}
  & FX+FC \ar[r]^-\can
  \ar[d]^{Fm+FC}
  &
  F(X+C) \ar[d]^{F(m+C)}
  \ar@{<-} `u[l] `[lll]_-{v\sq c} [lll]
  \\
  Y+C \ar[r]_-{[w,\inr]}
  &
  FY+C \ar[r]_-{FY+c}
  &
  FY+FC \ar[r]_-\can & F(Y+C)
  \ar@{<-} `d[l] `[lll]^-{w\sq c} [lll]
}
\]
the left-hand square commutes. The other parts are clearly
commutative, and thus we see that $m+C$ is a coalgebra homomorphism
from $v\sq c$ to $w\sq c$. Therefore
\[
  (v\sq c)^\sharp = (w\sq c)^\sharp\o (m+C),
\]
which yields the desired equation $(h\bullet f)^\dag\o m = (h\bullet e)^\dag$, as shown by the commutative diagram below:
\[
\xymatrix@R-2em@C+1em{
  X \ar `u[r] `[rrd]^-{(h\bullet e)^\dag} [rrd]
  \ar[dd]_m
  \ar[r]^-{\inl}
  &
  X+C \ar[rd]^-{(v\sq c)^\sharp} \ar[dd]^{m+C}
  \\
  &&
  \phi F
  \\
  Y \ar[r]_-{\inl}
  \ar `d[r] `[rru]_-{(h\bullet f)^\dag} [rru]
  &
  Y+C \ar[ru]_-{(w\sq c)^\sharp}
}
\]
\emph{Compositionality.}
\begin{enumerate}
\item Suppose that two ffg-equations $e\colon X\to FX+Y$ and
  $f\colon Y\to FY+\phi F$ are given, and factorize $f$ through some
  colimit injection $FY+c^\sharp$ of $FY+C$:
  \[
    \xymatrix{
      Y \ar[rd]_v \ar[r]^-f & FY+\phi F \\
      & FY+C \ar[u]_{FY+c^\sharp} 
    }
  \]
  Then, by the definition of $\dag$, we have
  \[
    f^\dag = (v\sq c)^\sharp \o \inl.
  \]
  This implies that the ffg-equation
  $f^\dag\bullet e\colon X\to FX+\phi F$ factorizes as follows:
  \[
    \xymatrix{
      X \ar[r]^-e
      \ar[rrd]_-{\inl\bullet e}
      \ar `u[r] `[rr]^-{f^\dag\bullet e} [rr]
      &
      FX+Y \ar[r]^{FX+f^\dag} \ar[rd]^{FX+\inl}
      &
      FX+\phi F
      \\
      && FX+Y+C \ar[u]_{FX+(v\sq c)^\sharp} 
    }
  \]
  Thus, by the definition of $\dag$ again, the solution
  $(f^\dag\bullet e)^\dag\colon X\to \phi F$ of $f^\dag\bullet e$ is
  given by the coproduct injection $\inl\colon X\to X+Y+C$ followed by
  the colimit injection
  \[
    [(\inl\bullet e)\sq (v\sq c)]^\sharp\colon X+Y+C\to \phi F.
  \]
  By Remark \ref{rem:square_bullet_properties}(3) the last morphism is equal to
  $[e\sq (v\sq c)]^\sharp$, thus we obtain:
  \[
    (f^\dag\bullet e)^\dag = (X \xrightarrow{\inl} X+Y+C
    \xrightarrow{[e\sq (v\sq c)]^\sharp} \phi F).
  \]
  \item The equation $e\sq f\colon X+Y\to F(X+Y)+\phi F$ factorizes as follows:
    \[  
      \xymatrix@C+1em{
        X+Y \ar[r]^-{[e,\inl]} 
        \ar@(d,l) [ddrrr]_{e\sq v} 
        &
        FX+Y \ar[r]^-{FX+f} \ar[dr]_{FX+v}
        &
        FX+FY+\phi F \ar[r]^{\can+\phi F}
        &
        F(X+Y)+\phi F
        \ar@{<-} `u[l] `[lll]_-{e\sq f} [lll]
        \\
        & & FX+FY+C \ar[u]_{FX+FY+c^\sharp} \ar[dr]^{\can+C}
        \\
        &&&
        F(X+Y)+C \ar[uu]^{F(X+Y)+c^\sharp}  
      }
    \]
    Therefore, by the definition of $\dag$, we have 
    \[
      (e\sq f)^\dag = (
      X+Y \xrightarrow{\inl} X+Y+C \xrightarrow{[(e\sq v)\sq c]^\sharp}
      \phi F).
    \]
    Precomposing this with the coproduct injection $\inl\colon X\to X+Y$
    proves the desired equality
    \[
      (e\sq f)^\dag\o \inl = [(e\sq v)\sq c]^\sharp \o \inl = (f^\dag\bullet
      e)^\dag.\tag*{\qed}
    \]
  \end{enumerate}
\doendproof
\begin{defn}
  A \emph{morphism of ffg-Elgot algebras} from $(A,a,\dagger)$ to
  $(B,b,\ddagger)$ is a morphism $h\colon A \to B$ in $\C$
  \emph{preserving solutions}, i.e.~for every ffg-equation
  $e\colon X \to FX + A$ we have
  \[
    (h \aft e)^\ddagger = h\o e^\dag.
  \]
\end{defn}
Identity morphisms are clearly ffg-Elgot algebra morphisms, and
morphisms of ffg-Elgot algebra compose. Therefore ffg-Elgot algebras form
a category, which we denote by
\[
  \ffgElgot F.
\]
\begin{lemma}\label{lem:solpres}
  Morphisms of ffg-Elgot algebras are $F$-algebra homomorphisms.
\end{lemma}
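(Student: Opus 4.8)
The plan is to prove $h\circ a = b\circ Fh\colon FA\to B$ by testing both composites against morphisms out of ffg objects. As recalled in \autoref{sec:vars}, $FA$ is the sifted colimit of the canonical diagram $\C_\ffg/FA\to\C$, so the morphisms $u\colon Z\to FA$ with $Z$ an ffg object are collectively epic. Hence it suffices to show $h\circ a\circ u = b\circ Fh\circ u$ for every such $u$.

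Fix an ffg object $Z$ and $u\colon Z\to FA$. First I would factor $u$ through the ffg-approximation of $A$: since $A$ is the sifted colimit of $\C_\ffg/A\to\C$ and $F$ preserves sifted colimits, $FA$ is the sifted colimit of the objects $FY$ for $(Y,y)\in\C_\ffg/A$, with cocone morphisms $Fy$; as $Z$ is ffg, $\C(Z,-)$ preserves this colimit, so $u = (Z\xrightarrow{u'}FY\xrightarrow{Fy}FA)$ for some ffg object $Y$, some $y\colon Y\to A$ and some $u'\colon Z\to FY$. Now the key step is to consider the ffg-equation $\hat e\colon Z+Y\to F(Z+Y)+A$ which on the summand $Z$ is $Z\xrightarrow{u'}FY\xrightarrow{F\inr}F(Z+Y)\xrightarrow{\inl}F(Z+Y)+A$ and on the summand $Y$ is $Y\xrightarrow{y}A\xrightarrow{\inr}F(Z+Y)+A$. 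Since $\hat e^\dagger$ is a solution of $\hat e$, it makes \eqref{eq:sol} commute; precomposing that square with the injection $Y\to Z+Y$ forces $\hat e^\dagger\circ\inr = y$ (even though solutions need not be unique), and then precomposing with the injection $Z\to Z+Y$ and using functoriality of $F$ yields $\hat e^\dagger\circ\inl = a\circ Fy\circ u' = a\circ u$.

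For the other side, I would compute $h\aft\hat e = (F(Z+Y)+h)\circ\hat e$, which on $Z$ is again $\inl\circ F\inr\circ u'$ and on $Y$ is $\inr\circ(h\circ y)$; that is, $h\aft\hat e$ is the equation of exactly the same shape as $\hat e$ but built over $B$, with $h\circ y$ in place of $y$. The identical computation, now carried out in $B$, therefore gives $(h\aft\hat e)^\ddagger\circ\inl = b\circ F(h\circ y)\circ u' = b\circ Fh\circ u$. On the other hand, $h$ preserves solutions, so $(h\aft\hat e)^\ddagger = h\circ\hat e^\dagger$ and thus $(h\aft\hat e)^\ddagger\circ\inl = h\circ a\circ u$. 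Comparing the two expressions for $(h\aft\hat e)^\ddagger\circ\inl$ gives $h\circ a\circ u = b\circ Fh\circ u$, as required.

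The crux is the choice of equation. One cannot simply plug $FA$ into an equation, since $FA$ need not be an ffg object, and a ``constant'' equation with parameter map $a\circ u$ only produces tautologies under $h$. The device that makes it work is the two-block variable object $Z+Y$: the block $Z$ carries the recursive unfolding $u'\colon Z\to FY$ while the block $Y$ is used purely as a parameter carrier for $y$, and it is precisely the recursion step that forces the algebra operation $a$ (resp.\ $b$) to surface in $\hat e^\dagger\circ\inl$ (resp.\ $(h\aft\hat e)^\ddagger\circ\inl$). Notably, the argument uses neither Weak Functoriality nor Compositionality of $\dagger$ — only that $\dagger$ assigns genuine solutions, that $h$ preserves them, and the standing assumption that $F$ preserves sifted colimits.
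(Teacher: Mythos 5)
Your proof is correct and is essentially the argument the paper invokes: the paper's proof simply defers to \cite[Lemma~4.2]{amv_elgot} with the single modification of writing $A$ (hence $FA$) as a sifted colimit of ffg objects, and your two-block equation on $Z+Y$ (which is ffg, since ffg objects are closed under finite coproducts), with one block doing a recursion step through $u'\colon Z\to FY$ and the other carrying $y$ as a parameter, is exactly that standard device spelled out in full. Nothing is missing.
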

\begin{proof}
  This is completely analogous to the proof
of~\cite[Lemma~4.2]{amv_elgot}. The only small modification is needed
at the beginning of the proof as follows:

\medskip
Let $\C_\ffg\mathord{/}A$ be the slice category of all arrows $q\colon X\to A$ with $X$ ffg. Since $\C$ is a variety, $A$ is the sifted colimit of the diagram $D_A\colon \C_\ffg\mathord{/}A\to \C$ given by $(q\colon X\to A)\mapsto X$.

\medskip
The remainder of the proof is identical. 
\end{proof}
Note that the converse of the above lemma fails in general. In fact, \cite[Example~4.4]{amv_elgot}
exhibits an \mbox{(ffg-)Elgot} algebra for the identity functor on $\Set$ 
and an algebra homomorphism on it which is not solution-preserving.  
\begin{theorem}\label{thm:ini}
  The triple $(\phi F, t, \dagger)$ is the initial
  ffg-Elgot algebra for $F$. 
\end{theorem}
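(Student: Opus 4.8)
The plan is to show two things: first, that $(\phi F, t, \dagger)$ is indeed an ffg-Elgot algebra (done in \autoref{prop:phi_is_elgot}), and second, that for any ffg-Elgot algebra $(A,a,\ddagger)$ there is a \emph{unique} ffg-Elgot algebra morphism $h\colon \phi F \to A$. Since ffg-Elgot algebra morphisms are $F$-algebra homomorphisms by \autoref{lem:solpres}, and $\phi F$ is a colimit of the diagram $D\colon \coafr F \to \coa F$ with injections $\inj c\colon C \to \phi F$ whose coalgebra structures invert to algebra structures, the natural candidate for $h$ is forced: on the component $C$ of an ffg-coalgebra $(C,c)$, the value $h \o \inj c$ must be the solution $(\inm_A \aft c)^\ddagger$ of the ffg-equation $c\colon C \to FC \cong FC + A$ in the algebra $A$ — that is, the (co)algebra-to-algebra map induced by $c$ via $\ddagger$, where I compose $c$ with the coproduct injection $\inl\colon FC \to FC+A$. (I will need to check this family is compatible with the diagram morphisms of $\coafr F$, i.e.\ forms a cocone; this is exactly where Weak Functoriality of $\ddagger$ is used, applied to a coalgebra homomorphism $k\colon (C,c)\to(D,d)$ in $\coafr F$, with the parameter object being $0$.) This defines $h$ uniquely as a morphism in $\C$.

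Next I would verify that this $h$ is an ffg-Elgot algebra morphism, i.e.\ preserves solutions: for every ffg-equation $e\colon X \to FX + \phi F$ we must show $(h\aft e)^\ddagger = h \o \sol e$. Here I unfold the definition of $\sol e$ from \autoref{constr:defsol}: factor $e$ through $FX + \inj c$ as $e = (FX + \inj c)\o w$ for some ffg-coalgebra $(C,c)$ and $w\colon X \to FX+C$, so that $\sol e = \inj{(w\sq c)}\o \inl$. Then $h \o \sol e = h \o \inj{(w\sq c)}\o \inl = (\inl_A \aft (w\sq c))^\ddagger \o \inl$, and by Compositionality of $\ddagger$ (applied to the pair $w$ and $\inl_A\aft c$, noting $h\o \inj c = (\inl_A \aft c)^\ddagger$) this equals $((\inl_A\aft c)^\ddagger \aft w)^\ddagger$. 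On the other hand $h\aft e = (FX+h)\o(FX+\inj c)\o w = (FX + (h\o\inj c))\o w = (h\o\inj c)\aft w = (\inl_A\aft c)^\ddagger \aft w$, so $(h\aft e)^\ddagger = ((\inl_A\aft c)^\ddagger \aft w)^\ddagger$ as well. Both sides agree, giving solution-preservation.

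Finally, uniqueness: any ffg-Elgot algebra morphism $g\colon \phi F \to A$ is in particular an $F$-algebra homomorphism, and it must preserve the solutions of the ffg-equations $c\colon C \to FC \cong FC+\phi F$ that define the colimit injections. Concretely, since each $\inj c\colon C \to \phi F$ is itself the solution in $\phi F$ of the ffg-equation $c$ viewed with parameters in $\phi F$ (one checks $\inj c = \sol{(\inl \aft c)}$ directly from \autoref{constr:defsol}, using $c$ itself as its own factorization), solution-preservation forces $g\o \inj c = (g\aft(\inl\aft c))^\ddagger = (\inl_A\aft c)^\ddagger = h \o \inj c$ for every $(C,c)$ in $\coafr F$; since the $\inj c$ are jointly epic (being colimit injections), $g = h$.

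The main obstacle I anticipate is the cocone/well-definedness check for $h$: showing that the family $C \mapsto (\inl_A\aft c)^\ddagger$ respects the morphisms of $\coafr F$. This rests on Weak Functoriality, but one must be careful that the parameter object there is allowed to be the initial object $0$ (not just an ffg object), and that the identification $FC \cong FC+0$ behaves well under the coproduct constructions $\aft$ and $\sq$; the bookkeeping with $\inl$-injections and the identities in \autoref{rem:square_bullet_properties} is where the proof could go wrong if done carelessly. A secondary subtlety is confirming that $\inj c$ really is the $\dagger$-solution of $c$ (with $\phi F$-parameters) in $\phi F$, which is needed for uniqueness; this should follow from \autoref{lem:solution} by taking the trivial factorization $w = \inl\o c$ (or observing $c \sq c$ collapses appropriately), but it warrants an explicit check.
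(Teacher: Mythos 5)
Your proposal is correct and follows essentially the same route as the paper's proof: the candidate morphism $h$ is induced by the cocone $(C,c)\mapsto (i_A\aft c)^\ddagger$ (cocone property via Weak Functoriality with parameter object $0$), solution-preservation is checked by factoring $e$ through a colimit injection and combining Compositionality with \autoref{rem:square_bullet_properties}, and uniqueness rests on the identity $c^\sharp = (i_{\phi F}\aft c)^\dagger$, which the paper verifies exactly as you anticipate (showing $\inl\colon C\to C+C$ is a coalgebra homomorphism into $(\inl\o c)\sq c$). The subtleties you flag at the end are indeed the points where the paper spends its diagram-chasing effort, and they go through.
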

\proof
  Let $(A,a,\ddag)$ be an ffg-Elgot algebra. For the initial object
  $0$ we denote by $i_A\colon 0\to A$ the unique morphism.
  \begin{enumerate}
  \item We obtain a cocone of the diagram
    \[ \coafr F \monoto \coa F\xra{U} \C, \] where $U$ is the
    forgetful functor, as follows: to every ffg-coalgebra
    $c\colon C\to FC$ assign the solution
    \[ (i_A\bullet c)^\ddag\colon C\to A \] of the ffg-equation
    $i_A\bullet c\colon C\to FC+A$. Indeed, given a coalgebra
    homomorphism $m\colon (C,c) \to (C',c')$ in $\coafr F$,
    Weak Functoriality applied to $h=i_A$ yields
    \[
      (i_A\bullet c)^\ddag = (C \xrightarrow{m} C'
      \xrightarrow{(i_A\bullet c')^\ddag} A).
    \]
    Since $\phi F$ is the colimit of the embedding
    $\coafr F\monoto \coa F$ and since $U$ preserves colimits, there
    exists a unique morphism $h\colon \phi F \to A$ in $\C$ such that
    the following triangles
    \[
      \xymatrix{
        C \ar[d]_{c^\sharp} \ar[dr]^{(i_A\bullet c)^\ddag}  \\
        \phi F \ar@{-->}[r]_h & A
      }
    \]
    commute for all ffg-coalgebras $c\colon C\to FC$.

  \item We prove that $h$ is solution-preserving. Given an
    ffg-equation $e\colon X\to FX+\phi F$, factorize $e$ through one
    of the colimit injections $FX+c^\sharp$ of $FX+\phi F$:
    \[
      \xymatrix{
        X \ar[rd]_v \ar[r]^-e & FX+\phi F \\
        & FX+C \ar[u]_{FX+c^\sharp} 
      }
    \]
    Since $e=c^\sharp\bullet v$, Remark
    \ref{rem:square_bullet_properties}(1) and the definition of $h$
    yield
    \[
      (h\bullet e)^\ddag
      =
      [ h\bullet (c^\sharp\bullet v) ]^\ddag
      =
      [ (h\o c^\sharp)\bullet v ]^\ddag
      =
      [ (i_A\bullet c)^\ddag \bullet v ]^\ddag.
    \]
    The last morphism is, due to Compositionality, equal to
    \[
      [ v\sq (i_A\bullet c) ]^\ddag \o \inl.
    \]
    Thus, it remains to verify that $h\o e^\dag$ is the same
    morphism. From $e=c^\sharp\bullet v$ the definition of $\dag$ yields
    $e^\dag = (v\sq c)^\sharp\o \inl$ and we get
    \[
      h\o e^\dag
      =
      h \o (v\sq c)^\sharp\o \inl
      =
      [ i_A\bullet (v\sq c)]^\ddag \o \inl
      =
      [ v\sq (i_A\bullet c) ]^\ddag\o \inl,
    \]
    where the last step uses
    Remark~\ref{rem:square_bullet_properties}(2).

  \item It remains to prove the uniqueness of $h$. Thus suppose that
    another solution-preserving morphism $g\colon \phi F\to A$ is
    given. It is sufficient to prove
    \[
      g\o c^\sharp = h\o c^\sharp \quad\text{for all ffg-coalgebras $c\colon
        C\to FC$.}
    \]
    Form the ffg-equation
    $i_{\phi F}\bullet c = \inl \o c\colon C\to FC+\phi F$. Then it is
    easy to verify that the left coproduct injection
    $\inl\colon C\to C+C$ is a coalgebra homomorphism from $(C,c)$ to
    $(C+C,\ol c)$ where $\ol c = (\inl\o c)\sq c$.
    Therefore, the compatibility of the colimit injections
    $(\dash)^\sharp$ yields $c^\sharp = \ol c^\sharp\o \inl$.
    Now $i_{\phi F}\bullet c$ factorizes as follows:
    \[ 
      \xymatrix@C+2em{
        C \ar[dr]_{\inl\o c} \ar[r]^-{i_{\phi F}\bullet c} & FC+\phi F \\
        & FC+C \ar[u]_{FC+c^\sharp}
      }
    \]
    Therefore the definition of $\dag$ yields
    \[
      (i_{\phi F}\bullet c)^\dag
      =
      ((\inl\o c)\sq c)^\sharp \o \inl
      =
      \ol c^\sharp \o \inl
      =
      c^\sharp. 
    \]
    Since $g$ preserves solutions, using Remark
    \ref{rem:square_bullet_properties}(1), and that 
    $g\o i_{\phi F}= i_A\colon 0\to A$, we thus get
    \[
      g\o c^\sharp
      =
      g\o (i_{\phi F}\bullet c)^\dag
      =
      (g\bullet (i_{\phi F}\bullet c))^\ddag
      =
      ((g\cdot i_{\phi F})\bullet c))^\ddag
      =
      (i_A\bullet c)^\ddag = h\o c^\sharp
    \]
    as required. This concludes the proof.\qed
  \end{enumerate}
\doendproof

The following result is the key to constructing free ffg-Elgot
algebras. In the case where $\C_\ffg = \C_\fp$, hence where ffg-Elgot
algebras agree with ordinary ones, we thus obtain a new result about
ordinary Elgot algebras.

\begin{theorem}\label{thm:initialfy}
Let $a\colon FA\to A$ be an $F$-algebra, $Y$ a free object of $\C$,
and $h\colon Y\to A$ a morphism. Then there is a bijective correspondence between
\begin{enumerate}
\item solution operators $\dag$ such that $(A,a,\dag)$ is an ffg-Elgot algebra for $F$, and
\item solution operators $\ddag$ such that $(A,[a,h],\ddag)$ is an ffg-Elgot algebra for $F(-)+Y$.
\end{enumerate}
\end{theorem}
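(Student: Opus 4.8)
The plan is to exhibit two mutually inverse translations between solution operators, exploiting that $Y$ is free. Write $G:=F(-)+Y$, let $\inr\colon A\to Y+A$ be the coproduct injection and $[h,A]\colon Y+A\to A$ the copairing $[h,\id]$. For a $G$-equation $e\colon X\to GX+A=FX+Y+A$ put $e^\flat:=(FX+[h,A])\o e\colon X\to FX+A$, an ffg-equation for $F$; for an ffg-equation $\bar e\colon X\to FX+A$ put $\bar e^\natural:=(FX+\inr)\o\bar e\colon X\to FX+Y+A$, a $G$-equation. Since $[h,A]\o\inr=\id$ we have $(\bar e^\natural)^\flat=\bar e$. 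Given $\dagger$ as in~(1), define $\Phi(\dagger)=\ddagger$ by $e^\ddagger:=(e^\flat)^\dagger$; given $\ddagger$ as in~(2), define $\Psi(\ddagger)=\dagger$ by $\bar e^\dagger:=(\bar e^\natural)^\ddagger$. The identity $(\bar e^\natural)^\flat=\bar e$ gives $\Psi\o\Phi=\id$ at once.

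Step one is to show that $\Phi$ and $\Psi$ are well defined, i.e.\ that they send~(1) to~(2) and~(2) to~(1). That $(e^\flat)^\dagger$ is a solution of $e$ with respect to $[a,h]$ (and $(\bar e^\natural)^\ddagger$ a solution of $\bar e$ with respect to $a$) is a short diagram chase: expand the solution square~\eqref{eq:sol} of $e^\flat$ relative to $a$ and rewrite it using $[a,A]\o\inr=\id$ into the solution square of $e$ relative to $[a,h]$. For the two axioms of $\ddagger=\Phi(\dagger)$, the key observation is that a $G$-equation $X\to FX+Y+Z$ with ffg ``constant part'' $Z$ is, after a finitariness reduction, an $F$-equation whose parameter object has absorbed a finitely generated piece of~$Y$: since $X$ is fp and $Y$ is the filtered colimit of its ffg free subobjects, any such equation factors through $FX+Y'+Z$ for some ffg $Y'\hookto Y$; as $Y'\hookto Y$ is a split monomorphism, a $G(-)+Z$-coalgebra homomorphism between two such equations restricts to an $F(-)+(Y'+Z)$-coalgebra homomorphism, and $[h|_{Y'},h']\aft e'$ (for the factored equation $e'$) equals $\flat(h'\aft e)$. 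Weak Functoriality and Compositionality of $\ddagger$ are then read off from the corresponding properties of $\dagger$ (for Compositionality one additionally uses \autoref{rem:square_bullet_properties}). The well-definedness of $\dagger=\Psi(\ddagger)$ is the mirror image: an $F$-equation is embedded into a $G$-equation by $\natural$ without changing its parameters, an $F(-)+Z$-coalgebra homomorphism becomes a $G(-)+Z$-coalgebra homomorphism, and the $\dagger$-axioms drop out of the $\ddagger$-axioms.

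It remains to prove $\Phi\o\Psi=\id$, which I expect to be the main obstacle: because the free object $Y$ is not itself ffg, neither Elgot axiom can be applied to it directly. Concretely, one must show that for a $G$-Elgot structure $\ddagger$ and every $G$-equation $e\colon X\to FX+Y+A$ one has $e^\ddagger=\big((e^\flat)^\natural\big)^\ddagger$; that is, rerouting each $Y$-output $y$ of $e$ to the parameter $h(y)\in A$ leaves the $\ddagger$-solution unchanged. The plan: by finitariness write $e=\iota\aft e_2$ with $e_2\colon X\to GX+A''$ ($A''$ ffg, $\iota\colon A''\to A$), where moreover $e_2=(FX+j+A'')\o e_0$ factors through some $e_0\colon X\to FX+Y'+A''$ with $Y'$ ffg and $j\colon Y'\hookto Y$. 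Form the ffg-equation $\hat e_0\colon X+(Y'+A'')\to G(X+(Y'+A''))+A''$ that regards the $Y'$- and $A''$-outputs of $e_0$ as fresh variables, together with $f\colon Y'+A''\to G(Y'+A'')+A$ that sends the $Y'$-variables into the $Y$-summand of $G$ along $j$ and the $A''$-variables to themselves as parameters. A direct computation of solutions forced by~\eqref{eq:sol} alone gives $f^\ddagger=[h|_{Y'},\iota]$, so that $f^\ddagger\aft\hat e_0=(e^\flat)^\natural$; hence Compositionality of $\ddagger$ yields $\big((e^\flat)^\natural\big)^\ddagger=(\hat e_0\sq f)^\ddagger\o\inl$.

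To finish, one checks that $\inl\colon X\to X+(Y'+A'')$ is a $G(-)+A''$-coalgebra homomorphism from $e_2$ to the parameter-reduced copy of $\hat e_0\sq f$ — on both sides the images of the $Y'$-outputs land in the $Y$-summand of $G$ and the $A''$-outputs in the parameters. Thus Weak Functoriality of $\ddagger$, applied with reindexing $\iota$ and with \autoref{rem:square_bullet_properties}(2) used to rewrite $\iota\aft(\hat e_0\sq(-))=\hat e_0\sq(\iota\aft(-))$, gives $(\hat e_0\sq f)^\ddagger\o\inl=e^\ddagger$. Chaining this with the previous identity yields $\big((e^\flat)^\natural\big)^\ddagger=e^\ddagger$, hence $\Phi\o\Psi=\id$, and $\Phi$, $\Psi$ are the asserted mutually inverse bijections.
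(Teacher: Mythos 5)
Your two translations are exactly the ones the paper uses (your $e^\flat$ is the paper's $e_h$ from \eqref{eq:eh}, your $\bar e^\natural$ is the paper's $\ol e$ from \eqref{eq:barr}), and your treatment of the hard direction $\Phi\circ\Psi=\id$ --- an auxiliary equation $f$ whose solution is forced by the solution square alone to be a copairing of $h$ with a colimit injection, followed by Compositionality and then Weak Functoriality along the reindexing $\iota$ --- is essentially the paper's step (3b) in Subsection~\ref{S:partA}. Where you genuinely diverge is in how a non-ffg free $Y$ is handled: the paper first proves the theorem for ffg $Y$ and then lifts it to arbitrary free $Y$ via an intermediate three-way bijection with \emph{compatible families} of ffg-Elgot algebras indexed by the canonical filtered diagram $Y=\colim Y_i$ of ffg objects (Theorem~\ref{thm:corr}, Lemmas~\ref{lem:elgot_to_compfamalg}--\ref{lem:compfamalg_to_elgotfy}), whereas you work in a single pass, factoring each equation through some ffg $Y'\hookto Y$ (and through an ffg approximation of $A$) at the moment an Elgot axiom must be invoked. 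Your organization is more direct and avoids the compatible-family bookkeeping; the paper's modularization buys a reusable correspondence and cleanly isolates the one place where freeness of $Y$ enters.

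Three points need attention before this is complete. (i) The verification that $\Phi(\dag)$ satisfies Compositionality --- by far the longest computation in the paper's ffg case, part (1c) of Subsection~\ref{S:partA}, with the auxiliary equations $\hat f$ and $\hat f_0$ --- is dispatched in one sentence; for general free $Y$ the paper's $\hat f$ has variable object $Y+Z$, which is no longer ffg, so that argument must be redone with $Y'+Z$ for an ffg $Y'$ through which the relevant equations factor. This works, but it is not automatic and is where most of the labour sits. (ii) Your $\hat e_0$ is mistyped: for $f^\ddag\aft\hat e_0=(e^\flat)^\natural$ and for $\hat e_0\sq f$ to typecheck you need $\hat e_0\colon X\to FX+Y+(Y'+A'')$, i.e.\ the $Y'$- and $A''$-outputs of $e_0$ rerouted into the parameter object $Y'+A''$, not an equation with domain $X+(Y'+A'')$. (iii) The claim that $Y'\hookto Y$ is a split monomorphism, which you use to restrict coalgebra homomorphisms to the factored equations, can fail in the many-sorted setting when infinitely many sorts are inhabited (no retraction of sorted sets exists onto a finite subobject); replace it by the standard filtered-colimit merging argument (as in the paper's proof that Functoriality and Weak Functoriality coincide, and in Lemma~\ref{lem:compfamalg_to_elgotfy}), which works unconditionally.
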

\begin{rem}\label{rem:passages} The correspondence is given as follows:
\begin{enumerate}
\item For every ffg-Elgot algebra $(A,a,\dag)$ for $F$, we define a
  solution operator $\ddag$ w.r.t. $F(-)+Y$ as follows. Given $e\colon
  X\to FX+Y+A$, put
  \begin{equation}\label{eq:eh}
    e_h = (X\xra{e} FX+Y+A \xra{FX+[h,A]} FX+A)
  \end{equation}
  and 
  \[ e^\ddag \;:=\; e_h^\dag. \]  
\item Conversely, for every ffg-Elgot algebra $(A,[a,h],\ddag)$ for
  $F(-)+Y$, we define a solution operator $\dag$ w.r.t. $F$ as
  follows. Given an ffg-equation $e\colon X\to FX+A$, put
  \begin{equation}\label{eq:barr}
    \ol e = (X\xra{e} FX+A \xra{[\inl, \inr]} FX+Y+A)
  \end{equation}
  and 
  \[ e^\dag \;:=\; \ol e^\ddag. \]
\end{enumerate}
We will show that these two constructions are mutually inverse and yield the desired
bijective correspondence.
\end{rem}
In the next two subsections we will present the proof of
Theorem~\ref{thm:initialfy}. We will establish this result in two
steps: first we prove it for ffg objects $Y$ and then, using the first
step, for arbitrary free objects. Readers who would like to skip the
proof on first reading could jump straight to \autoref{S:free}.

\subsection{Proof of Theorem~\ref{thm:initialfy} for the case where $Y$ is an ffg object}\label{S:partA}
Suppose that $Y$ is an ffg object.
\begin{enumerate}
\item We prove that $(A,[a,h],\ddag)$ is an ffg-Elgot algebra whenever $(A,a,\dag)$ is.

\medskip (1a) Given an ffg-equation $e\colon X\to FX+Y+A$, then $e^\ddag$ is a solution, as shown by the diagram below:
\[
  \xymatrix{
    X \ar[rrr]^{e^\ddag=e_h^\dag} \ar[dd]_e \ar[dr]^{e_h}
    & & &
    A
    \\
    &
    FX+A \ar[r]^{Fe_h^\dag+A}
    &
    FA+A \ar[ur]^{[a,A]}
    \\
    FX+Y+A  \ar[ur]_(.6)*+{\labelstyle FX+[h,A]} \ar[rrr]_{Fe^\ddag + Y +A}
    & & &
    FA+Y+A \ar[ul]^(.6)*+{\labelstyle FA+[h,A]} \ar[uu]_{[[a,h],A]}
  }
\]
(1b) $\ddag$ is weakly functorial. Suppose that a commutative square
\[
\xymatrix{
X \ar[r]^-e \ar[d]_m & FX+Y+Z \ar[d]^{Fm+Y+Z} \\
X' \ar[r]_-{f} & FX'+Y+Z
} 
\]
and a morphism $g\colon Z\to A$ are given where $X$, $X'$ and $Z$ are
ffg objects. We need to prove
\[ (g\bullet e)^\ddag = (g\bullet f)^\ddag\o m. \]
From the following diagram:
\[
\xymatrix@1{
  X \ar[r]^-e
  &
  FX+Y+Z \ar[rr]^-{FX+Y+g} 
  &&
  FX+Y+A \ar[rr]^-{FX+[h,A]}
  &&
  FX+A
  \ar@{<-} `d[l] `[llll]^-{FX+[h,g]} [llll]
}
\]
we deduce
\[ (g\bullet e)_h = [h,g]\bullet e. \]
Here, by abuse of notation, $\bullet$ is used both for $F$ and $F(\dash)+Y$. Analogously, 
\[ (g\bullet f)_h = [h,g]\bullet f. \]
Since $\dag$ is weakly functorial, we get
\[ ([h,g]\bullet e)^\dag = ([h,g]\bullet f)^\dag\o m \]
and therefore
\[ (g\bullet  e)^\ddag = (g\bullet e)_h^\dag = ([h,g]\bullet e)^\dag = ([h,g]\bullet f)^\dag\o m = (g\bullet f)_h^\dag\o m = (g\bullet f)^\ddag\o m. \]
(1c) $\ddag$ is compositional. Given ffg-equations for $F(\dash)+Y$
\[ e\colon X\to FX+Y+Z\quad\text{and}\quad f\colon Z\to FZ+Y+A, \]
we are to prove
\[
(f^\ddag\bullet e)^\ddag = (e\sq f)^\ddag \o \inl.
\]
Express $A$ as a sifted colimit $a_i\colon A_i\to A$ ($i\in I$) of ffg
objects. Then also the morphisms $FZ+Y+a_i\colon FZ+Y+A_i\to FZ+Y+A$
form a sifted colimit cocone, and since $Z$ is an ffg object, $f$
factorizes through one of them:
\[
\xymatrix{
Z \ar[r]^-f \ar[dr]_{f_0} & FZ+Y+A\\
& FZ+Y+A_i \ar[u]_{FZ+Y+a_i}
}
\]
Define ffg-equations $\hat f$ and $\hat f_0$ by the commutative diagrams below (where $\inm$ denotes the middle coproduct injection):
\[
\xymatrix{
Y \ar[r]^h \ar[d]_\inl & A \ar[d]^\inr \\
Y+Z \ar[r]^-{\hat f}  & F(Y+Z)+A\\
Z \ar[u]^\inr \ar[r]_-f  & FZ+Y+A \ar[u]_{F\inr + [h,A]}
}
\qquad\qquad
\xymatrix{
Y \ar[d]_\inl \ar `r[dr]^-\inm [dr]& \\
Y+Z \ar[r]^-{\hat f_0} & F(Y+Z)+Y+A_i\\
Z \ar[r]_-{f_0}  \ar[u]^\inr & FZ+Y+A_i \ar[u]_{F\inr + Y+A_i}
}
\]
Since $\dag$ is compositional, we have
\[
(e\sq \hat f)^\dag\o \inl = (\hat f^\dag\bullet e)^\dag.
\]
We now verify that $[\inl,\inr]\colon X+Z\to X+Y+Z$ is a coalgebra
homomorphism from $e\sq f_0$ to $e\sq \hat f_0$. (Here we again use
$\sq$ for both $F$ and $F(\dash)+Y$.) This is shown by the commutative
diagram below, where $\can$ in the upper row is w.r.t.~$F(-) + Y$, and
in the lower row it is w.r.t.~$F$:
\[
  \let\objectstyle=\labelstyle
  \xymatrix@C-.5pc{
    X+Z \ar[r]^-{[e,\inr]} \ar[d]^{[\inl,\inr]}
    & FX+Y+Z \ar[rr]^-{FX+Y+f_0} \ar@{=}[d]
    &&
    FX+Y+FZ+Y+A_i \ar[rr]^{\can + A_i}
    &&
    F(X+Z) + Y + A_i \ar[d]_{F[\inl,\inr]+Y+A_i}
    \ar@{<-} `u[l] `[lllll]_-{e\sq f_0} [lllll]
    \\
    X+Y+Z
    \ar[r]_-{[e,\inr]}
    &
    FX+Y+Z \ar[rr]_-{FX+\hat f_0}
    &&
    FX+F(Y+Z)+Y+A_i \ar[rr]_{\can+Y+A_i}
    &&
    F(X+Y+Z)+Y+A_i
    \ar@{<-} `d[l] `[lllll]^-{e\sq \hat  f_0} [lllll]
    \\
}
\]
Moreover, we have
\[
[h,a_i]\bullet (e\sq f_0) = (e\sq f)_h
\]
as shown by the following computation:
\begin{align*}\
[h,a_i]\bullet (e\sq f_0) &= ([h,A]\o (Y+a_i))\bullet (e\sq f_0) &\\
&= [h,A]\bullet ((Y+a_i)\bullet (e\sq f_0)) & \text{Remark \ref{rem:square_bullet_properties}(1)}\\
&= [h,A]\bullet (e\sq\, ((Y+a_i)\bullet f_0)) & \text{Remark \ref{rem:square_bullet_properties}(2)}\\
&= [h,A]\bullet (e\sq f) & \text{def. $f_0$}\\
&= (e\sq f)_h & \text{def. $(\dash)_h$.}
\end{align*}
Analogously, 
\[ 
[h,a_i]\bullet (e\sq \hat f_0) = e\sq \hat f.
\]
Since $\dag$ is weakly functorial, we get
\begin{equation}\label{eq:hatprop2}
(e\sq f)_h^\dag = (e\sq \hat f)^\dag\o [\inl,\inr].
\end{equation}
We apply the Weak Functoriality of $\dag$ also the to lower square of
the diagram defining $\hat f_0$ and to $[h,a_i]$ in lieu of $h$ and
use that $[h,a_i]\cdot \hat f_0 = \hat f$ to obtain
\[
([h,a_i]\bullet f_0)^\dag = ([h,a_i]\bullet \hat f_0)^\dag\o \inr = \hat f^\dag \o \inr.
\]
This implies that 
\[
\hat f^\dag \o \inr = f_h^\dag
\]
since, using Remark \ref{rem:square_bullet_properties}(2),
\[
\hat f^\dag \o \inr = ([h,a_i]\bullet f_0)^\dag = ( [h,A]\bullet ((Y+a_i)\bullet f_0 ) )^\dag = ( [h,A]\bullet f )^\dag = f_h^\dag.
\]
We conclude
\begin{equation}\label{eq:hatprop}
  \hat f^\dag = [h,f_h^\dag]\colon Y+Z\to A
\end{equation}
since the left-hand component $\hat f^\dag \o \inl = h$ follows from the fact that $\hat f^\dag$ is a solution of $\hat f$:
\[
  \xymatrix@-.7pc{
    Y+Z \ar[rr]^{\hat f^\dag} \ar[ddd]_{\hat f} & &  A\\
    & Y \ar[ul]_-\inl \ar[d]_-h \ar[ur]^-h \\
    & A \ar@{=}[uur] \ar[dl]_-\inr \ar[dr]^-\inr \\
    F(Y+Z)+A \ar[rr]_{F\hat f^\dag + A}  & &  FA+A \ar[uuu]_{[a,A]}
}
\]
Thus, we conclude the proof with the following computation:
\begin{align*}
(f^\ddag\bullet e)^\ddag &= (f_h^\dag\bullet e)_h^\dag & \text{def. $\ddag$}\\
&= ( [h,f_h^\dag]\bullet e )^\dag & \text{def. $(\dash)_h$}\\
&= (\hat f^\dag\bullet e)^\dag & \text{\eqref{eq:hatprop}}\\
&= (e\sq \hat f)^\dag \o \inl & \text{compositionality of $\dag$}\\
&= (e \sq f)_h^\dag \o \inl  &  \text{\eqref{eq:hatprop2}}\\
&= (e \sq f)^\ddag \o \inl & \text{def. $\ddag$}
\end{align*}

\item For every ffg-Elgot algebra $(A,[a,h]^\ddag)$ for $F(\dash)+Y$, we prove that $(A,a,\dag)$ with $e^\dag := \ol e^\ddag$ is an ffg-Elgot algebra for $F$.

\medskip (2a) $e^\dag$ is a solution of $e\colon X\to FX+A$:
\[
\xymatrix@C+2em{
  X \ar[r]^-{e^\dag = \ol e^\ddag}
  \ar[d]^{\ol  e}
  &
  A
  \\
  FX+Y+A \ar[r]_-{F\ol e^\ddag + Y +A}
  &
  FA+Y+A \ar[u]^{[[a,h],A]}
  \\
  FX+A
  \ar[u]_{FX+\inr}
  \ar[r]_-{Fe^\dag+A}
  \ar@{<-} `l[u] `[uu]^e [uu] 
  &
  FA+A
  \ar `r[u] `[uu]_{[a,A]} [uu]
  \ar[u]^{[\inl,\inr]}
}
\]
Indeed, the upper square commutes since $\ol e^\ddag$ is a solution of
$\ol e$, and for the lower one recall that $\sol e = {\ol e}^\ddag$.

\medskip (2b) $\dag$ is weakly functorial. Given a coalgebra
homomorphism $m$ from $e\colon X \to FX + Z$ to $f\colon X' \to FX'+Z$
and a morphism $h\colon Z\to A$ where $X$, $X'$, and $Z$ are ffg objects, we need to prove
$(h\bullet e)^\dag = (h\bullet f)^\dag \o m$.
From the following diagram we see that $m$ is also a coalgebra
homomorphism for $F(\dash)+Y+Z$ from $\ol e$ to $\ol f$:
\[
  \xymatrix@C+1em{
    X \ar[r]^-e \ar[d]_m
    &
    FX+Z \ar[d]^{Fm+Z} \ar[r]^-{[\inl,\inr]}
    &
    FX+Y+Z \ar[d]^{Fm+Y+Z}
    \ar@{<-} `u[l] `[ll]_-{\ol e} [ll]
    \\
    X' \ar[r]_-f
    &
    FX'+Z \ar[r]_-{[\inl,\inl]}
    &
    FX'+Y+Z
    \ar@{<-} `d[l] `[ll]^-{\ol f} [ll]
  }
\]
Hence, Weak Functoriality of $\ddag$ yields
\[ (h\bullet \ol e)^\ddag = (h\bullet \ol f)^\ddag \o m. \]
This implies the desired equality since
\begin{equation}\label{eq:barprop} \ol{h\bullet e} = h\bullet \ol e 
\end{equation}
(and analogously for $f$) due to the following diagram:
\[
\xymatrix{
  X \ar[r]^-{h\bullet e} \ar[dr]_e
  &
  FX+A \ar[r]^-{[\inl,\inr]}
  &
  FX+Y+A
  \ar@{<-} `u[l] `[ll]_-{\ol{h\bullet e}} [ll]
  \\
  &
  FX+Z \ar[r]_-{[\inl,\inr]} \ar[u]_{FX+h}
  &
  FX+Y+Z
  \ar[u]_{FX+Y+h}
  \ar@{<-} `d[l] `[ull] [ull]^(.4){\ol e}\\
}
\]
\medskip (2c) $\ddag$ is compositional. Given ffg-equations
$e\colon X\to FX+Z$ and $f\colon Z\to FZ+A$, we need to prove
$(f^\dag\bullet e)^\dag = (e\sq f)^\dag \o \inl$. We first observe that 
\begin{equation}\label{eq:sqprop}
\ol e \sq \ol f = \ol{e\sq f}.
\end{equation}
This follows from the diagram below (where $\can$ on the right-hand
arrow is w.r.t.~$F(-) + Y$ and $\can$ in the middle of the diagram w.r.t.~$F$):
\[
  \let\objectstyle=\labelstyle
  \xymatrix@C-1em{
    X+Z
    \ar[r]_(.65){\begin{turn}{-40}$\labelstyle\!\![e,\inr]$\end{turn}}
    \ar[ddrr]_-{e\sq f}
    &
    FX+Z
    \ar[r]_(.6){\begin{turn}{-35}$\labelstyle\!\![\inl,\inr]$\end{turn}}
    \ar[dr]_-{FX+f}
    &
    FX+Y+Z \ar[r]_(.25){\begin{turn}{30}$\labelstyle\/FX+Y+f\!\!\!$\end{turn}}
    \save+<-3pt,5pt>*{}\ar@{<-} `u[l] `[ll]_-{[\ol e, \inr]} [ll]\restore
    \save+<3pt,5pt>*{}\ar@{-<} `u[r] `[rr]^-{FX + Y + \ol f} [rr]\restore
    &
    FX+Y+FZ+A \ar[r]_(.25){\begin{turn}{30}$\labelstyle FX+Y+[\inl,\inr]$\end{turn}}
    &
    FX+Y+FZ+Y+A \ar[dd]^{\can+A}
    \\
    & &
    FX+FZ+A \ar[ur]_(.6)*+{\labelstyle [\inl,\inr]+A} \ar[d]^{\can+A}
    \\
    & &
    F(X+Z) +A \ar[rr]_-{[\inl,\inr]}  
    & &
    F(X+Z)+Y+A
    \ar@{<-} `d[l] `[llll] [lllluu]^(.4){\ol{e\sq f}} 
}
\]
Note that the upper path composed with $\can +A$ yields
$\ol {e\sq f}$. The proof of compositionality now easily follows:
\begin{align*}
(f^\dag\bullet e)^\dag &= \left(\ol {\ol f^\ddag \bullet e}\right)^\ddag & \text{def. $\dag$}\\
&= (\ol f^\ddag \bullet \ol e)^\ddag & \text{by \eqref{eq:barprop}}\\
&= (\ol e \sq \ol f)^\dag\o \inl & \text{$\ddag$ compositional}\\
&= (\ol {e\sq f})^\ddag \o \inl & \text{by \eqref{eq:sqprop}}\\
&= (e\sq f)^\dag \o \inl & \text{def. $\dag$}
\end{align*}

\item We prove that the two passages (1) and (2) in Remark~\ref{rem:passages} are mutually inverse.

  \medskip \noindent (3a) The fact that (2) followed by (1) yields the identity
  is easy to see since for every ffg-equation $e\colon X\to FX+A$ for $F$, we have
  \[ \ol e_h = e, \]
  as shown by the commutative diagram below:
  \[  
    \xymatrix{
      X
      \ar[r]^-e
      \ar `d[drr] [drr]_-e
      &
      FX+A \ar@{=}[dr] \ar[r]^-{\inl+A}
      &
      FX+Y+A
      \ar[d]^{FX+[h,A]}
      \ar@{<-} `u[l] `[ll]_-{\ol e} [ll]
      \\
      & &
      FX+A
    }
  \]
  \medskip (3b) In order to show that (1) followed by (2) is the
  identity, we prove for every ffg-equation $e\colon X\to FX+Y+A$
  that $\ol{(e_h)}^\ddag = e^\ddag$. (We do \emph{not} claim
  that $\ol{(e_h)}=e$.) Express $A$ as a sifted colimit
  $a_i\colon A_j\to A$ ($j\in J$) of ffg objects. Then also the
  morphisms $FX+Y+a_i\colon FX+Y+A_i\to FX+Y+A$ form a sifted colimit
  cocone, and since $X$ is an ffg object, there exists $j\in J$ and a morphism
  $e_0$ such that the following triangle commutes:
\[
\xymatrix{
X \ar[r]^-e \ar[dr]_{e_0} & FX+Y+A\\
& FX+Y+A_j \ar[u]_{FX+Y+a_j}
}
\]
Consider the ffg-equation
\[ f = (Y+A_j \xra{\inr} F(Y+A_j)+Y+A_j \xra{F(Y+A_j)+Y+a_j} F(Y+A_j)+Y+A). \]
(Note that $f=a_j\bullet f_0$ for $f_0 = \inr$.) We have that
\begin{equation}\label{eq:new}
  f^\ddag = (Y+A_j \xra{Y+a_j} Y+A \xra{[h,A]} A)
\end{equation}
as demonstrated by the diagram below:
\[
  \xymatrix@C+2em{
    Y+A_j \ar[r]^{f^\ddag} \ar[d]^{Y+a_j}
    &
    A
    \\
    Y+A \ar[d]^\inr \ar[ur]_{[h,A]} \ar[dr]_\inr
    \\
    F(Y+A_j)+Y+A
    \ar[r]_-{Ff^\ddag+Y+A}
    \ar@{<-} `l[u] `[uu]^f [uu]
    &
    FA+Y+A \ar[uu]_{[[a,h],A]}
}
\]
We also have that
\begin{equation}\label{eq:ehprop} 
\ol{(e_h)} = f^\ddag\bullet e_0',
\end{equation}
where
\[
  e_0' = (X \xra{e_0} FX+Y+A_j \xra{\inl+Y+A_j} FX+Y+Y+A).
\]
Indeed, the following diagram commutes using~\eqref{eq:new} for the
right-hand part and~\eqref{eq:eh},~\eqref{eq:barr} for the lower
left-hand one: 
\[
\xymatrix@C+3em{
  X \ar[r]^-{e_0} \ar[dr]^-e \ar[ddr]_-{e_h}
  &
  FX+Y+A_j  \ar[d]^{FX+Y+a_j} \ar[r]^{\inl+Y+A_j}
  &
  FX+Y+Y+A_j \ar[d]_{FX+Y+Y+a_j}
  \ar`r[d] `[dd]^{FX+Y+f^\ddag} [dd]
  \\
  &
  FX+Y+A \ar[r]^-{\inl+Y+A} \ar[d]^{FX+[h,A]}
  &
  FX+Y+Y+A \ar[d]_{FX+Y+[h,A]}
  \\
  &
  FX+A \ar[r]_-{[\inl, \inr]}
  &
  FX+Y+A
  \ar@{<-} `d[l] `[ll] [lluu]^{\ol{(e_h)}}  
}
\]
Finally, we have a coalgebra homomorphism $\inl$ from $e_0$ to $e_0'\sq f_0$:
\[
  \xymatrix@C+4em{
    X \ar[r]^-\inl \ar[dr]^-{e_0} \ar[dddd]_{e_0}
    &
    X+Y+A_j  \ar[d]^{[e_0,\inr]}
    \\
    & FX+Y+A_j \ar[d]^{\inl+Y+A_j}
    \\
    & FX+Y+Y+A_j \ar[d]^{FX+Y+f_0}
    \\
    &
    FX+Y+F(Y+A_j) + Y+A_j \ar[d]^{\can+A_j}
    \\
    FX+Y+A_j
    \ar@(ru,lu)[ruu]^-*+{\labelstyle\inl+Y+A_j}
    \ar[ru]^-*+{\labelstyle \inl +Y+A_j}
    \ar[r]_-{F\inl + Y + A_j}
    &
    F(X+Y+A_j)+Y+A_j
  }
\]
Thus, we obtain
\begin{align*}
e^\ddag &= (a_j\bullet e_0)^\ddag & \\
&= (a_j\bullet (e_0'\sq f_0))^\ddag \o \inl & \text{$\ddag$ weakly functorial} \\
&= (e_0' \sq\, (a_j\bullet f_0))^\ddag \o \inl & \text{Remark \ref{rem:square_bullet_properties}(2)} \\
&= (e_0' \sq f)^\ddag \o \inl & \text{def. $f_0$} \\
&= (f^\ddag\bullet e_0')^\ddag & \text{$\ddag$ compositional} \\
&= \ol{(e_h)}^\ddag & \text{by \eqref{eq:ehprop}}.
\end{align*}
This concludes the proof. 
\end{enumerate}
\qed

\subsection{Proof of Theorem~\ref{thm:initialfy} for an arbitrary
  free object}\label{S:partB}

Now assume that $Y$ is an arbitrary free object of $\C$. We shall
reduce this case to the previous situation using filtered colimits.

\begin{notation}\label{not:compfam} Fix an $F$-algebra
  $a\colon FA\to A$ and a morphism $h\colon Y\to A$. Since in every
  variety $\C$ the free functor (left adjoint to the forgetful functor
  from $\C$ to $\Set^S$) preserves colimits, we can express the free
  object $Y$ as a colimit of a filtered diagram $D_Y$ of ffg objects $Y_i$:
  \[
    Y = \colim Y_i\qquad
    \text{with injections $y_i\colon Y_i\to Y$} \quad (i\in I).
  \]
\end{notation}
\begin{defn}\label{def:compfamalg}
  By a \emph{compatible family of ffg-Elgot algebras} is meant a family
  \begin{equation}\label{eq:compfamalg}
    (A,[a, h_i], (\dash)^{\dag,i})\qquad\text{(for $i \in I$)}
  \end{equation}
  of ffg-Elgot algebras for the functors $F(\dash)+Y_i$ such that for
  every connecting morphism $y_{ij}\colon Y_i\to Y_j$ of the diagram
  $D_Y$ and every ffg-equation $e\colon X\to FX+Y_i+A$, one has
  \[
    ((FX+y_{ij}+A)\o e)^{\dag,j} = e^{\dag,i}.
  \]
\end{defn}
To establish Theorem \ref{thm:initialfy}, we prove the following more refined result:
\begin{theorem}\label{thm:corr}
  For every $F$-algebra $(A,a)$ there is a bijective correspondence between 
  \begin{enumerate}
  \item solution operations $\dag$ such that $(A,a,\dagger)$ is an
    ffg-Elgot algebra for $F$,
  \item families of solution operations $(\dash)^{\dag,i}$ such that
    $(A,[a,h_i],(\dash)^{\dag,i})$ ($i\in I$) is a compatible family
    of ffg-Elgot algebras, and
  \item solution operations $\ddagger$ such that $(A,[a,h],\ddagger)$
    is an ffg-Elgot algebra for $F(\dash)+Y$.
\end{enumerate}
\end{theorem}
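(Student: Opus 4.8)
The plan is to bootstrap from the ffg case proved in Subsection~\ref{S:partA} by means of the filtered colimit $Y = \colim_i Y_i$ of \autoref{not:compfam}: I establish the bijection $(1)\Leftrightarrow(2)$ and the bijection $(2)\Leftrightarrow(3)$ separately, and then Theorem~\ref{thm:initialfy} follows by composing them. Throughout write $h_i = h\o y_i\colon Y_i\to A$ for the components of the colimit cocone, so that $h_j\o y_{ij} = h_i$ and $y_j\o y_{ij} = y_i$ for every connecting morphism $y_{ij}$ of $D_Y$.

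For $(1)\Leftrightarrow(2)$: each $Y_i$ is an ffg object, so the already‑proved instance of Theorem~\ref{thm:initialfy} gives, for every $i\in I$, a bijection $\beta_i$ between solution operators $\dag$ making $(A,a,\dag)$ an ffg‑Elgot algebra for $F$ and solution operators making $(A,[a,h_i])$ an ffg‑Elgot algebra for $F(\dash)+Y_i$, with $\beta_i(\dag)(e) = (e_{h_i})^\dag$ where $e_{h_i} = (FX+[h_i,A])\o e$. I send $\dag$ to the family $(\beta_i(\dag))_{i\in I}$; since $(FX+[h_j,A])\o(FX+y_{ij}+A) = FX+[h_j\o y_{ij},A] = FX+[h_i,A]$, this family is compatible in the sense of \autoref{def:compfamalg}. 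Conversely I send a compatible family $((\dash)^{\dag,i})_{i}$ to $\beta_{i_0}^{-1}((\dash)^{\dag,i_0})$ for any chosen $i_0$; the point is independence of $i_0$, and since $I$ is filtered, hence connected and non‑empty, it suffices to compare $\beta_i^{-1}$ and $\beta_j^{-1}$ along a connecting morphism $y_{ij}$. Writing $\ol e^{(i)} = (FX+\inr)\o e\colon X\to FX+Y_i+A$ for the morphism used by $\beta_i^{-1}$ (cf.~\autoref{rem:passages}(2)), one has $(FX+y_{ij}+A)\o\ol e^{(i)} = \ol e^{(j)}$, so compatibility of the family forces $(\ol e^{(j)})^{\dag,j}=(\ol e^{(i)})^{\dag,i}$. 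That the two assignments are mutually inverse is then immediate from each $\beta_i$ being a bijection.

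For $(2)\Leftrightarrow(3)$: I introduce two mutually inverse maps. From a solution operator $\ddag$ for $F(\dash)+Y$ on $(A,[a,h])$, put $\Phi(\ddag)_i(e_0) = ((FX+y_i+A)\o e_0)^\ddag$ for every ffg‑equation $e_0\colon X\to FX+Y_i+A$; using $y_j\o y_{ij}=y_i$ this family is compatible, and pulling the ffg‑Elgot axioms of $\ddag$ back along $F(\dash)+y_i$ (which works because $[a,h]\o(FA+y_i)=[a,h_i]$, and because precomposition with $F(\dash)+y_i$ is compatible with $\aft$, $\sq$ and $\can$) shows each $(A,[a,h_i],\Phi(\ddag)_i)$ is an ffg‑Elgot algebra. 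Conversely, from a compatible family $((\dash)^{\dag,i})_i$, define $\Psi$ as follows: given an ffg‑equation $e\colon X\to FX+Y+A$, use that $X$ is ffg, hence fp, and that $FX+(\dash)+A$ preserves the filtered colimit $Y=\colim_i Y_i$, to factor $e = (FX+y_i+A)\o e_0$ through some stage (\autoref{R:new}\ref{R:new:1}) and put $e^\ddag := e_0^{\dag,i}$; compatibility of the family together with \autoref{R:new}\ref{R:new:2} makes this independent of the factorization. A short computation gives $\Phi\o\Psi = \id$ and $\Psi\o\Phi = \id$, so $\Phi,\Psi$ are mutually inverse, and combined with the preceding paragraph's inclusion they restrict to mutually inverse bijections between the ffg‑Elgot operators for $F(\dash)+Y$ and the compatible families of ffg‑Elgot algebras --- provided one also checks that $\Psi((\dash)^{\dag,\bullet})$ satisfies the solution property, Weak Functoriality and Compositionality. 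Each such verification proceeds by factoring the finitely many equations involved through a common stage $Y_i$ (possible by filteredness of $I$), translating the claim into the corresponding property of $(\dash)^{\dag,i}$, and transporting it back along $F(\dash)+y_i$, freely changing the stage via compatibility.

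Composing the two bijections yields that of Theorem~\ref{thm:initialfy}, and unravelling $(1)\to(2)\to(3)$ gives $e^\ddag = (e_h)^\dag$ with $e_h = (FX+[h,A])\o e$, while $(3)\to(2)\to(1)$ gives $e^\dag = \ol e^\ddag$ with $\ol e = (FX+\inr)\o e$, i.e.~exactly the passages of \autoref{rem:passages}. The main obstacle is the last verification in the $(2)\Leftrightarrow(3)$ step --- Weak Functoriality and Compositionality of $\Psi((\dash)^{\dag,\bullet})$: as in the arguments of Subsection~\ref{S:partA} and in the proof that Functoriality and Weak Functoriality coincide, a coalgebra homomorphism $m$ between equations over $Y$ need not be one between the chosen factorizations over $Y_i$, so one must push further along some $y_{ik}$, exploiting that $X$ is fp (two morphisms out of $X$ agreeing after postcomposition with $FX'+y_i+Z$ already agree after postcomposition with some $FX'+y_{ik}+Z$, by \autoref{R:new}\ref{R:new:2}) before the finite‑stage axiom applies; tracking the coproduct summands and the canonical morphisms $\can$ under these stage changes is the delicate bookkeeping.
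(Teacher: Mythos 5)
Your proof follows the paper's own strategy almost exactly: the same decomposition into the stage-wise correspondence $(1)\Leftrightarrow(2)$ and the colimit correspondence $(2)\Leftrightarrow(3)$, the same formulas for all four passages, and the same key technical points (independence of the stage via compatibility of the family, and pushing a coalgebra homomorphism to a later stage $Y_k$ via filteredness and finite presentability of $X$ before the finite-stage axiom can be applied). The only differences are cosmetic: you obtain mutual inverseness in $(1)\Leftrightarrow(2)$ directly from the bijectivity of each stage-wise correspondence $\beta_i$ rather than re-deriving it, and you fix $h_i = h\o y_i$ from the outset instead of proving, as the paper does in Lemma~\ref{lem:compfamalg_to_elgot}, that the $h_i$ of an arbitrary compatible family form a cocone.
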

The proof is split into four lemmas.
\begin{lemma}\label{lem:elgot_to_compfamalg}
  Let $(A,a,\dag)$ be an ffg-Elgot algebra.  Every cocone
  $h_i\colon Y_i\to A$ ($i\in I$) induces a compatible family of
  ffg-Elgot algebras $(A,[a,h_i], (\dash)^{\dag,i})$ with solution
  operations given by
  \[
    e^{\dagger,i}
    =
    (X\xra{e} FX+Y_i+ A \xra{FX+[h_i,A]} FX+A)^\dag.
  \]
\end{lemma}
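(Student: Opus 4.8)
The plan is to obtain \autoref{lem:elgot_to_compfamalg} as an immediate consequence of the case already treated in \autoref{S:partA}, applied pointwise to the ffg objects $Y_i$, together with one short diagram chase for the compatibility condition of \autoref{def:compfamalg}.

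First I would observe that the operation $(\dash)^{\dag,i}$ displayed in the statement is precisely passage~(1) of \autoref{rem:passages} specialised to the ffg object $Y_i$ and the morphism $h_i\colon Y_i\to A$. Hence part~(1) of the proof in \autoref{S:partA} (where $Y$ is ffg) applies verbatim and shows that, for each $i\in I$, the triple $(A,[a,h_i],(\dash)^{\dag,i})$ is an ffg-Elgot algebra for $F(\dash)+Y_i$; nothing new has to be checked for the individual members of the family. It then remains to establish compatibility: for a connecting morphism $y_{ij}\colon Y_i\to Y_j$ of $D_Y$ and an ffg-equation $e\colon X\to FX+Y_i+A$, I would unfold the definition of $(\dash)^{\dag,j}$ to get
\[
  \big((FX+y_{ij}+A)\o e\big)^{\dag,j}
  = \big((FX+[h_j,A])\o (FX+y_{ij}+A)\o e\big)^\dag.
\]
Since $(h_i)_{i\in I}$ is a cocone over $D_Y$, we have $h_j\o y_{ij}=h_i$, hence $[h_j,A]\o (y_{ij}+A)=[h_j\o y_{ij},A]=[h_i,A]$, and therefore $(FX+[h_j,A])\o(FX+y_{ij}+A)=FX+[h_i,A]$. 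Thus the morphism under the outermost $(\dash)^\dag$ above is exactly $e_{h_i}=(FX+[h_i,A])\o e$, and we conclude $\big((FX+y_{ij}+A)\o e\big)^{\dag,j}=(e_{h_i})^\dag=e^{\dag,i}$, as required.

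I do not expect any genuine difficulty here: all the substance is already contained in \autoref{S:partA}, and the compatibility condition is forced by functoriality of $FX+(\dash)+A$ together with the single cocone identity $h_j\o y_{ij}=h_i$. The only thing to watch is the bookkeeping of coproduct injections — in particular that $y_{ij}+A$ abbreviates $y_{ij}+\id_A\colon Y_i+A\to Y_j+A$ and that the copairings $[h_j,A]$ and $[h_i,A]$ are composed on the correct side — but this is entirely routine.
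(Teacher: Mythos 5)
Your proposal is correct and follows essentially the same route as the paper: the individual ffg-Elgot algebra structures are imported verbatim from part~(1) of Subsection~\ref{S:partA} applied to the ffg objects $Y_i$, and compatibility is reduced to the commutativity of the triangle $(FX+[h_j,A])\o(FX+y_{ij}+A)=FX+[h_i,A]$, which follows from the cocone identity $h_j\o y_{ij}=h_i$. Nothing is missing.
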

\begin{proof}
  By part~(1) in Subsection~\ref{S:partA},
  $(A,[a,h_i], (\dash)^{\dag,i})$ is an ffg-Elgot algebra for every
  $i\in I$. For compatibility, let $e\colon X\to FX+Y_i+A$ be an
  ffg-equation and let $y_{ij}\colon Y_i\to Y_j$ be a connecting
  morphism of $D_Y$. Then the triangle below commutes:
  \[
    \xymatrix@C+2pc{
      FX+Y_i+A \ar[r]^{FX+[h_i,A]} \ar[d]_{FX+y_{ij}+A}
      &
      FX+A
      \\
      FX+Y_j + A \ar[ur]_(.6)*+{\labelstyle FX+[h_j,A]} 
    }
  \]
  Therefore
  \begin{align*}
    ((FX+y_{ij}+A)\o e)^{\dag,j}
    &= ((FX+[h_j,A])\o (FX+y_{ij}+A)\o e)^{\dag}\\
    &= ((FX+[h_i,A])\o e)^{\dag}\\
    &= e^{\dag,i}
  \end{align*}
  Here the first equation is the definition of $(\dash)^{\dag,j}$, the
  second one follows from the above commutative triangle, and the last
  one is the definition of $(\dash)^{\dag,i}$.
\end{proof}
\begin{lemma}\label{lem:compfamalg_to_elgot}
  Suppose that a compatible family \eqref{eq:compfamalg} of ffg-Elgot
  algebras is given. Then for every ffg equation $e\colon X \to FX +
  A$ the morphism
  \[
    e^\dag = (X\xra{e} FX+A \xra{[\inl,\inr]} FX+Y_i+A)^{\dag,i}
  \]
  is independent of the choice of $i$. Moreover, $(A,a,\dag)$ is
  an ffg-Elgot algebra for $F$, and the morphisms $h_i$ ($i\in I$) form a
  cocone of the diagram $D_Y$.
\end{lemma}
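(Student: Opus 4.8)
The plan is to establish the three claims in order, reducing each to the single-object case treated in Subsection~\ref{S:partA} together with the compatibility condition of Definition~\ref{def:compfamalg}. Throughout, for an ffg-equation $e\colon X\to FX+A$ and an index $i\in I$, write $\ol{e}^{(i)} = (X\xra{e} FX+A\xra{[\inl,\inr]} FX+Y_i+A)$ for the equation obtained from $e$ as in~\eqref{eq:barr} with $Y_i$ in place of $Y$, so that $e^\dag$ computed at the index $i$ equals $(\ol{e}^{(i)})^{\dag,i}$.

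First I would show that $(\ol{e}^{(i)})^{\dag,i}$ does not depend on $i$. For a connecting morphism $y_{ij}\colon Y_i\to Y_j$ of $D_Y$ one has $(FX+y_{ij}+A)\o \ol{e}^{(i)} = \ol{e}^{(j)}$, since $y_{ij}$ touches neither the $FX$- nor the $A$-summand and there is nothing in the $Y_i$-summand; hence compatibility gives $(\ol{e}^{(j)})^{\dag,j} = ((FX+y_{ij}+A)\o \ol{e}^{(i)})^{\dag,j} = (\ol{e}^{(i)})^{\dag,i}$. As $I$ is filtered, any two indices $i,j$ admit a common upper bound $k$ with connecting morphisms $y_{ik}\colon Y_i\to Y_k$ and $y_{jk}\colon Y_j\to Y_k$, so applying the previous equality twice shows the value of $e^\dag$ is the same whether computed via $i$, via $k$, or via $j$. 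Thus $e^\dag$ is well defined.

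Next, for each fixed $i$ the assignment $e\mapsto (\ol{e}^{(i)})^{\dag,i}$ is precisely the solution operator produced by passage~(2) of Remark~\ref{rem:passages} out of the ffg-Elgot algebra $(A,[a,h_i],(\dash)^{\dag,i})$ for $F(\dash)+Y_i$. Since $Y_i$ is an ffg object, part~(2) of Subsection~\ref{S:partA} applies verbatim and already shows that this operator yields a solution of every ffg-equation in $(A,a)$ and satisfies Weak Functoriality and Compositionality; combined with the independence just proved, this gives that $(A,a,\dag)$ is an ffg-Elgot algebra for $F$. Finally, to see that the $h_i$ form a cocone of $D_Y$, I would fix a connecting morphism $y_{ij}\colon Y_i\to Y_j$ and use the ``trivial'' ffg-equation $e = (\,Y_i\xra{\inm} FY_i+Y_i+A\,)$ for $F(\dash)+Y_i$, where $\inm$ injects $Y_i$ into the middle summand. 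By~\eqref{eq:sol} any solution $s$ of $e$ in $(A,[a,h_i])$ satisfies $s = [[a,h_i],A]\o (Fs+Y_i+A)\o \inm$, and since $Fs+Y_i+A$ is the identity on the middle summand the right-hand side collapses to $[[a,h_i],A]\o \inm = h_i$; thus $e^{\dag,i} = h_i$. The same computation applied to $e' = (FY_i+y_{ij}+A)\o \inm\colon Y_i\to FY_i+Y_j+A$, which routes $Y_i$ through the middle $Y_j$-summand via $y_{ij}$, gives $(e')^{\dag,j} = h_j\o y_{ij}$; and since $e' = (FY_i+y_{ij}+A)\o e$, compatibility yields $h_j\o y_{ij} = (e')^{\dag,j} = e^{\dag,i} = h_i$, as required.

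The routine parts are the elementary diagram chases: identifying $(FX+y_{ij}+A)\o \ol{e}^{(i)}$ with $\ol{e}^{(j)}$, and collapsing the recursive part of the trivial equations $e$ and $e'$. The only point to check with care is that part~(2) of Subsection~\ref{S:partA} genuinely transfers to the ffg object $Y_i$ without modification, so that once $e^\dag$ is known to be well defined it automatically carries the structure of an ffg-Elgot algebra for $F$; I expect no essential obstacle there, since that argument nowhere used anything about $Y$ beyond its being an ffg object.
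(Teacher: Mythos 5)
Your proof is correct, and its first two parts (independence of $i$ via compatibility plus filteredness of $D_Y$, and the reduction of the ffg-Elgot axioms to part~(2) of Subsection~\ref{S:partA} applied to the ffg object $Y_i$) coincide with the paper's argument. The only genuine divergence is in the cocone step. The paper first establishes $\inm^{\dag,i}=h_i$ exactly as you do, but then derives $h_i = h_j\o y_{ij}$ by factoring the trivial equation through $FY_i+Y_j+0$ (with $0$ the initial object) and invoking Weak Functoriality of $(\dash)^{\dag,j}$ together with compatibility. You instead observe that the shifted equation $e'=(FY_i+y_{ij}+A)\o\inm$ has, by the solution square alone, no possible solution other than $h_j\o y_{ij}$ (the recursive part never touches the middle summand), and then apply compatibility to conclude $(e')^{\dag,j}=\inm^{\dag,i}=h_i$. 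This is a valid and in fact slightly more economical route: it uses only the defining property of a solution plus compatibility, and avoids Weak Functoriality entirely in this step. Both arguments rest on the same observation that equations landing in the $Y$-summand have forced solutions; yours just exploits it twice instead of once.
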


\begin{proof}
  (1) By part~(2) in Subsection~\ref{S:partA}, we know that
  $(A,a,\dag)$ is an ffg-Elgot algebra. Let us verify that $\dag$ is
  independent of the choice of $i$. Given $i,j\in I$, choose $k\in I$
  and connecting morphisms $y_{ik}\colon Y_i\to Y_k$ and
  $y_{jk}\colon Y_j\to Y_k$, using that $D_Y$ is filtered. Then the following diagram
  commutes:
  \[
    \xymatrix@C+2em{
      &
      FX+Y_i+A \ar[d]^{FX+y_{ik}+A}
      \\
      FX+A
      \ar[r]^-{[\inl,\inr]}
      \ar[ur]^-{[\inl,\inr]}
      \ar[dr]_-{FX+\inr}
      &
      FX+Y_k+A
      \\
      &
      FX +Y_j + A \ar[u]_{FX+y_{jk}+A}
}
\]
Therefore, by compatibility of the family \eqref{eq:compfamalg}, one has 
\[
  (X\xra{e} FX+A \xra{[\inl,\inr]} FX+Y_i+A)^{\dag,i}
  =
  (X\xra{e} FX+A \xra{[\inl,\inr]} FX+Y_j+A)^{\dag,j},
\]
as required.

\medskip\noindent
(2) Next, we show that for every $i\in I$ the ffg-equation $Y_i \xra{\inm} FY_i+Y_i+A$ has the solution $\inm^{\dag,i}=h_i$:
\[
\xymatrix@C+3em{
Y_i \ar[r]^{\inm^{\dag,i}} \ar[d]_{\inm} \ar[dr]^{\inm} & A\\
FY_i+Y_i+A \ar[r]_{F\inm^{\dag,i}+Y_i+A} & FA+Y_i+A \ar[u]_{[a,h_i,A]}\\
}
\]
Here the outside commutes by the definition of a solution, and the
lower triangle commutes trivially. Therefore the upper triangle
commutes, showing that $h_i = \inm^{\dag,i}$.

\medskip\noindent
(3) 
Finally, we prove that the $h_i$'s form a cocone. Suppose that a
connecting morphism $y_{ij}\colon Y_i\to Y_j$ is given, and consider
the following commutative diagram (here $i_A\colon 0 \to A$ denotes the
unique morphism from the initial object to $A$):
\[
  \xymatrix@C+2em{
    &&&
    FY_i+Y_i+A \ar[d]_{FY_i+y_{ij}+A}
    \\
    Y_i \ar[r]^-{\inm} \ar[d]^{y_{ij}}
    \ar `u[urrr] [urrr]^{\inm}
    &
    FY_i + Y_i + 0 
    \ar[r]^{FY_i+y_{ij}+0}
    &
    FY_i+Y_j+0  \ar[r]^{FY_i+Y_j+i_A} \ar[d]_{Fy_{ij}+Y_j+0}
    & FY_i+Y_j+A \ar[d]_{Fy_{ij}+Y_j+A}
    \\
    Y_j  \ar[rr]_-{\inm} \ar[rr]
    &&
    FY_j+Y_j+0 \ar[r]_{FY_j+Y_j+i_A}
    &
    FY_j+Y_j+A
    \ar@{<-} `d[l] `[lll]^-{\inm} [lll]
  }
\]
Then we get
\begin{align*}
h_i &= \inm^{\dag,i}\\
&= (i_A \aft ((FY_i + y_{ij}+0)\o \inm))^{\dag,j}\\
&=(i_A \aft \inm)^{\dag,j} \o y_{ij}\\
&= \inm^{\dag,j} \o y_{ij}\\
&= h_j\o y_{ij}
\end{align*}
Here the first equation follows from part (2) above, the second one
follows from the upper part of the above diagram and compatibility,
the third one follows from the central part of the diagram via Weak
Functoriality of $(\dash)^{\dag,j}$, the fourth one is the lower part
of the diagram, and the last equation is again part (2).
\end{proof}

\begin{lemma}\label{lem:elgotfy_to_compfamalg}
  Every ffg-Elgot algebra $(A,[a,h],\ddagger)$ for $F(\dash)+Y$
  induces the following compatible family of ffg-Elgot algebras:
  $(A,[a,h_i],(\dash)^{\dag,i})$ ($i\in I$), where $h_i = h\o y_i$ and
  the solution operations are given by
  \[
    e^{\dag,i} = (X\xra{e} FX + Y_i+A \xra{FX+y_i+A}
    FX+Y+A)^\ddagger.
  \]
\end{lemma}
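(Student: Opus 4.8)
The plan is to deduce this from the construction already carried out in part~(2) of Subsection~\ref{S:partA}. There, from an ffg-Elgot algebra $(A,[a,h],\ddagger)$ for $F(\dash)+Y$ and the natural transformation $\inl\colon F\Rightarrow F(\dash)+Y$, one produced an ffg-Elgot algebra for $F$ whose solution operator precomposes a given $F$-ffg-equation with $\inl+A$ before applying $\ddagger$. Inspecting that argument — the solution square~(2a), Weak Functoriality~(2b), and Compositionality~(2c), the last via~\eqref{eq:barprop} and~\eqref{eq:sqprop} — one sees that it never uses that $Y$ is ffg and depends on $\inl$ only through its naturality and the coherence $\tau_{X+Z}\o\can_{G'}=\can_G\o(\tau_X+\tau_Z)$ between the two canonical morphisms. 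Hence it goes through verbatim for \emph{any} natural transformation $\tau\colon G'\Rightarrow G$ between finitary endofunctors preserving sifted colimits: it turns an ffg-Elgot algebra $(A,b,\ddagger)$ for $G$ into one, call it $(A,b\o\tau_A,\dag)$, for $G'$, with $e^\dag=((\tau_X+A)\o e)^\ddagger$ for $e\colon X\to G'X+A$.

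First I would instantiate this with $G'=F(\dash)+Y_i$, $G=F(\dash)+Y$ and $\tau=F(\dash)+y_i$; since $[a,h]\o(FA+y_i)=[a,\,h\o y_i]=[a,h_i]$, the resulting structure is precisely $(A,[a,h_i],(\dash)^{\dag,i})$ with $e^{\dag,i}=((FX+y_i+A)\o e)^\ddagger$, which is therefore an ffg-Elgot algebra for $F(\dash)+Y_i$ for every $i\in I$. Then I would check compatibility directly: for a connecting morphism $y_{ij}\colon Y_i\to Y_j$ (so $y_j\o y_{ij}=y_i$) and an ffg-equation $e\colon X\to FX+Y_i+A$, unfolding the definitions and using bifunctoriality of the coproduct gives
\begin{align*}
  \bigl((FX+y_{ij}+A)\o e\bigr)^{\dag,j}
  &=\bigl((FX+y_j+A)\o(FX+y_{ij}+A)\o e\bigr)^\ddagger\\
  &=\bigl((FX+y_i+A)\o e\bigr)^\ddagger
  =e^{\dag,i},
\end{align*}
which is exactly the compatibility condition of \autoref{def:compfamalg}.

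The only point that needs care is the claim that the proof of part~(2) in Subsection~\ref{S:partA} is genuinely parametric in the natural transformation $\tau$; I would make this precise by isolating that argument once and for all as the reindexing statement above, which introduces no new idea but does require spelling out the $\can$-coherence for $\tau$. Granted this, the construction of the algebras $(A,[a,h_i],(\dash)^{\dag,i})$ and the verification of compatibility are both immediate.
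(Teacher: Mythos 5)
Your proof is correct and is in substance the same as the paper's: the paper establishes the lemma by redoing, for the natural transformation $F(\dash)+y_i\colon F(\dash)+Y_i\Rightarrow F(\dash)+Y$, exactly the three verifications (the solution square, $g\aft\ol e=\ol{g\aft e}$ for Weak Functoriality, and $\ol{f\sq e}=\ol f\sq\ol e$ for Compositionality) that part~(2) of Subsection~\ref{S:partA} carries out for $\inl\colon F\Rightarrow F(\dash)+Y$, and then checks compatibility by the same three-line computation you give. Your only departure is to package that argument once and for all as a reindexing lemma along an arbitrary natural transformation $\tau$; this is sound, since the coherence $\can_G\o(\tau_X+\tau_Z)=\tau_{X+Z}\o\can_{G'}$ is automatic from naturality (as $\can=[G\inl,G\inr]$) and part~(2) indeed nowhere uses that $Y$ is ffg, so you trade rewriting the diagrams for stating the general lemma explicitly.
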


\proof
  \begin{enumerate}
  \item We first show that $(A,[a,h_i],(\dash)^{\dag,i})$ is an
    ffg-Elgot algebra for every $i\in I$.  In the following, for every
    ffg-equation $e\colon X\to FX+Y_i+A$, we put
\[
  \ol e = (X\xra{e} FX + Y_i+A \xra{FX+y_i+A} FX+Y+A).
\]
\emph{Solution.} Consider the diagram below:
\[
  \xymatrix@C+2em{
    X \ar[r]^{e^{\dag,i}} \ar[d]^{\ol e}
    &
    A
    \\
    FX+Y+A \ar[r]^{Fe^{\ddag,i}+Y+A}
    &
    FA+Y+A \ar[u]^{[a,h,A]}
    \\
    FX+Y_i+A
    \ar[u]_{FX+y_i+A}
    \ar[r]_{Fe^{\dag,i}+Y_i+A}
    \ar@{<-} `l[u] `[uu]^e [uu]
    &
    FA+Y_i+A \ar[u]^{FA+y_i+A}
    \ar `r[u] `[uu]_{[a,h_i,A]} [uu]
  }
\]
The upper part commutes because $e^{\dag,i}=\ol e^\ddag$ is the
solution of $\ol e$, and the other three parts commute
trivially. Therefore the outside of the diagram commutes, showing that
$e^{\dag,i}$ is a solution of $e$.

\medskip
\emph{Weak functoriality.} Suppose that two ffg-equations $e\colon X
\to FX + Y_i + Z$ and $f\colon X' \to FX' + Y_i + Z$ are given
together with a coalgebra homomorphism $m$ from $e$ to $f$ and a
morphism $g\colon Z \to A$. Then $m$ is also a coalgebra homomorphism
w.r.t.~$F(-) + Y$:
\[
  \xymatrix{
    X \ar[r]^-e \ar[d]_m
    &
    FX + Y_i + Z \ar[rr]^-{FX + y_i + A}
    \ar[d]^{Fm + Y_i + A}
    &&
    FX + Y + A
    \ar[d]^{Fm + Y + A}
    \ar@{<-} `u[l] `[lll]_-{\ol e} [lll]
    \\
    X'
    \ar[r]_-f
    &
    FX' + Y_i + A \ar[rr]_-{FX' + y_i + A}
    &&
    FX' + Y + A
    \ar@{<-} `d[l] `[lll]^-{\ol f} [lll]
    }
\]
Moreover, we have
\begin{equation}\label{eq:gaftol}
  g \aft \ol e = \ol{g \aft e}
\end{equation}
and similarly for $f$, due to the following diagram:
\[
  \xymatrix{
    X
    \ar[r]^-e
    \ar[rd]_-{\ol e}
    &
    FX + Y_i + Z
    \ar[rr]^-{FX + Y_i + g}
    \ar[d]^{FX+y_i + Z}
    &&
    FX + Y_i + A
    \ar[d]^-{FX + y_i + A}
    \ar@{<-} `u[l] `[lll]_-{g \aft e} [lll]
    \\
    &
    FX + Y + Z
    \ar[rr]_-{FX + Y + g}
    &&
    FX + Y + A
    \ar@{<-} `d[l] `[lll]^-{g \aft \ol e} [lllu]
    }
\]
Thus, Weak Functoriality of $\dag,i$ follows from that of $\ddag$:
\[
  (g \aft f)^{\dag,i} \o m = (\ol{g \aft f})^\ddag \o m = (g \aft \ol
    f)^\ddag \o m = (g \aft \ol e)^\ddag  = (\ol{g \aft e})^\ddag = (g
    \aft e)^{\dag,i}.
\]

\medskip \emph{Compositionality.} Using the definition of
$(\dash)^{\dag,i}$, one easily verifies that for two ffg-equations
$e\colon X\to FX+Y_i+Z$ and $f\colon Z\to FZ+Y_i+A$ one has
$\ol{f \sq e} = \ol f \sq \ol e$ due to the following commutative
diagram:
\[
  \let\objectstyle=\labelstyle
  \xymatrix{
    X+Z
    \ar[r]^-{[e,\inr]}
    \ar[rd]_-{[\ol e, \inr]}
    &
    FX + Y_i + Z
    \ar[rr]^-{FX + Y_i +f}
    \ar[d]^{FX + y_i + Z}
    &&
    FX + Y_i + FZ + Y_i + A
    \ar[d]|*+{\labelstyle FX + y_i + FZ + y_i + A}
    \ar[r]^-{\can + A}
    &
    F(X+Z) + Y_i + Z
    \ar[d]^-{F(X+Z) + y_i + A}
    \ar@{<-} `u[l] `[llll]_-{f \sq e} [llll]
    \\
    &
    FX + Y + A
    \ar[rr]_-{FX + Y + \ol f}
    &&
    FX + Y + FZ + Y + A
    \ar[r]_-{\can + A}
    &
    F(X+Z) + Y + A
    \ar@{<-} `d[l] `[llll]^-{\ol f \aft \ol e} [llllu]
  }
\]
Thus we obtain $(f \sq e)^{\dag,i} = (\ol{f \sq e})^\ddag = (\ol f \sq
\ol e)^\ddag$, and we have
\[
  (f^{\dag,i} \aft e)^{\dag,i} = (\ol f^\ddag \aft e)^{\dag,i} =
  \left(\ol{\ol f^\ddag \aft e}\right)^\ddag = (\ol f^\ddag \aft e)^\ddag.
\]
Then compositionality of $\ddagger$ implies
\[ (f\sq e)^{\dag,i}\o \inl = (\ol f \sq \ol e)^\ddagger\o \inl = (\ol f^{\ddagger}\bullet \ol e )^\ddagger = (f^{\dag,i}\bullet e)^{\dag,i}.  \]

\item To prove that the given family of ffg-Elgot algebras is
  compatible,  let  $e\colon X\to FX+Y_i\to A$ be an ffg-equation and
  $y_{ij}\colon Y_i\to Y_j$ a connecting morphism of $D_Y$. Then 
  \begin{align*}
    ((FX+y_{ij}+A)\o e)^{\dag,j}
    &= ((FX+y_j+A)\o (FX+y_{ij}+A)\o e)^{\ddagger}
    \\
    &= ((FX+y_i+A)\o e)^{\ddagger}\\
    &= e^{\dag,i},
  \end{align*}
  where the first equation uses the definition of $(\dash)^{\dag,j}$,
  the second one uses that $y_{ij}$ is a connecting morphism, and the
  last equation uses the definition of $(\dash)^{\dag,i}$. \qed
\end{enumerate}
\doendproof
\begin{notation}
  \begin{enumerate}
  \item By Lemma \ref{lem:compfamalg_to_elgot}, for every compatible
    family \eqref{eq:compfamalg} of ffg-Elgot algebras, the morphisms
    $h_i\colon Y_i\to A$ form a cocone and thus induce a unique morphism
    $h\colon Y\to A$ with $h_i = h\o y_i$ for all $i\in I$.
  \item For every ffg equation $e\colon X \to FX + Y + A$ there exists a
    factorization
    \[ e = (\,X\xra{e_i} FX+Y_i+A\xra{FX+y_i+A} FX+Y+A\,) \]
    with $i\in I$. We put $e^\ddagger := e_i^{\dag,i}$ (and prove below
    that this is independent of the choice of $i$). 
  \end{enumerate}
\end{notation}

\begin{lemma}\label{lem:compfamalg_to_elgotfy}
  Every compatible family \eqref{eq:compfamalg} of ffg-Elgot algebras
  induces an ffg-Elgot algebra $(A,[a,h],\ddagger)$.
\end{lemma}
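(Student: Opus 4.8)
The plan is to verify, one by one, that $\ddagger$ is well defined and that it turns $(A,[a,h])$ into an ffg-Elgot algebra by checking the solution property~\eqref{eq:sol}, Weak Functoriality and Compositionality; in each case one reduces to the corresponding property of one of the operations $(\dash)^{\dag,i}$ by factoring the ffg-equations involved through a single index $i\in I$ and invoking compatibility of the family~\eqref{eq:compfamalg}. \emph{Well-definedness.} Since $X$ is ffg, hence fp, and $FX+Y+A=\colim_{i\in I}(FX+Y_i+A)$ (the index category $I$ of $D_Y$ is filtered, hence connected, so the coproduct functor $FX+A+(\dash)$ preserves this colimit), the hom-functor $\C(X,\dash)$ turns it into a filtered colimit $\C(X,FX+Y+A)=\colim_{i\in I}\C(X,FX+Y_i+A)$ whose injections send $e_i$ to $(FX+y_i+A)\o e_i$. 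Hence any two factorizations of a given $e\colon X\to FX+Y+A$ (possibly at different indices) are identified after composition with suitable connecting morphisms of $D_Y$, by \autoref{R:new}\ref{R:new:2}, and compatibility of~\eqref{eq:compfamalg} then forces the corresponding values $e_i^{\dag,i}$ to agree, so $e^\ddagger=e_i^{\dag,i}$ is independent of all choices.

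\emph{Solution.} Write $e=(FX+y_i+A)\o e_i$ with $e_i\colon X\to FX+Y_i+A$. By \autoref{lem:compfamalg_to_elgot} we have $h_i=h\o y_i$, hence $[a,h_i]=[a,h]\o(FA+y_i)$ and $[[a,h_i],A]=[[a,h],A]\o(FA+y_i+A)$. Substituting this into the solution equation of $e_i$ in $(A,[a,h_i])$ and using the interchange identity $(FA+y_i+A)\o(Fe^\ddagger+Y_i+A)=(Fe^\ddagger+Y+A)\o(FX+y_i+A)$ shows at once that $e^\ddagger=e_i^{\dag,i}$ satisfies~\eqref{eq:sol} for $e$ in $(A,[a,h])$; this mirrors the ``Solution'' step in the proof of \autoref{lem:elgotfy_to_compfamalg}, read in the opposite direction.

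\emph{Weak Functoriality.} Given a coalgebra homomorphism $m$ from $e\colon X\to FX+Y+Z$ to $f\colon X'\to FX'+Y+Z$ (with $Z$ ffg) and a morphism $g\colon Z\to A$, I would first use filteredness of $I$ to factor both $e$ and $f$ through one common index $i$, say $e=(FX+y_i+Z)\o e_i$ and $f=(FX'+y_i+Z)\o f_i$, and then, after enlarging $i$ once more via \autoref{R:new}\ref{R:new:2}, assume that $m$ is a genuine coalgebra homomorphism from $e_i$ to $f_i$ for $F(\dash)+Y_i+Z$. Since $g\aft e=(FX+y_i+A)\o(g\aft e_i)$ and $g\aft f=(FX'+y_i+A)\o(g\aft f_i)$ --- where $\aft$ on the right-hand sides is w.r.t.\ $F(\dash)+Y_i$ --- the definition of $\ddagger$ gives $(g\aft e)^\ddagger=(g\aft e_i)^{\dag,i}$ and $(g\aft f)^\ddagger=(g\aft f_i)^{\dag,i}$, and Weak Functoriality of $(\dash)^{\dag,i}$ then yields $(g\aft e)^\ddagger=(g\aft f)^\ddagger\o m$.

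\emph{Compositionality.} Given ffg-equations $e\colon X\to FX+Y+Z$ and $f\colon Z\to FZ+Y+A$ (so $X+Z$ is ffg, being a finite coproduct of ffg objects), factor both through a common index $i$: $e=(FX+y_i+Z)\o e_i$ and $f=(FZ+y_i+A)\o f_i$. By well-definedness I may use these factorizations, so $f^\ddagger=f_i^{\dag,i}$, whence $f^\ddagger\aft e=(FX+y_i+A)\o(f_i^{\dag,i}\aft e_i)$ and therefore $(f^\ddagger\aft e)^\ddagger=(f_i^{\dag,i}\aft e_i)^{\dag,i}$ (the operations on the right being for $F(\dash)+Y_i$). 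The central step is the identity
\[
  e\sq f = (F(X+Z)+y_i+A)\o(e_i\sq f_i),
\]
where $\sq$ on the left is for $F(\dash)+Y$ and on the right for $F(\dash)+Y_i$; this I would prove by a direct diagram chase expanding the definition of $\sq$ and using naturality of the morphisms $\can$, entirely analogous to the identity $\ol{e\sq f}=\ol e\sq\ol f$ in the proof of \autoref{lem:elgotfy_to_compfamalg}, with pushforward along $y_i$ replacing the coproduct coprojection. Combined with the definition of $\ddagger$ it gives $(e\sq f)^\ddagger=(e_i\sq f_i)^{\dag,i}$, and then Compositionality of $(\dash)^{\dag,i}$ finishes the argument:
\[
  (f^\ddagger\aft e)^\ddagger = (f_i^{\dag,i}\aft e_i)^{\dag,i} = (e_i\sq f_i)^{\dag,i}\o\inl = (e\sq f)^\ddagger\o\inl.
\]
I expect the main obstacle to lie precisely here: arranging a single index $i$ that simultaneously accommodates $e$, $f$ and $f^\ddagger$, and carrying out the diagram chase for the displayed identity while keeping the two distinct $\can$-morphisms (for $F(\dash)+Y$ and for $F(\dash)+Y_i$) apart.
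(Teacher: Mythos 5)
Your proposal is correct and follows essentially the same route as the paper's proof: well-definedness via filteredness of $D_Y$ plus compatibility, the solution property by the evident diagram, Weak Functoriality by factoring $e$ and $f$ through a common index and then enlarging it so that $m$ becomes a genuine coalgebra homomorphism at a finite stage, and Compositionality via the key identity expressing $e\sq f$ as a pushforward of $e_i\sq f_i$ along $y_i$. The level of detail you leave to diagram chases (e.g.\ the displayed $\sq$-identity) matches what the paper itself omits.
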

\begin{proof}
  We first observe that the factorization of $e$ exists because
  $(FX+Y_i+A\xra{FX+y_i+A} FX+Y+A)_{i\in I}$ is a filtered colimit
  cocone and $X$, being an ffg object, is finitely presentable. Let us
  show that $\ddagger$ well-defined, i.e.~independent of the choice of
  the factorization. To see this, suppose that another factorization
  $e=(FX+y_j+A)\o e_j$ is given. Since $D_Y$ is filtered, there exists
  $k\in I$ and connecting morphisms $y_{ik}\colon Y_i\to Y_k$ and
  $y_{jk}\colon Y_j\to Y_j$ with
  $e_k := (FX+y_{ik}+A)\o e_i = (FX+y_{jk}+A)\o e_j$. Then
  compatibility of the given family of ffg-Elgot algebras shows that
\[ e_i^{\dag,i} = e_k^{\dag,k} = e_j^{\dag,j},\]
as required.

It remains to show that $(A,[a,h],\ddagger)$ is an ffg-Elgot algebra.

\medskip
\emph{Solution.} Consider the following diagram:
\[
  \xymatrix@C+2em{
    X \ar[r]^{e^\ddag} \ar[d]^{e}
    &
    A
    \\
    FX+Y+A \ar[r]^{Fe^\ddag+Y+A}
    &
    FA+Y+A \ar[u]^{[a,h,A]}
    \\
    FX+Y_i+A
    \ar[u]_{FX+y_i+A}
    \ar[r]_-{Fe^\ddag+Y_i+A}
    \ar@{<-} `l[u] `[uu]^{e_i} [uu] 
    &
    FA+Y_i+A
    \ar[u]^{FA+y_i+A}
    \ar `r[u] `[uu]_{[a,h_i,A]} [uu]
}
\]
Its outside commutes because $e^\ddag = e_i^{\dag,i}$ and
$e_i^{\dag,i}$ is a solution of $e_i$. All other parts except,
perhaps, the upper one commute trivially. Therefore, the upper part commutes,
showing that $e^\ddag$ is a solution of $e$.

\medskip \emph{Weak Functoriality.} Suppose that we are given
ffg-equations $e\colon X \to FX + Y + Z$ and
$f\colon X' \to FX' + Y + Z$, where $Z$ is an ffg object, a
coalgebra homomorphism $m$ from $e$ to $f$, and a morphism
$g: Z \to A$. We choose factorizations
\[
  \xymatrix{
    X \ar[r]^-{e} \ar[rd]_-{e_i} & FX + Y + Z \\
    & FX + Y_i + Z \ar[u]_{FX + y_i + Z}
  }
  \qquad
  \xymatrix{
    X' \ar[r]^-{f} \ar[rd]_-{f_i} & FX' + Y + Z \\
    & FX' + Y_i + Z \ar[u]_{FX' + y_i + Z}
  }
\]
for some $i \in I$; note that we may choose the same $i$ for both $e$
and $f$ since $D_Y$ is filtered. Then in the following diagram the
outside and all inner parts except the left-hand square commute:
\[
  \xymatrix{
    X \ar[r]^-{e_i} \ar[d]_m
    &
    FX + Y_i + Z \ar[rr]^-{FX + y_i + Z}
    \ar[d]^{Fm + Y_i + Z}
    &&
    FX + Y + Z
    \ar[d]^{Fm + Y + Z}
    \ar@{<-} `u[l] `[lll]_-e [lll]
    \\
    X'
    \ar[r]_-{f_i}
    &
    FX' + Y_i + Z
    \ar[rr]_-{FX' + y_i + Z}
    &&
    FX' + Y + Z
    \ar@{<-} `d[l] `[lll]^-f [lll]
    }
\]
Hence, it follows that the two morphisms
\[
  (Fm+Y_i+Z)\o e_i,\; f_i\o m\colon X\to FX' + Y_i+ Z
\]
are merged by the colimit injection $F\ol X + y_i + Z$. Since $X$ is
an ffg object and $D_Y$ is filtered, some
connecting morphism $FX+y_{ij}+Z$ with $j\in I$ merges them, too.
Put
\[
  e_j := (FX+y_{ij}+Z)\o e_i
  \quad\text{and}\quad
  f_j := (F\ol X + y_{ij}+Z)\o \ol f_i.
\]
Then the outside of the following diagram commutes:
\[
  \xymatrix{
    X \ar[r]^-{e_i} \ar[d]_m
    &
    FX + Y_i + Z \ar[rr]^-{FX + y_{ij} + Z}
    \ar[d]^{Fm + Y_i + Z}
    &&
    FX + Y_j + Z
    \ar[d]^{Fm + Y_j + Z}
    \ar@{<-} `u[l] `[lll]_-{e_j} [lll]
    \\
    X'
    \ar[r]_-{f_i}
    &
    FX' + Y_i + Z
    \ar[rr]_-{FX' + y_{ij} + Z}
    &&
    FX' + Y_j + Z
    \ar@{<-} `d[l] `[lll]^-{f_j} [lll]
    }
\]
Now observe that $g \aft e$ factorizes through $g \aft e_i$ as follows:
\[
  \xymatrix{
    X \ar[r]^-e \ar[rd]_-{e_i}
    &
    FX + Y + Z
    \ar[rr]^-{FX + Y + g}
    &&
    FX + Y + A
    \ar@{<-} `u[l] `[lll]_-{g \aft e} [lll]
    \\
    &
    FX + Y_i + Z
    \ar[rr]_-{FX + Y_i + g}
    \ar[u]_{FX + y_i + Z}
    &&
    FX + Y_i + A
    \ar[u]_{FX + y_i + A}
    \ar@{<-} `d[l] `[lll]^-{g \aft e_i} [lllu]
    }
\]
Similarly for $g \aft f$. Furthermore note that $(FX + y_{ij} + A) \o
(g \aft e_i) = g \aft e_j$:
\[
  \xymatrix{
    X
    \ar[r]^-{e_i}
    \ar[rd]_-{e_j}
    &
    FX + Y_i + Z
    \ar[d]^{FX + y_{ij} + Z}
    \ar[rr]^-{FX + Y_i + g}
    &&
    FX + Y_i + A
    \ar[d]^{FX + y_{ij} + A}
    \ar@{<-} `u[l] `[lll]_-{g \aft e_i} [lll]
    \\
    &
    FX + Y_j + Z
    \ar[rr]_-{FX + Y_j + g}
    &&
    FX + Y_j + A
    \ar@{<-} `d[l] `[lll]^-{g \aft e_j} [lllu]
    }
\]
and similarly $(FX + y_{ij} + A) \o (g\aft f_i) = g\aft f_j$.
Thus, we obtain the Weak Functoriality of $\ddag$ from that of $\dag,i$:
\begin{align*}
  (g \aft e)^\ddag &= (g \aft e_i)^{\dag,i} & \text{def.~of $\ddag$} \\
  &= ((FX + y_{ij} + A) \o (g \aft e_i))^{\dag,j} & \text{compatibility} \\
  &= (g \aft e_j)^{\dag,j} \\
  &= (g \aft f_j)^{\dag,j} \o m & \text{Weak Functoriality of
    $\dag,j$} \\
  &= ((FX + y_{ij} + A) \o (g \aft f_i))^{\dag,j} \o m \\  
  &= (g \aft f_i)^{\dag,i}\o m & \text{compatibility} \\
  &= (g \aft f)^\ddag & \text{def.~of $\ddag$} 
\end{align*}

\medskip
\emph{Compositionality.} Let $e\colon X\to FX+Y+A$ and $f\colon Z\to FZ+Y+A$ be two ffg-equations. Factorize $e = (FX+y_i+A)\o e_i$ and $f = (FX+y_i+A)\o f_i$ with $i\in I$. Then
\begin{align*}
(f\sq e)^\ddag\o \inl &= (f_i\sq e_i)^{\dag,i}\o \inl\\
&= (f_i^{\dag,i}\bullet e_i)^{\dag,i}\\
&= (f^{\ddag}\bullet e_i)^{\dag,i}\\
&= (f^\ddag\bullet e)^\ddag
\end{align*}
Here the first equation uses the definition of $\ddag$ and the fact
that $f\sq e = (FX+y_i+A)\o (f_i\sq e_i)$. The second equation is
compositionality of $(\dash)^{\dag,i}$, the third one uses that
$f^\ddag = f_i^{\dag,i}$ by the definition of $\ddag$, and the last
equation uses the definition of $\ddag$ and the fact that
$(f^\ddag\bullet e) = (FX+y_i+A)\o (f^\ddag\bullet e_i)$.
\end{proof}

\paragraph{Proof of \autoref{thm:initialfy}.}
In order to complete the proof of Theorem~\ref{thm:corr} (and
therefore that of Theorem~\ref{thm:initialfy}), observe that the
constructions of Lemma~\ref{lem:elgot_to_compfamalg}
and~\ref{lem:compfamalg_to_elgot} are mutually inverse; the proof is
completely analogous to parts (3a) and (3b) of the proof in
Subsection~\ref{S:partA}. Moreover, the constructions of
Lemma~\ref{lem:compfamalg_to_elgotfy}
and~\ref{lem:elgotfy_to_compfamalg} are
clearly mutually inverse.

%
%
%

\subsection{Free FFG-Elgot Algebras}
\label{S:free}

We will now prove that for a free object $Y$ of $\C$ the free
ffg-Elgot algebra on $Y$ is given by the locally ffg fixed point
$\phi(F(-) + Y)$. We begin with a consequence of
Theorem~\ref{thm:initialfy}. For the forgetful functor of ffg-Elgot
algebras
\[ U_F\colon \ffgElgot F\to \C \] recall that the category
$Y\downarrow U_F$ has as objects all morphisms
$y\colon Y\to U_F(A,a,\dag)$, and morphisms into
$y'\colon Y\to U_F(B,b,\ddag)$ are the solution-preserving morphisms
$p\colon (A,a, \dag)\to (B,b,\ddag)$ with $p\o y = p'$. Denote by
$\pi\colon Y\downarrow U_F \to \C$ the projection functor given by
$\pi(y) = A$.
\begin{proposition}\label{prop:catiso}
  For every free object $Y$ of $\C$ there is an isomorphism $I$ of
  categories
  \iffull
  making the following triangle commutative:
  \[
    \xymatrix@-1pc{
      \ffgElgot (F(-)+Y) \ar[dr]_-{U_{F(-)+Y}} \ar[rr]^-I 
      && 
      Y\downarrow U_F \ar[dl]^-\pi
      \\
      & \C
    }
  \]
  \else
  such that
  \[
    U_{F(-) + Y} = (\xymatrix@1{
      \ffgElgot (F(-)+Y)
      \ar[r]^-I & Y\downarrow U_F \ar[r]^-\pi & \C
    }).
  \]
  \fi
  It is given by $(A,[a,h],\ddag) \;\mapsto\; (h\colon Y\to U_F(A,a,\dag))$.
\end{proposition}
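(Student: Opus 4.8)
The plan is to derive the statement from \autoref{thm:initialfy}: that result already encodes the bijection on objects, and the bijection on morphisms then follows by a short computation with the explicit correspondence of \autoref{rem:passages}.

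\emph{Objects.} Every object of $\ffgElgot(F(-)+Y)$ carries an $(F(-)+Y)$-algebra structure of the form $[a,h]\colon FA+Y\to A$, where $a=[a,h]\o\inl$ and $h=[a,h]\o\inr$; so such an object is precisely a quadruple $(A,a,h,\ddag)$ consisting of an $F$-algebra $(A,a)$, a morphism $h\colon Y\to A$, and a solution operator $\ddag$ making $(A,[a,h],\ddag)$ an ffg-Elgot algebra for $F(-)+Y$. Since $Y$ is free, \autoref{thm:initialfy} tells us that for fixed $(A,a)$ and $h$ these operators $\ddag$ correspond bijectively, via the passages of \autoref{rem:passages}, to operators $\dag$ making $(A,a,\dag)$ an ffg-Elgot algebra for $F$. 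Hence the data $(A,a,h,\ddag)$ is the same as an ffg-Elgot algebra $(A,a,\dag)$ for $F$ together with a morphism $h\colon Y\to U_F(A,a,\dag)$, i.e.\ an object of $Y\downarrow U_F$; this is the action of $I$ on objects, and it is a bijection.

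\emph{Morphisms.} Let $(A,[a,h],\ddag)$ and $(B,[b,k],\ddag')$ be objects corresponding to ffg-Elgot algebras $(A,a,\dag)$ and $(B,b,\dag')$ for $F$. I would show that a $\C$-morphism $p\colon A\to B$ preserves solutions with respect to $F(-)+Y$ iff it preserves solutions with respect to $F$ and satisfies $p\o h=k$. One half of the side condition is free: by \autoref{lem:solpres} a morphism of ffg-Elgot algebras for $F(-)+Y$ is an $(F(-)+Y)$-algebra homomorphism, and restricting $p\o[a,h]=[b,k]\o(Fp+Y)$ to the summand $Y$ gives $p\o h=k$; the other half is built into the definition of $Y\downarrow U_F$. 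For the equivalence of the two solution-preservation conditions I would use two identities, both immediate from the coproduct structure and already present in the proof of \autoref{thm:initialfy}: for an $F$-equation $e\colon X\to FX+A$ we have $\ol{p\aft e}=p\aft\ol e$ (cf.\ \eqref{eq:barprop}, with $\aft$ on the right taken for $F(-)+Y$ and $\ol{(\dash)}$ as in \eqref{eq:barr}), and for an $F(-)+Y$-equation $e\colon X\to FX+Y+A$, using $k=p\o h$, we have $(p\aft e)_k=p\aft e_h$ (with $\aft$ on the left for $F(-)+Y$, on the right for $F$, and $(\dash)_h,(\dash)_k$ as in \eqref{eq:eh}). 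Together with the defining formulas $e^\dag=\ol e^\ddag$ and $e^\ddag=e_h^\dag$ of \autoref{rem:passages}, each implication becomes a one-line calculation: if $p$ preserves $F$-solutions then
\[
  (p\aft e)^{\ddag'}=\big((p\aft e)_k\big)^{\dag'}=(p\aft e_h)^{\dag'}=p\o e_h^\dag=p\o e^\ddag ,
\]
and if $p$ preserves $F(-)+Y$-solutions then
\[
  (p\aft e)^{\dag'}=\ol{p\aft e}^{\ddag'}=(p\aft\ol e)^{\ddag'}=p\o\ol e^\ddag=p\o e^\dag .
\]
Thus $I$ is a bijection on each hom-set.

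\emph{Conclusion.} Finally I would observe that $I$ keeps the underlying $\C$-morphism, so it is functorial, and that $\pi\o I=U_{F(-)+Y}$ because both send $(A,[a,h],\ddag)$ to $A$ and $p$ to $p$; being bijective on objects and fully faithful, $I$ is an isomorphism of categories. I do not expect a genuine obstacle here: the one point that repays attention is that $p\o h=k$ is not an extra hypothesis but a consequence of \autoref{lem:solpres}, and this is exactly what makes the morphism sets on the two sides agree.
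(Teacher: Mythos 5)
Your proposal is correct and follows essentially the same route as the paper's own proof: objects are handled by \autoref{thm:initialfy}, the side condition $p\o h=k$ is extracted from \autoref{lem:solpres}, and the equivalence of the two solution-preservation conditions rests on the identities $\ol{p\aft e}=p\aft\ol e$ (i.e.~\eqref{eq:barprop}) and $p\aft e_h=(p\aft e)_{k}$, exactly as in the paper. Your two one-line computations just make explicit what the paper states slightly more tersely.
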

\begin{proof}
  Using \autoref{thm:initialfy}, we just need to verify for every 
  pair of ffg-Elgot algebras $(A,[a,h],\ddag)$ and
  $(A',[a',h'],\ddag')$ that a morphism $p\colon A\to A'$ is
  solution-preserving for $F(\dash)+Y$ iff it is solution-preserving
  for $F$ and satisfies $h'=p\o h$.

  \medskip\noindent
  ($\To$) If $p$ is solution-preserving for $F(\dash)+Y$,
  then by Lemma \ref{lem:solpres} it is a homomorphism, i.e.
  $p\o [a,h] = [a',h']\o Fp$. This implies $p\o h = h'$. Moreover, for
  every ffg-equation $e\colon X\to FX+A$ the ffg-equation
  \[
    \ol e = X \xrightarrow{e} FX + A \xrightarrow{[\inl,\inr]} FX + Y
    + A
  \]
  satisfies
  $p\o \ol e^\ddag = (p\bullet \ol e)^\ddag = \ol{p\bullet e}^\ddag$,
  using~\eqref{eq:barprop}, that is, $p\o e^\dag = (p\bullet e)^\dag$.

  \medskip\noindent
  ($\Leftarrow$) If $p$ is solution-preserving for $F$ and
  $h'=p\o h$, then for every ffg-equation $e\colon X\to FX+Y+A$ we
  know that $p\o e_h^\dag = (p\bullet e_h)^\dag$ (recalling $e_h$ from
  \autoref{rem:passages}(1)). In order to derive $p\o e^\ddag =
  (p\bullet e)^\ddag$, it remains to verify
  that $p\bullet e_h = (p\bullet e)_{h'}$, which follows from the following
  commutative diagram: 
  \[
    \xymatrix@C+2em{
      X \ar[r]^-e \ar[dr]_-e
      &
      FX+Y+A \ar[r]^{FX+[h,A]} \ar@{=}[d]
      &
      FX+A \ar[r]^{FX+p}
      &
      FX+A'
      \ar@{<-} `u[l] `[lll]_-{p \aft e_h} [lll]
      \\
      &
      FX+Y+A \ar[rr]_{FX+Y+p}
      &&
      FX+Y+A' \ar[u]_-{\labelstyle FX+[h',A]}
      \ar@{<-} `d[l] `[lll]^-{p \aft e} [lllu]
    }
    \vspace*{-16pt}
\]
\end{proof}

\begin{construction}\label{C:free}
  Given an object $Y$ of $\C$, we denote by $\Phi Y$ the colimit of all
  ffg-coalgebras for $F(-)+Y$, that is, $\Phi Y = \phi(F(-)+Y)$.  Its
  coalgebra structure is invertible~\cite{Urbat17}, and we denote by \iffull
\[ t_Y\colon F\Phi Y \to \Phi Y\qquad\text{and}\qquad \eta_Y\colon
  Y\to \Phi Y \]
\else
$t_Y\colon F\Phi Y \to \Phi Y$ and $\eta_Y\colon Y\to \Phi Y$
\fi
the components of its inverse. 

The $F$-algebra $(\Phi Y, t_Y)$ is endowed with a canonical solution
operation $\dag$ defined as follows. Given an ffg-equation
$e\colon X\to FX+\Phi Y$, put
\[
  \ol e = (X\xra{e} FX+\Phi Y \xra{FX+\inl} FX+Y+\Phi Y).
\]
This ffg-equation for $F(-)+Y$ has a solution $\ol e^\ddag$ in the
ffg-Elgot algebra $\Phi Y$, and we put
\[
  e^{\dag} := (X \xrightarrow{\ol e^\ddag} \Phi Y). 
\]
\end{construction}
\begin{theorem}\label{thm:free}
 For every free object $Y$ of $\C$, the algebra $(\Phi Y, t_Y)$ with the solution operation $\dagger$ is
  a free ffg-Elgot agebra for $F$ on $Y$. 
\end{theorem}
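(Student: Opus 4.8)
The plan is to derive the theorem from two facts already at our disposal: the initiality of the locally ffg fixed point for the functor $F(-)+Y$ (\autoref{thm:ini}) and the category isomorphism of \autoref{prop:catiso}. The endofunctor $F(-)+Y$ preserves sifted colimits (this is already implicitly used to guarantee, via Urbat's result, that $\Phi Y = \phi(F(-)+Y)$ is a fixed point: $F$ preserves sifted colimits and $(-)+Y$ preserves connected colimits, a sifted category being nonempty and connected). Hence $F(-)+Y$ falls under our standing assumption and \autoref{thm:ini} applies to it. Therefore $\Phi Y$, equipped with the $F(-)+Y$-algebra structure $[t_Y,\eta_Y]\colon F\Phi Y + Y \to \Phi Y$ — the inverse of its coalgebra structure, split into its two coproduct components exactly as in \autoref{C:free} — and with the canonical solution operator $\ddagger$ of \autoref{constr:defsol}, is the \emph{initial} ffg-Elgot algebra for $F(-)+Y$.

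Next I would transport this initial object along the isomorphism $I\colon \ffgElgot(F(-)+Y)\to Y\downarrow U_F$ of \autoref{prop:catiso}. By construction $I$ sends an ffg-Elgot algebra $(A,[a,h],\ddagger)$ for $F(-)+Y$ to the arrow $h\colon Y\to U_F(A,a,\dagger)$, where $\dagger$ is the solution operator for $F$ associated to $\ddagger$ by the correspondence of \autoref{thm:initialfy}, i.e.~by passage~(2) of \autoref{rem:passages}: one sets $e^\dagger := \ol e^\ddagger$ for $\ol e$ the composite of $e\colon X\to FX+A$ with the evident injection $FX+A\hookrightarrow FX+Y+A$. Specialising to $A=\Phi Y$, $a=t_Y$, $h=\eta_Y$, this is \emph{verbatim} the definition of the canonical solution operator $\dagger$ on $(\Phi Y, t_Y)$ given in \autoref{C:free}. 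Consequently $I$ carries the initial ffg-Elgot algebra $(\Phi Y, [t_Y,\eta_Y],\ddagger)$ for $F(-)+Y$ to the object $\bigl(\eta_Y\colon Y\to U_F(\Phi Y, t_Y,\dagger)\bigr)$ of $Y\downarrow U_F$, with $\dagger$ exactly as in \autoref{C:free}.

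Finally, since $I$ is an isomorphism of categories it preserves and reflects initial objects, so $\bigl(\eta_Y\colon Y\to U_F(\Phi Y, t_Y,\dagger)\bigr)$ is initial in the comma category $Y\downarrow U_F$. An initial object of $Y\downarrow U_F$ is, by definition, a universal arrow from $Y$ to the forgetful functor $U_F\colon \ffgElgot F\to\C$. In other words, $(\Phi Y, t_Y,\dagger)$ together with the unit $\eta_Y\colon Y\to\Phi Y$ is a free ffg-Elgot algebra on $Y$, which is the assertion of the theorem.

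I do not anticipate a genuine obstacle: the proof is essentially a bookkeeping argument assembling results already established. The only two points requiring (minor) care are that $F(-)+Y$ is covered by our standing assumption, so that \autoref{thm:ini} is applicable, and that the solution operator produced from the canonical $\ddagger$ on $\Phi Y$ via \autoref{rem:passages}(2) coincides on the nose with the operator $\dagger$ of \autoref{C:free}; both amount to a short comparison of the two defining diagrams.
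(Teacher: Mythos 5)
Your proposal is correct and follows essentially the same route as the paper's own proof: apply \autoref{thm:ini} to the functor $F(-)+Y$ to get initiality of $\Phi Y=\phi(F(-)+Y)$ in $\ffgElgot(F(-)+Y)$, and then transport this along the isomorphism of \autoref{prop:catiso} to obtain the universal arrow $\eta_Y\colon Y\to\Phi Y$. Your two points of extra care (that $F(-)+Y$ preserves sifted colimits, and that the operator induced via \autoref{rem:passages}(2) is the one of \autoref{C:free}) are both valid and are left implicit in the paper.
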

\begin{proof}
  We prove that $\eta_Y\colon Y \to \Phi Y$ in \autoref{C:free} is the
  universal morphism. $\Phi Y$ is an ffg-Elgot algebra since, together
  with $\eta_Y$, it corresponds to the initial ffg-Elgot algebra
  $\phi(F(-)+Y)$ under the isomorphism of \autoref{prop:catiso}. This
  follows from \autoref{thm:ini} applied to $F(-) + Y$.  To verify its
  universal property, let $(A,a,\dag)$ be an ffg-Elgot algebra for $F$
  and $h\colon Y\to A$ a morphism. \autoref{prop:catiso} gives an
  ffg-Elgot algebra $(A,[a,h],\oplus)$ for $F(-)+Y$ with
  $e^\dag = \ol e^\oplus$ for all ffg-equations $e\colon X\to FX+A$
  (cf.~\autoref{rem:passages}). Furthermore, \autoref{prop:catiso}
  states that a morphism $p\colon \Phi Y\to A$ in $\C$ is
  solution-preserving w.r.t.~$F(-)+Y$ if and only if it is
  solution-preserving w.r.t.~$F$ and satisfies $p\o \eta_Y =
  h$. Therefore, the universal property of $\eta_Y\colon Y \to \Phi Y$
  w.r.t. $F$ follows from the initiality of $\Phi Y$ w.r.t.~$F(-)+Y$.
\end{proof}
\subsection{Monadicity of FFG-Elgot Algebras}
\label{S:mon}

We will now prove that the forgetful functor
$U_F\colon \ffgElgot F \to \C$ is monadic. This means that all
ffg-Elgot algebras form an algebraic category over the given variety
$\C$. To this end we must first establish that its forgetful functor
has a left-adjoint, which assigns to every object $Y$ of $\C$ a free
ffg-Elgot algebra on $Y$. So far we have seen in Theorem~\ref{thm:free}
that on every free object $Y$ we have a free ffg-Elgot algebra on
$Y$. To extend this to arbitrary objects of $\C$ we will make use of
the following result.

\begin{proposition}\label{prop:siftedcolimits}
  The forgetful functor $U_F\colon \ffgElgot F\to \C$ creates sifted
  colimits.
\end{proposition}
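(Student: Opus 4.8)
The plan is to verify the hypotheses of Beck's monadicity theorem by reducing everything to the corresponding facts in the base variety $\C$, using that the forgetful functor $U\colon \ffgElgot F \to \C$ is known (Lemma~\ref{lem:solpres}) to send ffg-Elgot morphisms to $F$-algebra homomorphisms, and that $\C$ itself has all sifted colimits computed on underlying sorted sets. Concretely, let $D\colon \mathscr{D}\to \ffgElgot F$ be a diagram whose scheme $\mathscr{D}$ is sifted, writing $D d = (A_d, a_d, \dagger_d)$. First I would form the colimit $A = \colim_{d} A_d$ in $\C$ with injections $\kappa_d\colon A_d \to A$. Since $F$ preserves sifted colimits (our standing Assumption), $FA = \colim_d FA_d$, so the cocone $(\kappa_d \o a_d)_{d}$ induces a unique $F$-algebra structure $a\colon FA\to A$ making each $\kappa_d$ an $F$-algebra homomorphism; this is the only candidate for the carrier and algebra structure of the colimit in $\ffgElgot F$, which already gives uniqueness of the lift once we know it exists.

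The heart of the argument is constructing the solution operator $\dagger$ on $(A,a)$ and checking it is the unique one for which all $\kappa_d$ become solution-preserving. Given an ffg-equation $e\colon X \to FX + A$, note that $FX + A = \colim_d (FX + A_d)$ is again a sifted colimit preserved by $\C(X,-)$ because $X$ is ffg (hence finitely presentable, and a split quotient of an ffg object), so $e$ factors as $e = (FX + \kappa_d)\o e_d$ for some $d$ and some $e_d\colon X \to FX + A_d$; I would then \emph{define} $e^\dagger := \kappa_d \o e_d^{\dagger_d}$. The two things to check here are (i) well-definedness, i.e.\ independence of the choice of factorization: two factorizations through $d$ and $d'$ can, by the colimit description of $\C(X,-)\circ D$ (using \autoref{R:new} together with \autoref{R:sifted}, which reduces sifted to filtered-plus-reflexive-coequalizers), be connected by a zig-zag of connecting morphisms of $\mathscr{D}$, and Weak Functoriality of each $\dagger_{d''}$ applied along each leg of the zig-zag — since $D$ sends every arrow of $\mathscr{D}$ to a solution-preserving morphism — forces the resulting values in $A$ to agree; and (ii) that $e^\dagger$ is genuinely a solution of $e$ in $(A,a)$, which follows by pasting the solution square for $e_d$ in $(A_d, a_d)$ with the commuting squares expressing that $\kappa_d$ is an $F$-algebra morphism and that $e = (FX+\kappa_d)\o e_d$.

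It then remains to verify Weak Functoriality and Compositionality for $\dagger$, and that each $\kappa_d$ preserves solutions. Solution-preservation of $\kappa_d$ is essentially immediate from the definition: given $e\colon X \to FX + A_d$, the composite $(FX+\kappa_d)\o e$ factors through $d$ via $e$ itself, so $((FX+\kappa_d)\o e)^\dagger = \kappa_d \o e^{\dagger_d}$. For Weak Functoriality, given a coalgebra homomorphism $m\colon (X,e) \to (X',f)$ for $F(-)+Z$ with $Z$ ffg and $g\colon Z\to A$, I would factor $g$ through some $\kappa_d$ (possible since $Z$ is ffg), pull $e$, $f$ back along $g$'s factorization to $F(-)+Z$-coalgebras over $A_d$ as in the proof of \autoref{prop:phi_is_elgot}, apply Weak Functoriality of $\dagger_d$ there, and push forward by $\kappa_d$; Compositionality is analogous, factoring the finitely many relevant morphisms through a common $d$ using that $\mathscr{D}$ is sifted hence filtered-ish enough (again via \autoref{R:sifted}) to merge them. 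The main obstacle — and the step deserving the most care — is exactly the well-definedness of $\dagger$ and the coherent choice of a common index $d$ in the Compositionality check: one must make sure that the zig-zags and common refinements produced by the colimit description really let Weak Functoriality be applied \emph{legally} (the zig-zag legs need not be coalgebra homomorphisms for the equation functors, only connecting morphisms of $\mathscr{D}$, but that is precisely what $D$ being a diagram in $\ffgElgot F$ supplies). Once $\dagger$ is shown to be a well-defined ffg-Elgot structure with the $\kappa_d$ solution-preserving, its uniqueness (any competing structure must agree with each $\kappa_d \o (-)^{\dagger_d}$ on factored equations, hence on all equations) finishes the creation of the colimit.
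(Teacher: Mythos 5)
Your plan follows essentially the same route as the paper's proof: form the colimit $\kappa_d\colon A_d \to A$ in $\C$, lift the $F$-algebra structure using that $F$ preserves sifted colimits (so the forgetful functor $\alg F \to \C$ creates them), define $e^\dag := \kappa_d \circ e_d^{\dag,d}$ by factoring $e$ through a colimit injection $FX+\kappa_d$ (legitimate because $X$ is ffg), verify the three axioms by factoring the parameter morphisms through the $A_d$, and read off uniqueness of $\dag$ from the same factorization. There is, however, one genuine omission: creating a colimit requires not only that the cocone $(\kappa_d)$ lift \emph{uniquely} to a cocone in $\ffgElgot F$, but also that the lifted cocone be a \emph{colimit} cocone there. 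Your closing sentence declares the proof finished once existence and uniqueness of the lifted structure are in place, but you must still show that for any cocone of solution-preserving morphisms $m_d\colon A_d \to B$ into an ffg-Elgot algebra $(B,b,\ddag)$, the unique mediating morphism $m\colon A\to B$ in $\C$ is itself solution-preserving. The verification is short --- factor $e=(FX+\kappa_d)\circ e_0$ and compute $(m\bullet e)^\ddag = (m_d\bullet e_0)^\ddag = m_d\circ e_0^{\dag,d} = m\circ e^\dag$ --- but it is a separate step, and the paper devotes one of its three parts to it.

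Two smaller points. In your well-definedness argument, the axiom doing the work along each leg of the zig-zag is not Weak Functoriality (which concerns coalgebra homomorphisms between the \emph{variable} objects for a fixed parameter object) but simply the fact that every connecting morphism $Dh\colon A_d\to A_{d'}$ of the diagram is solution-preserving, which yields $(Dh\bullet e_d)^{\dag,d'} = Dh\circ e_d^{\dag,d}$ directly; your parenthetical remark shows you are in fact leaning on the right fact, so this only needs rewording. Finally, the appeal to Beck's monadicity theorem in your opening sentence plays no role in the argument: Beck's theorem is what this proposition feeds \emph{into} (in the proof of monadicity of $U_F$), not a tool for proving it.
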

\proof
  Let $D\colon \D\to \ffgElgot F$ be a sifted diagram with objects
  $(A_d,a_d,(\dash)^{\dag,d})$ for $d\in \D$. Let
  \[ i_d\colon A_d\to A\quad (d\in\D) \] be a colimit cocone of
  $U_F\o D$ in $\C$. Since $F$ preserves sifted colimits, the
  forgetful functor from $\alg F$ to $\C$ creates them,
  i.e.~there exists a unique $F$-algebra structure $a\colon FA\to A$
  making every $i_d$ an $F$-algebra homomorphism:
  \[
    \xymatrix{
      FA_d  \ar[r]^{a_d} \ar[d]_{Fi_d} & A_d \ar[d]^{i_d} \\
      FA \ar[r]_a & A
    } 
  \]
  Moreover $(A,a)=\colim_{d\in \D}(A_d,a_d)$ in $\alg F$. We need
  to show that there is a unique solution operation $\dag$ on $(A,a)$
  such that $(A,a,\dag)$ is an ffg-Elgot algebra and every $i_d$ is
  solution-preserving, and moreover $i_d$ ($d\in\D$) is a colimit
  cocone in $\ffgElgot F$.

  \begin{enumerate}
  \item {Uniqueness of $\dag$.} Given a solution operation $\dag$ on
    the algebra $(A,a)$ for which all $i_d$'s are solution-preserving,
    then for every ffg-equation $e\colon X\to FX+A$, an explicit formula for
    $e^\dag$ is given as follows: since $FX+A$ is a sifted colimit of
    $FX+A_d$ ($d\in \D$) and $X$ is an ffg object, there exists a factorization
    \[  
      \xymatrix{
        X \ar[r]^-e \ar[dr]_{e_0} & FX+A \\
        & FX + A_d \ar[u]_{FX+i_d} 
      }
    \] 
    Thus $e=i_d\bullet e_0$, which implies
    \begin{equation}\label{eq:solcolimit}
      e^\dag = i_d\o e_0^{\dag,d}
    \end{equation}
    because $i_d$ is solution-preserving. This shows that $\dag$ is uniquely determined.

  \item {Existence of $\dag$.} The formula \eqref{eq:solcolimit}
    defines a solution operation $\dag$; the independence of the
    choice of the factorization is established as in the proof of
    Lemma \ref{lem:solution}. Let us verify that $(A,a,\dag)$ is an
    ffg-Elgot algebra.

    \medskip\noindent
    \emph{Solution.} $e^\dag$ is a solution of $e$:
    \[
      \xymatrix@C+2em{
        X
        \ar `u[r] `[rr]^-{e^\dag} [rr]
        \ar[r]^{e_0^{\dag,d}}
        \ar[d]^{e_0}
        &
        A_d \ar[r]^{i_d}
        &
        A
        \\
        FX+A_d
        \ar[r]^-{Fe_0^{\dag,d}+A_d} \ar[d]^{FX+i_d}
        &
        FA_d + A_d \ar[u]_{[a_d,A_d]} \ar[dr]^{Fi_d+i_d}
        \\
        FX+A \ar[rr]_-{Fe^\dag+A}
        \ar@{<-} `l[u] `[uu]^e [uu]
        & &
        FA+A \ar[uu]_{[a,A]} 
      }
    \]
    
    \medskip\noindent
    \emph{Weak Functoriality.} Suppose that we are given a coalgebra homomorphism
    \[
      \xymatrix{
        X \ar[r]^-e \ar[d]_m & FX+Z \ar[d]^{Fm+z}
        \\
        X' \ar[r]_-f & FX'+ Z
      }
    \]
    together with a morphism $h\colon Z\to A$, where $X$, $X'$ and $Z$
    are ffg objects. Factorize $h$ as in the triangle below:
    \[
      \xymatrix{
        Z \ar[r]^h \ar[dr]_{h'}& A\\
        & A_d \ar[u]_{i_d}
      }
    \]
    for some $d\in \D$.  Then the desired equality 
    \[ (h\bullet f)^\dag \o m = (h\bullet e)^\dag \]
    is established as follows:
    \begin{align*}
      (h\bullet f)^\dag\o m &= ((i_d\o h') \bullet f)^\dag \o m & \\
      &= (i_d\bullet (h'\bullet f))^\dag \o m & \text{Remark \ref{rem:square_bullet_properties}(1)} \\
      &= i_d \o (h'\bullet f)^{\dag,d}\o m & \text{def. $\dag$} \\
      &= i_d \o (h'\bullet e)^{\dag,d} & \text{$(\dash)^{\dag,d}$ weakly funct.} \\
      &= (i_d\bullet (h'\bullet e))^\dag & \text{def. $\dag$} \\
      &= ((i_d\o h')\bullet e)^\dag & \text{Remark \ref{rem:square_bullet_properties}(1)} \\
      &= (h\bullet e)^\dag &
    \end{align*}

    \medskip\noindent
    \emph{Compositionality.} Given ffg-equations
    $e\colon X\to FX+Y$ and $f\colon Y\to FY+A$, 
    factorize $f$ as follows:
    \[
      \xymatrix{
        Y \ar[r]^-f \ar[dr]_{f_0}& FY+A\\
        & FY+A_d \ar[u]_{FY+i_d}
      }
    \]
    for some $d\in \D$. Then we obtain
    \begin{align*}
      (f^\dag\bullet e)^\dag &= ((i_d\o f_0^{\dag,d})\bullet e)^\dag & \text{def. $\dag$} \\
      &= (i_d\bullet (f_0^{\dag,d} \bullet e))^\dag & \text{Remark \ref{rem:square_bullet_properties}(1)} \\
      &= i_d \o (f_0^{\dag,d}\bullet e)^{\dag,d} & \text{def. $\dag$} \\
      &= i_d \o (e\sq f_0)^{\dag,d}\o \inl & \text{$(\dash)^{\dag,d}$ compositional} \\
      &= (i_d\bullet (e\sq f_0))^\dag \o \inl & \text{def. $\dag$} \\
      &= (e\sq\, (i_d\bullet f_0))^\dag \o \inl & \text{Remark \ref{rem:square_bullet_properties}(2)} \\
      &= (e\sq f)^\dag \o \inl & 
    \end{align*}
    This completes the proof that $(A,a,\dag)$ is an ffg-Elgot algebra. 

  \item We prove that $(A,a,\dag)$ is a colimit of
    $(A_d,a_d,(\dash)^{\dag,d})$ ($d\in \D$). Thus suppose that an
    ffg-Elgot algebra $(B,b,\ddag)$ and a cocone of
    solution-preserving morphisms $m_d\colon A_d\to B$ ($d\in \D$) are
    given. We need to show that the unique morphism $m\colon A\to B$
    with $m\o i_d = m_d$ for all $d$ is solution-preserving. To this
    end, suppose that $e\colon X\to FX+A$ is an ffg-equation,
    factorized as follows:
    \[
      \xymatrix{
        X \ar[r]^-e \ar[dr]_{e_0}& FX+A\\
        & FX+A_d \ar[u]_{FX+i_d}
      }
    \]
    Then we obtain
    \begin{align*}
      (m\bullet e)^\ddag &= (m\bullet (i_d\bullet e_0))^\ddag & \\
      &= ((m\o i_d)\bullet e_0)^\ddag & \text{Remark \ref{rem:square_bullet_properties}(1)} \\
      &= (m_d\bullet e_0)^\ddag & \text{since $m\o i_d = m_d$} \\
      &= m_d \o e_0^{\dag,d} & \text{$m_d$ solution-preserving} \\
      &= m \o i_d \o e_0^{\dag,d} & \text{since $m\o i_d = m_d$} \\
      &= m\o e^\dag & \text{def. $\dag$}
    \end{align*}
    This completes the proof.\qed
  \end{enumerate}
\doendproof

\begin{theorem}\label{thm:mon}
  The forgetful functor $U_F\colon \ffgElgot F \to \C$ is monadic. 
\end{theorem}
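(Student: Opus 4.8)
The plan is to invoke the crude (reflexive) monadicity theorem: a functor is monadic as soon as it has a left adjoint, reflects isomorphisms, and its domain has, and the functor preserves, coequalizers of reflexive pairs. Two of these three conditions will come essentially for free. For reflexive coequalizers: by \autoref{prop:siftedcolimits}, $U_F$ creates sifted colimits, and since the scheme of a reflexive pair is a sifted category, reflexive coequalizers are sifted colimits (cf.\ \autoref{R:sifted}); hence $\ffgElgot F$ has reflexive coequalizers and $U_F$ creates — in particular preserves — them. For reflection of isomorphisms: I would show that if $h\colon (A,a,\dag)\to(B,b,\ddag)$ is solution-preserving and invertible in $\C$, then $h^{-1}$ is again a morphism of ffg-Elgot algebras. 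It is an $F$-algebra homomorphism by \autoref{lem:solpres}, and for every ffg-equation $e\colon X\to FX+B$ one computes, using Remark~\ref{rem:square_bullet_properties}(1) and that $h$ preserves solutions,
\[
  h\o (h^{-1}\aft e)^\dag \;=\; \bigl(h\aft(h^{-1}\aft e)\bigr)^\ddag \;=\; \bigl((h\o h^{-1})\aft e\bigr)^\ddag \;=\; e^\ddag,
\]
so $(h^{-1}\aft e)^\dag = h^{-1}\o e^\ddag$, i.e.\ $h^{-1}$ is solution-preserving.

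The main work will be to construct a left adjoint to $U_F$, i.e.\ a free ffg-Elgot algebra on \emph{every} object $Y$ of $\C$, whereas \autoref{thm:free} only handles \emph{free} objects $Y$. My plan is to pass to arbitrary $Y$ by a sifted colimit. Recall from Section~\ref{sec:vars} that $Y$ is the colimit of the canonical diagram $\C_\ffg/Y\to\C$, $(P\to Y)\mapsto P$, and that this colimit is sifted. The free-ffg-Elgot-algebra assignment of \autoref{thm:free} is functorial on $\C_\ffg$ (via the universal property of $\Phi$); precomposing it with the projection $\C_\ffg/Y\to\C_\ffg$ yields a sifted diagram in $\ffgElgot F$ whose image under $U_F$ is the diagram $(P\to Y)\mapsto \phi(F(-)+P)$ in $\C$, which has a colimit because $\C$ is cocomplete. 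Since $U_F$ creates sifted colimits, this diagram has a colimit in $\ffgElgot F$, which I will denote by $\Phi Y$ (in agreement, up to isomorphism, with \autoref{C:free} when $Y$ is free).

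I then want to check that $\Phi Y$ is indeed a free ffg-Elgot algebra on $Y$. For an ffg-Elgot algebra $B$, the functor $\ffgElgot F(-,B)$ carries the colimit cocone defining $\Phi Y$ to a limit cone, and \autoref{thm:free} identifies the factors, yielding
\[
  \ffgElgot F(\Phi Y, B)
  \;\cong\; \lim_{(P\to Y)\in\C_\ffg/Y} \ffgElgot F(\Phi P, B)
  \;\cong\; \lim_{(P\to Y)\in\C_\ffg/Y} \C(P, U_F B)
  \;\cong\; \C(Y, U_F B),
\]
where the last step uses $Y=\colim_{(P\to Y)\in\C_\ffg/Y}P$ in $\C$; naturality in $B$ is routine. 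Hence $Y\mapsto\Phi Y$ is left adjoint to $U_F$, with unit $\eta_Y\colon Y\to U_F\Phi Y$ corresponding to $\id_{\Phi Y}$ under this isomorphism. The crude monadicity theorem then gives that $U_F$ is monadic.

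The hard part will be precisely this construction of the left adjoint on non-free objects, namely verifying that the sifted colimit of free ffg-Elgot algebras over $\C_\ffg/Y$ really represents $\C(Y,U_F{-})$. The key point that makes it go through is that $U_F$ \emph{creates} sifted colimits (\autoref{prop:siftedcolimits}), so that $\Phi Y$ exists and is computed ``on the level of $\C$'', which is exactly what the chain of natural isomorphisms above requires. The remaining two hypotheses of the monadicity theorem follow directly from results already established.
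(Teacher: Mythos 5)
Your proof is correct, and the construction of the left adjoint (extend $\Phi$ from $\C_\ffg$ to all of $\C$ by writing $Y$ as the sifted colimit of $\C_\ffg/Y$ and using \autoref{prop:siftedcolimits} to lift that colimit to $\ffgElgot F$) is exactly what the paper does. Where you genuinely diverge is in the choice of monadicity criterion. The paper invokes Beck's precise monadicity theorem and therefore has to verify by hand that $U_F$ creates coequalizers of $U_F$-split pairs: it defines the solution operation on the quotient by $e^\ast = c\o(s\aft e)^\ddag$ using the given splitting and then checks the solution property, Weak Functoriality, Compositionality, uniqueness, and the coequalizer property --- a page of diagram chases. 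You instead use the crude (reflexive) monadicity theorem, which lets you recycle \autoref{prop:siftedcolimits}: since reflexive coequalizers are sifted colimits and $\C$ is cocomplete, creation of sifted colimits already gives existence and preservation of reflexive coequalizers, and the only new verification is that $U_F$ reflects isomorphisms, which your three-line computation with Remark~\ref{rem:square_bullet_properties}(1) settles (together with the observation that the inverse of a bijective homomorphism in a variety is again a morphism of $\C$). Your route is shorter and makes better structural use of what has already been proved; the paper's route is self-contained modulo Beck's theorem and has the side benefit of exhibiting an explicit formula for the Elgot structure on a split quotient, which is of some independent interest. Both are valid; nothing is missing from your argument.
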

\proof
\begin{enumerate}
\item $U_F$ has a left adjoint. Indeed, for every ffg object $Y$ we
  have a free ffg-Elgot algebra $\Phi Y$ by \autoref{thm:free},
  which defines the corresponding functor
  \[ \Phi\colon \C_\ffg \to \ffgElgot F. \] We can extend it to a left
  adjoint of $U_F$ as follows. Given an object $Y$ of $\C$, express
  it as a sifted colimit $y_i\colon Y_i\to Y$ ($i\in I$) of ffg
  objects (see Section~\ref{sec:vars}). The image of that sifted
  diagram under $\Phi$ has a colimit $\colim_{i\in I} \Phi Y_i$ in the
  category $\ffgElgot F$ by Proposition \ref{prop:siftedcolimits}. It
  follows immediately that this colimit is a free ffg-Elgot algebra on
  $Y$.

\item By Beck's Theorem (see, e.g.~\cite[Theorem~4.4.4]{Borceux2}) it
  remains to prove that $U_F$ creates coequalizers of $U_F$-split
  pairs of morphisms. These are pairs
  $f,g\colon (A,a,\dag)\to (B,b,\ddag)$ of morphisms of ffg-Elgot
  algebras such that morphisms $c\colon B\to C$, $s\colon C\to B$ and
  $t\colon B\to A$ in $\C$ are given with $c\o f = c\o g$,
  $c\o s = \id_C$, $g\o t = \id_B$ and $s\o c = f\o t$.
  \[
    \xymatrix{
      A \ar@<3pt>[r]^-f \ar@<-3pt>[r]_-g
      &
      B \ar[r]^-c
      \ar@/^1.5pc/[l]^-t
      &
      C \ar@/^1pc/[l]^-s
    }
  \]
  Since $F$ is a finitary functor, the forgetful functor from $\alg F$
  to $\C$ is monadic, see \cite{barr}. Thus, by Beck's Theorem, there
  is a unique structure $\gamma\colon FC\to C$ such that $c$ is an
  $F$-algebra homomorphism from $(B,b)$ to $(C,\gamma)$; moreover, $c$
  is a coequalizer of $f$ and $g$ in $\alg F$. We need to show that
  there is a unique solution operator $\ast$ for the algebra
  $(C,\gamma)$ such that $(C,\gamma,\ast)$ is an ffg-Elgot algebra and
  $c$ is solution-preserving, and that $c$ is then a coequalizer of
  $f$ and $g$ in $\ffgElgot F$.

  Given an ffg-equation $e\colon X\to FX+C$, we define 
  \[
    e^\ast = (X \xrightarrow{(s\bullet e)^\ddag} B \xrightarrow{c} C).
  \]
  Then $c$ is solution-preserving:
  \begin{align*}
    (c\bullet e)^\ast &= c\o (s\bullet (c\bullet e))^\ddag & \text{def. $\ast$} \\
    &= c\o ((s\o c)\bullet e)^\ddag & \text{Remark \ref{rem:square_bullet_properties}(1)} \\
    &= c\o ((f\o t)\bullet e)^\ddag & \text{$s\o c = f\o t$} \\
    &= c \o (f\bullet(t\bullet e))^\ddag & \text{Remark \ref{rem:square_bullet_properties}(1)} \\
    &= c\o f \o (t\bullet e)^\dag & \text{$f$ solution-preserving} \\
    &= c\o g \o (t\bullet e)^\dag & \text{$c\o f = c\o g$} \\
    &= c\o (g\bullet (t\bullet e))^\ddag & \text{$g$ solution-preserving} \\
    &= c\o ((g\o t)\bullet e)^\ddag & \text{Remark \ref{rem:square_bullet_properties}(1)} \\
    &= c\o e^\ddag & \text{$g\o t = \id$} 
  \end{align*}
  We prove that $\ast$ satisfies the axioms of an ffg-Elgot algebra,
  and that it is the unique ffg-Elgot algebra structure on
  $(C,\gamma)$ for which $c$ is solution-preserving.

  \medskip\noindent
  (a) $e^\ast$ is a solution of $e$:
  \[
    \xymatrix@C+2em{
      X  \ar `u[r] `[rrr]^-{e^\ast} [rrr] \ar[rr]^{(s\bullet e)^\ddag} \ar[dd]_e \ar[dr]^{s\bullet e} & & B \ar[r]^c & C \\
      & FX+B \ar[r]_{F(s\bullet e)^\ddag+B}  & FB+B \ar[u]_{[b,B]} \ar[dr]^{Fc+c} & \\
      FX+C \ar[ur]_{FX+s} \ar[rrr]_{Fe^\ast + C} & & & FC+C \ar[uu]_{[\gamma,C]} 
    }
  \]
  All inner parts of this diagram commute; for the left-hand component of
  the right-hand part, use that $c$ is solution-preserving and thus a
  homomorphism of $F$-algebras by Lemma \ref{lem:solpres}.
  
  \medskip\noindent
  (b) \emph{Weak Functoriality}. Suppose that we have a coalgebra homomorphism
  \[
    \xymatrix{
      X \ar[r]^-e \ar[d]_m & FX+Z \ar[d]^{Fm+Z} \\
      Y \ar[r]_-{f} & FY+Z 
    }
  \]
  and a morphism $h\colon Z\to C$ where $X$, $Y$ and $Z$ are ffg objects. Then
  \begin{align*}
    (h\bullet e)^\ast &= c\o (s\bullet (h\bullet e))^\ddag & \text{def. $\ast$} \\
    &= c\o ((s\o h)\bullet e)^\ddag & \text{Remark \ref{rem:square_bullet_properties}(1)} \\
    &= c\o ((s\o h)\bullet f)^\ddag \o m & \text{$\ddag$ weakly functorial} \\
    &= c\o (s\bullet (h\bullet f))^\ddag \o m & \text{Remark \ref{rem:square_bullet_properties}(1)} \\
    &= (h\bullet f)^\ast\o m & \text{def. $\ast$}
  \end{align*}

  \medskip\noindent
  (c) \emph{Compositionality}. Given ffg-equations
  $e\colon X\to FX+Y$ and $f\colon Y\to FY+C$ we compute
  \begin{align*}
    (f^\ast \bullet e)^\ast &= ((c\o (s\bullet f)^\ddag)\bullet e)^\ast & \text{def. $\ast$} \\
    &= (c\bullet ((s\bullet f)^\ddag \bullet e))^\ast & \text{Remark \ref{rem:square_bullet_properties}(1)} \\
    &= c\o ((s\bullet f)^\ddag\bullet e)^\ddag & \text{$c$ solution-preserving} \\
    &= c\o (e \sq\, (s\bullet f))^\ddag \o \inl & \text{$\ddag$ compositional} \\
    &= c\o (s\bullet (e\sq f))^\ddag \o \inl & \text{Remark \ref{rem:square_bullet_properties}(2)} \\
    &= (e\sq f)^\ast \o \inl & \text{def. $\ast$} 
  \end{align*}
  
  \medskip\noindent
  (d)~We show the uniqueness of $\ast$. Suppose that
  $+$ is another solution operation for $(C,\gamma)$ such that $c$ is
  solution-preserving.  Then
  \begin{align*}
    e^\ast &= c\o (s\bullet e)^\ddag & \text{def. $\ast$} \\
    &= (c\bullet (s\bullet e))^+ & \text{$c$ solution-preserving} \\
    &= ((c\o s)\bullet e)^+ & \text{Remark \ref{rem:square_bullet_properties}(1)} \\
    &= e^+ & \text{$c\o s = \id$}
  \end{align*}
  
  \medskip\noindent (e)~We finally show that $c$ is a coequalizer of $f$ and
  $g$. Let $m\colon (B,b,\ddag)\to (D,d,+)$ be a
  solution-preserving morphism with $m\o f = m\o g$. Since $\C$ is an
  (absolute) coequalizer in $\C$, there exists a unique morphism
  $h\colon C\to D$ with $h\o c = m$. We only need to show that it is
  solution-preserving. Indeed, given an ffg-equation $e\colon X\to
  FX+C$, we compute:
  \begin{align*}
    h\o e^\ast &= h\o c\o (s\bullet e)^\ddag & \text{def. $\ast$} \\
    &= m\o (s\bullet e)^\ddag & \text{$h\o c = m$} \\
    &= (m\bullet (s\bullet e))^+ & \text{$m$ solution-preserving} \\
    &= ((m\o s)\bullet e)^+ & \text{Remark \ref{rem:square_bullet_properties}(1)} \\
    &= ((h\o c\o s)\bullet e)^+ & \text{$h\o c=m$} \\
    &= (h\o e)^+ & \text{$c\o s=\id$} \tag*{\qed}
  \end{align*}
\end{enumerate}
\doendproof

\begin{corollary}
  The forgetful functor $W_F\colon \ffgElgot F \to \alg F$ is monadic.
\end{corollary}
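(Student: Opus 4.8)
The plan is to deduce the corollary from \autoref{thm:mon} by verifying the hypotheses of Beck's monadicity theorem for $W_F\colon \ffgElgot F\to \alg F$. The starting observation is that $U_F$ factors through $W_F$: one has $U_F = V\circ W_F$, where $V\colon \alg F\to \C$ is the forgetful functor. This is well-defined because objects of $\ffgElgot F$ are by definition $F$-algebras and, by \autoref{lem:solpres}, morphisms of ffg-Elgot algebras are $F$-algebra homomorphisms. Since $F$ is finitary, $V$ is monadic (\cite{barr}), and $U_F$ is monadic by \autoref{thm:mon}; both are therefore conservative and have left adjoints.

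First I would check that $W_F$ is conservative and has a left adjoint. Conservativity is immediate: if $W_F h$ is an isomorphism, then so is $U_F h = V(W_F h)$, hence so is $h$ because $U_F$, being monadic, reflects isomorphisms. For the left adjoint I would invoke Dubuc's adjoint triangle theorem: $V$ is monadic, the composite $U_F = V\circ W_F$ has a left adjoint by \autoref{thm:mon}, and $\ffgElgot F$ has coequalizers of reflexive pairs. The last point holds since, by \autoref{prop:siftedcolimits}, $U_F$ creates sifted colimits, and these include reflexive coequalizers by \autoref{R:sifted}, while $\C$, being a variety, possesses all of them. Hence $W_F$ admits a left adjoint. (Alternatively, the free ffg-Elgot algebra on an $F$-algebra $(A,a)$ may be built directly as a reflexive coequalizer in $\ffgElgot F$ of the pair induced by $a$ and by the $F$-algebra structure of the free ffg-Elgot algebra on the underlying $\C$-object of $A$, i.e.\ via the standard construction of free algebras along a monad morphism.)

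It remains to show that $W_F$ creates coequalizers of $W_F$-split pairs. Given $f,g\colon (A,a,\dag)\to (B,b,\ddag)$ in $\ffgElgot F$ such that $W_F f$ and $W_F g$ admit a split coequalizer $c_0\colon (B,b)\to (C,\gamma)$ in $\alg F$, applying $V$ and using that split coequalizers are absolute shows that $f,g$ form a $U_F$-split pair. The monadicity of $U_F$ then supplies a coequalizer of $f,g$ in $\ffgElgot F$; more precisely, I would reuse the explicit construction in the proof of \autoref{thm:mon}, which on a $U_F$-split pair produces exactly the $F$-algebra structure on the coequalizer object that turns the given splitting into a coequalizer in $\alg F$. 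Applied here, that structure on $C$ must coincide with $\gamma$ by uniqueness (monadicity of $\alg F$ over $\C$), so the created coequalizer $c\colon (B,b,\ddag)\to (C,\gamma,\ast)$ satisfies $W_F c = c_0$ on the nose; since $c_0$ is already a coequalizer in $\alg F$, $W_F$ maps $c$ to a coequalizer, and uniqueness of the lift follows from the uniqueness clause of the $U_F$-creation together with conservativity of $V$. Beck's theorem (see, e.g.,~\cite[Theorem~4.4.4]{Borceux2}) then gives that $W_F$ is monadic. The main obstacle is precisely this last step — matching the coequalizer created along $U_F$ with the prescribed split coequalizer in $\alg F$ exactly, rather than merely up to isomorphism; this is why appealing to the explicit construction in the proof of \autoref{thm:mon} (rather than only its abstract creation property) is the convenient route, and everything else is formal.
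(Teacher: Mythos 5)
Your argument is correct and follows essentially the same route as the paper: the paper also works with the commutative triangle $U_F = V\circ W_F$ of forgetful functors, notes that $U_F$ and $V$ are monadic and that $\ffgElgot F$ has reflexive coequalizers (via \autoref{prop:siftedcolimits}), and then simply cites the general adjoint-triangle/monadicity result from Borceux rather than proving it. You have unfolded that citation into an explicit verification of Beck's conditions (left adjoint via Dubuc, creation of split coequalizers by transport along $V$), which is a faithful expansion of the same proof.
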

\begin{proof}
  Indeed, we have a commutative triangle
  \[
    \xymatrix@C-1.5pc@R-.5pc{
      \ffgElgot F \ar[rr]^-{W_F} \ar[rd]_-{U_F}&& \alg F
      \ar[ld]^-{V_F} \\
      & \C
    }
  \]
  of forgetful functors, where $U_F$ and $V_F$ are monadic. By
  \autoref{prop:siftedcolimits} we know that $\ffgElgot F$ has
  reflexive coequalizers. Thus by~\cite[Corollary~4.5.7 and
  Exercise~4.8.6]{Borceux2}, $W_F$ is monadic, too.
\end{proof}

\section{Conclusions and Further Work}
\label{S:con}

For a functor $F$ on a variety $\C$ preserving sifted colimits, the concept
of an Elgot algebra~\cite{amv_elgot} has a natural weakening obtained by
working with iterative equations having ffg objects of variables. We
call such algebras ffg-Elgot algebras. We have proved that the locally
ffg fixed point $\phi F$, constructed by taking the
colimit of all $F$-coalgebras with an ffg carrier, is the initial
ffg-Elgot algebra for $F$. Furthermore, all free
ffg-Elgot algebras exist, and the colimit of all
ffg-coalgebras for $F(-) + Y$ yields a free ffg-Elgot algebra
on $Y$, whenever $Y$ is a free object of $\C$ on some (possibly
infinite) set. Finally, we have proved that the forgetful functor
from the category of ffg-Elgot algebras to $\C$ is monadic.

An open problem is giving a coalgebraic construction of free ffg-Elgot
algebras over arbitrary objects $Y$, similarly to \autoref{C:free},
which only works for \emph{free} object $Y$,
cf.~\autoref{thm:free}. In addition, the study of the properties of
the ensuing free ffg-Elgot algebra monad is also left for the
future. The monad of ordinary free Elgot algebras
(cf.~Section~\ref{sec:elgot}) was proved~\cite{amv_elgot} to be the
free Elgot monad on the given endofunctor $F$. It would be interesting
to see whether the above monad of free ffg-Elgot algebras is
characterized by a similar universal property.

Finally, in the current setting we have the following forgetful functors:
\[
  \ffgElgot F \to \alg F \to \C \to \Set.
\]
Each of those functors has a left-adjoint and is in fact
monadic, and we have shown that the composite of the first two is
monadic, too. We leave the question whether the
composite of all three functors is monadic for further work.

%
%
\bibliographystyle{splncs03}
\bibliography{refs}

\begin{thebibliography}{10}
\providecommand{\url}[1]{\texttt{#1}}
\providecommand{\urlprefix}{URL }

\bibitem{aamv}
Aczel, P., Ad\'amek, J., Milius, S., Velebil, J.: Infinite trees and completely
  iterative theories: A coalgebraic view. Theoret.~Comput.~Sci.  300,  1--45
  (2003), {\bfseries Fundamental study}

\bibitem{aav}
Aczel, P., Ad\'amek, J., Velebil, J.: A coalgebraic view of infinite trees and
  iteration. In: Proc.~Coalgebraic Methods in Computer Science (CMCS'01).
  Electron.~Notes Theor.~Comput.~Sci., vol.~44, pp. 1--26 (2001)

\bibitem{amm18}
Ad\'amek, J., Milius, S., Moss, L.S.: Fixed points of functors.
  J.~Log.~Algebr.~Methods Program.  95,  41--81 (2018),
  \\\mbox{\url{https://doi.org/10.1016/j.jlamp.2017.11.003}}

\bibitem{amsw19_1}
Ad\'amek, J., Milius, S., Sousa, L., Wi\ss\/mann, T.: On finitary functors.
  Theory Appl.~Categ.  34,  1134--1164 (2019)

\bibitem{amu18_cmcs}
Ad\'amek, J., Milius, S., Urbat, H.: On algebras with effectful iteration. In:
  C{\^i}rstea, C. (ed.) Proc.~Coalgebraic Methods in Computer Science (CMCS).
  Lecture Notes Comput.~Sci., vol. 11202, pp. 144--166. Springer (2018)

\bibitem{amv_elgot}
Ad\'amek, J., Milius, S., Velebil, J.: Elgot algebras.
  Log.~Methods~Comput.~Sci.  2(5:4),  31 pp. (2006)

\bibitem{amv_atwork}
Ad\'amek, J., Milius, S., Velebil, J.: Iterative algebras at work.
  Math.~Structures Comput.~Sci.  16(6),  1085--1131 (2006)

\bibitem{amv_horps_full}
Ad\'amek, J., Milius, S., Velebil, J.: Semantics of higher-order recursion
  schemes. Log.~Methods~Comput.~Sci.  7(1:15),  43 pp. (2011)

\bibitem{ar}
Ad\'{a}mek, J., Rosick\'y, J.: Locally presentable and accessible categories.
  Cambridge University Press (1994)

\bibitem{AdamekEA10}
Ad\'{a}mek, J., Rosick\'y, J., Vitale, E.: What are sifted colimits? Theory
  Appl.~Categ.  23,  251--260 (2010)

\bibitem{arv}
Ad\'{a}mek, J., Rosick\'y, J., Vitale, E.: Algebraic Theories. Cambridge
  University Press (2011)

\bibitem{applegate}
Applegate, H.: Acyclic models and resolvent functors. Ph.D. thesis, Columbia
  University (1965)

\bibitem{barr}
Barr, M.: Coequalizers and free triples. Math.~Z.  116,  307--322 (1970)

\bibitem{br88}
Berstel, J., Reutenauer, C.: Rational Series and Their Languages.
  Springer-Verlag (1988)

\bibitem{BE93}
Bloom, S.L., \'Esik, Z.: Iteration Theories: the equational logic of iterative
  processes. EATCS Monographs on Theoretical Computer Science, Springer (1993)

\bibitem{bms13}
Bonsangue, M.M., Milius, S., Silva, A.: Sound and complete axiomatizations of
  coalgebraic language equivalence. ACM Trans.~Comput.~Log.  14(1:7),  52 pp.
  (2013)

\bibitem{Borceux2}
Borceux, F.: Handbook of Categorical Algebra, vol.~2. Cambridge University
  Press (1994)

\bibitem{courcelle}
Courcelle, B.: Fundamental properties of infinite trees. Theoret.~Comput.~Sci.
  25,  95--169 (1983)

\bibitem{DaveyD85}
Davey, B.A., Davis, G.: Tensor products and entropic varieties. Algebra
  Universalis  21,  68--88 (1985)

\bibitem{Doberkat06}
Doberkat, E.: Eilenberg-moore algebras for stochastic relations. Inf. Comput.
  204(12),  1756--1781 (2006), erratum and addendum published in {\em
  Inf.~Comput.} 206(12), 1476--1484 (2008)

\bibitem{DrosteEA09}
Droste, M., Kuich, W., Vogler, H. (eds.): Handbook of weighted automata.
  Monographs in Theoretical Computer Science, Springer (2009)

\bibitem{Elgot75}
Elgot, C.C.: Monadic computation and iterative algebraic theories. In: Rose,
  H.E., Sheperdson, J.C. (eds.) Logic Colloquium '73. vol.~80, pp. 175--230.
  North-Holland Publishers, Amsterdam (1975)

\bibitem{em_2010}
\'Esik, Z., Maletti, A.: Simulation vs.\ equivalence. In: Proc.\ 6th Int.\
  Conf.\ Foundations of Computer Science. pp. 119--122. CSREA Press (2010)

\bibitem{em_2011}
\'Esik, Z., Maletti, A.: Simulations of weighted tree automata. In:
  Proc.~CIAA'11. Lecture Notes Comput.~Sci., vol. 6482, pp. 321--330. Springer
  (2011)

\bibitem{fpt}
Fiore, M., Plotkin, G.D., Turi, D.: Abstract syntax and variable binding. In:
  Proc.~LICS'99. pp. 193--202. IEEE Press (1999)

\bibitem{Fliess1974}
Fliess, M.: Sur divers produits de s\'eries formelles. Bulletin de la
  Soci\'et\'e Math\'ematique de France  102,  181--191 (1974)

\bibitem{freyd68}
Freyd, P.: R\'edei's finiteness theorem for commutative semigroups.
  Proc.~Amer.~Math.~Soc.  19(4),  p.~1003 (1968)

\bibitem{glmp_cmcs}
Ghani, N., L\"uth, C., Marchi, F.D., Power, A.J.: Algebras, coalgebras, monads
  and comonads. In: Proc.~Coalgebraic Methods in Computer Science (CMCS'01).
  Electron.~Notes Theor.~Comput.~Sci., vol.~44, pp. 128--145 (2001)

\bibitem{glmp}
Ghani, N., L\"uth, C., Marchi, F.D., Power, A.J.: Dualizing initial algebras.
  Math.~Structures Comput.~Sci.  13(2),  349--370 (2003)

\bibitem{ginali}
Ginali, S.: Regular trees and the free iterative theory. J.~Comput.~System Sci.
   18,  228--242 (1979)

\bibitem{johnstone_lift}
Johnstone, P.T.: Adjoint lifting theorems for categories of algebras.
  Bull.~London Math.~Soc.  7,  294--297 (1975)

\bibitem{Johnstone77}
Johnstone, P.T.: Topos Theory. Academic Press, London (1977)

\bibitem{KurzEA13}
Kurz, A., Petrisan, D., Severi, P., de~Vries, F.J.: Nominal coalgebraic data
  types with applications to lambda calculus. Log.~Methods Comput.~Sci.
  9(4:20),  51 pp. (2013)

\bibitem{lambek}
Lambek, J.: A fixpoint theorem for complete categories. Math.~Z.  103,
  151--161 (1968)

\bibitem{Linton66}
Linton, F.E.J.: Autonomous equational categories. J.~Math.~Mech.  15,  637--642
  (1966)

\bibitem{m_cia}
Milius, S.: Completely iterative algebras and completely iterative monads.
  Inform.~and Comput.  196,  1--41 (2005)

\bibitem{m_linexp}
Milius, S.: A sound and complete calculus for finite stream circuits. In:
  Proc.~LICS'10. pp. 449--458. IEEE Computer Society (2010)

\bibitem{milius18}
Milius, S.: Proper functors and fixed points for finite behaviour.
  Log.~Methods.~Comput.~Sci.  14(3:22),  32 pp. (2018)

\bibitem{mpw16}
Milius, S., Pattinson, D., Wi\ss\/mann, T.: A new foundation for finitary
  corecursion: The locally finite fixpoint and its properties. In:
  Proc.~FoSSaCS'16. Lecture Notes Comput.~Sci. (ARCoSS), vol. 9634, pp.
  107--125. Springer (2016)

\bibitem{mpw20}
Milius, S., Pattinson, D., Wi\ss\/mann, T.: A new foundation for finitary
  corecursion and iterative algebras. Inform.~and Comput.  271 (2020), article
  104456

\bibitem{msw16}
Milius, S., Schr\"oder, L., Wi\ss\/mann, T.: Regular behaviours with names: On
  rational fixpoints of endofunctors on nominal sets. Appl.~Categ.~Structures
  24(5),  663--701 (2016)

\bibitem{mw15}
Milius, S., Wi\ss\/mann, T.: Finitary corecursion for the infinitary lambda
  calculus. In: Proc.~CALCO'15. LIPIcs, vol.~35, pp. 336--351. Schloss Dagstuhl
  (2015)

\bibitem{Moss01}
Moss, L.S.: Parametric corecursion. Theoret.~Comput.~Sci.  260(1--2),  139--163
  (2001)

\bibitem{nelson}
Nelson, E.: Iterative algebras. Theoret.~Comput.~Sci.  25,  67--94 (1983)

\bibitem{TuriP97}
Plotkin, G.D., Turi, D.: Towards a mathematical operational semantics. In:
  Proc.~Logic in Computer Science (LICS'97). pp. 280--291 (1997)

\bibitem{redei}
R\'edei, L.: The Theory of Finitely Generated Commutative Semigroups. Pergamon,
  Oxford-Edinburgh-New York (1965)

\bibitem{rutten_rat}
Rutten, J.J.M.M.: Rational streams coalgebraically. Log.~Methods~Comput.~Sci.
  4(3:9),  22 pp. (2008)

\bibitem{schuetzenberger}
Sch\"utzenberger, M.P.: On the definition of a family of automata. Inform.~and
  Control  4(2--3),  275--270 (1961)

\bibitem{sbbr13}
Silva, A., Bonchi, F., Bonsangue, M.M., Rutten, J.J.M.M.: Generalizing
  determinization from automata to coalgebras. Log.~Methods Comput.~Sci
  9(1:9),  27 pp. (2013)

\bibitem{SilvaS11}
Silva, A., Sokolova, A.: Sound and complete axiomatization of trace semantics
  for probabilistic systems. Electr. Notes Theor. Comput. Sci.  276,  291--311
  (2011), \url{https://doi.org/10.1016/j.entcs.2011.09.027}

\bibitem{SokolovaW15}
Sokolova, A., Woracek, H.: Congruences of convex algebras. J.~Pure
  Appl.~Algebra  219(8),  3110--3148 (2015)

\bibitem{SokolovaW18}
Sokolova, A., Woracek, H.: Proper semirings and proper convex functors. In:
  Baier, C., Lago, U.D. (eds.) Proc.~FoSSaCS 2018. Lecture Notes Comput.~Sci.,
  vol. 10803, pp. 331--347. Springer (2018)

\bibitem{tiuryn80}
Tiuryn, J.: Unique fixed points vs.~least fixed points. Theoret.~Comput.~Sci.
  12,  229--254 (1980)

\bibitem{Urbat17}
Urbat, H.: Finite behaviours and finitary corecursion. In: Proc.~CALCO'17.
  LIPIcs, vol.~72, pp. 24:1--24:15. Schloss Dagstuhl (2017)

\bibitem{jcssContextFree}
Winter, J., Bonsangue, M.M., Rutten, J.J.: Context-free coalgebras.
  J.~Comput.~System Sci.  81(5),  911 -- 939 (2015)

\end{thebibliography}

\iffull
\clearpage
\appendix
\section{Appendix}

\section*{Details on the Definition of $\phi F$ (see \autoref{R:app}\ref{R:app:2})}

Recall~\cite{arv} that an object $X$ of $\C$ whose hom-functor
$\C(X,-)$ preserves sifted colimits is called \emph{perfectly
  presentable}, and that these objects are precisely the split
quotients of ffg objects. Let $\coapp F$ denote the full subcategory
of coalgebras carried by perfectly presentable objects. We show that
$\phi F$ can be defined as the colimit of all such $F$-coalgebras, in
symbols:
\[
  \phi F = \colim(\coapp F \subto \coa F). 
\]
To this end, it suffices to prove that the inclusion functor 
\[
  I\colon \coafr F \subto \coapp F
\]
is cofinal. This means that
\begin{enumerate}
\item for every coalgebra in $\coapp F$ there is a homomorphism into
  some coalgebra in $\coafr F$, and
\item for every span $(Y,d)\xleftarrow{f} (X,c) \xra{g} (Z,e)$ in
  the category $\coapp F$ with codomains in $\coafr F$, there exists a zig-zag of
  morphisms in the slice category $(X,c)/\coafr F$ connecting $f$ and $g$.
\end{enumerate}

\paragraph{Proof of (1).} Given an $F$-coalgebra $c\colon X \to FX$
with $X$ perfectly presentable, we know that $X$ is a split quotient
of some ffg object $W$ of $\C$, i.e.~we have $e\colon W \epito X$ and
$m\colon X \monoto W$ with $e\o m = \id_X$ in $\C$. Put
\[
  w:= (W \xrightarrow{e} X \xrightarrow{c} FX \xrightarrow{Fm} FW).
\]
Then $(W,w)$ is an ffg-coalgebra such that $m\colon (X,c) \monoto (W,w)$ is a coalgebra
homomorphisms as desired: 
\[
  w \o m = Fm \o c \o e \o m = Fm \o c.
\]

\paragraph{Proof of (2).} Now suppose we have two coalgebra
homomorphisms $f\colon (X,c) \to (Y,d)$ and $g\colon (X,c) \to (Z,e)$
where $X$ is perfectly presentable and $Y$ and $Z$ are ffg objects. As in the
proof of~(1), choose $e$ and $m$ and form the ffg-coalgebra
$(W,w)$. Now observe that $e\colon (W,w)\epito (X,c)$ is a coalgebra
homomorphism:
\[
  Fe \cdot w = Fe \cdot Fm \cdot c \cdot e = c \cdot e.
\]
Due to $e\o m = \id_X$, we then have the
following zig-zag relating $f$ and $g$:
\[
  \xymatrix{
    & (X,c)
    \ar[ld]_f \ar[rd]^g \ar[d]^m \\
    (Y,d) 
    & 
    (W,w) \ar[l]_{f\o e} \ar[r]^{g\o e}
    & 
    (Z,e)
  }
\]


\fi

\end{document}
